\tikzset{quantum/.style={decorate, decoration=snake}}
\renewcommand{\epsilon}{\varepsilon}
\renewcommand{\phi}{\varphi}
\newtheorem{thm}{Theorem}[section]
\newtheorem*{thm*}{Theorem}
\newtheorem{lem}[thm]{Lemma}
\newtheorem{cor}[thm]{Corollary}
\newtheorem{defi}[thm]{Definition}
\newtheorem{prop}[thm]{Proposition}
\newtheorem{remark}[thm]{Remark}
\newcommand{\PVBB}{$PV^f_{\mathrm{meas}}$~}
\newcommand{\PVroute}{$\widetilde{PV}^f_{\mathrm{route}}$~}
\newcommand{\PVroutemod}{${PV}^f_{\mathrm{route}}$~}
\begin{document}
\title[Single-qubit position verification protocol]{A single-qubit position verification protocol that is secure against multi-qubit attacks$^1$}

\author{Andreas Bluhm}
\email{bluhm@math.ku.dk}
\address{QMATH, Department of Mathematical Sciences, University of Copenhagen, Universitetsparken 5, 2100 Copenhagen, Denmark}

\author{Matthias Christandl}
\email{christandl@math.ku.dk}
\address{QMATH, Department of Mathematical Sciences, University of Copenhagen, Universitetsparken 5, 2100 Copenhagen, Denmark}

\author{Florian Speelman}
\email{f.speelman@uva.nl}
\address{QuSoft \& Informatics Institute, University of Amsterdam, Science Park 904, Amsterdam, the Netherlands}

\date{\today}

\maketitle

\textbf{The position of a device or agent is an important security credential in today’s society, both online and in the real world. Unless in direct proximity, however, the secure verification of a position is impossible without further assumptions. This is true classically \cite{Chandran2009A}, but also in any future quantum-equipped communications infrastructure \cite{buhrman2014positionA}. We show in this work that minimal quantum resources, in the form of a single qubit, combined with classical communication are sufficient to thwart quantum adversaries that pretend to be at a specific position and have the ability to coordinate their action with entanglement. More precisely, we show  that the adversaries using an increasing amount of entanglement can be combatted solely by increasing the number of classical bits used in the protocol. The presented protocols are noise-robust and within reach of current quantum technology. }

\footnotetext[1]{This version of the article has been accepted for publication, after peer review and is subject to Springer Nature’s AM terms of use, but is not the Version of Record and does not reflect post-acceptance improvements, or any corrections. The Version of Record is available online at: \url{https://doi.org/10.1038/s41567-022-01577-0}}

The difficulty in achieving the verification of a position is best appreciated by considering certain secure-looking protocols and then understanding how they can be broken. For simplicity of the presentation we will consider the verification of the position of an untrusted agent being at midpoint between two verifiers. A protocol for position verification consists of the verifiers each sending messages to the agent who is asked to send responses back. The verification is successful if the responses satisfy certain conditions and if the timing of the signals is right (say in accordance with the speed of light) (see Figure \ref{fig:line}).

\begin{figure}[t]
    \centering
    \begin{tikzpicture}[node distance=6cm, auto]
        \path [draw] (0,-0.5)  -- (0, 0.5) coordinate[label={above:$V_0$}](A) -- (0,0) -- (4, 0) -- (4,0.5)coordinate[label={above:$P$}](B) -- (4,-0.5)coordinate[label={below:$z$}](C) -- (4,0) -- (8,0) -- (8,-0.5) -- (8,0.5)coordinate[label={above:$V_1$}](D);
        \path [draw, dashed, red] (2,0.5) coordinate[label={above:Alice}](E) -- (2,-0.5);
        \path [draw, dashed, red] (6,0.5) coordinate[label={above:Bob}](F) -- (6,-0.5);
        
        \node [below=2cm of A] (G) {$V_0$};
        \node [below=1cm of C] (H) {\phantom{X}};
        \node [below=2cm of D] (I) {$V_1$};
        \node [below=2cm of E] (J) {\phantom{X}};
         \node [below=2cm of F] (K) {\phantom{X}};
         \node [below=3.5cm of H] (L) {};
         \node [below=1.5cm of J, red] (M) {Alice};
          \node [below=1.5cm of K, red] (N) {Bob};
          \node [below=3.5cm of M, red] (O) {Alice};
          \node [below=3.5cm of N, red] (P) {Bob};
          \node [below=7.5cm of G] (Q) {$V_0$};
          \node [below=7.5cm of I] (R) {$V_1$};
          \node [left=3cm of G] (S) {};
          \node [left=3cm of Q] (T) {};
          \node [below=1cm of Q] (U) {};
          \node [below=1cm of R] (V) {};
          
          \draw [-{Triangle[scale=1.5]}] (G) -- (M) node[midway] (x) {$x \in \{0,1\}^n$};
          \draw [-{Triangle[scale=1.5]}] (M) -- (P) node[near start, below] (x1) {$x$\phantom{aaa}};
          \draw [-{Triangle[scale=1.5]}] (P) -- (R) node[midway] (x2) {$f(x,y)$} ;
          \draw [-{Triangle[scale=1.5]}] (I) -- (N) node[midway, above] (y) {\phantom{aaaaaaaaaaaaaaaaa}$y \in \{0,1\}^n$};
          \draw [-{Triangle[scale=1.5]}] (N) -- (O) node[near start, below] (y1) {\phantom{aaa}$y$}; 
          \draw [-{Triangle[scale=1.5]}] (O) -- (Q) node[midway, above] (y2) {$f(x,y)$\phantom{aaaaaa}} ;
          \draw [-{Triangle[scale=1.5]}] (M) -- (O) node[midway, left] (x3) {$x$};
          \draw [-{Triangle[scale=1.5]}] (N) -- (P) node[midway, right] (y3) {$y$};
          \draw [-{Triangle[scale=1.5]}] (S) -- (T) node[midway, right] (t) {time};
          \draw [-{Triangle[scale=1.5]}] (U) -- (V) node[midway, below] (t) {position};
    \end{tikzpicture}
    \caption{\textbf{Setup for PV in one spatial dimension (above) and classical attack (below).} The honest agent $P$ is at position $z$, whereas the verifiers $V_0$, $V_1$ are to her left and right. For an attack, $P$ is replaced by the attackers Alice and Bob, which are not at $z$, but in between $z$ and the verifiers. Upon receiving $x \in \{0,1\}^n$, Alice copies the string and sends a copy on to Bob while Bob does the same with $y \in \{0,1\}^n$. Both attackers compute the function $f$ and send the result back to the closest verifier. From the verifiers' point of view, they are indistinguishable from an honest prover $P$ at $z$.}
    \label{fig:line}
\end{figure}
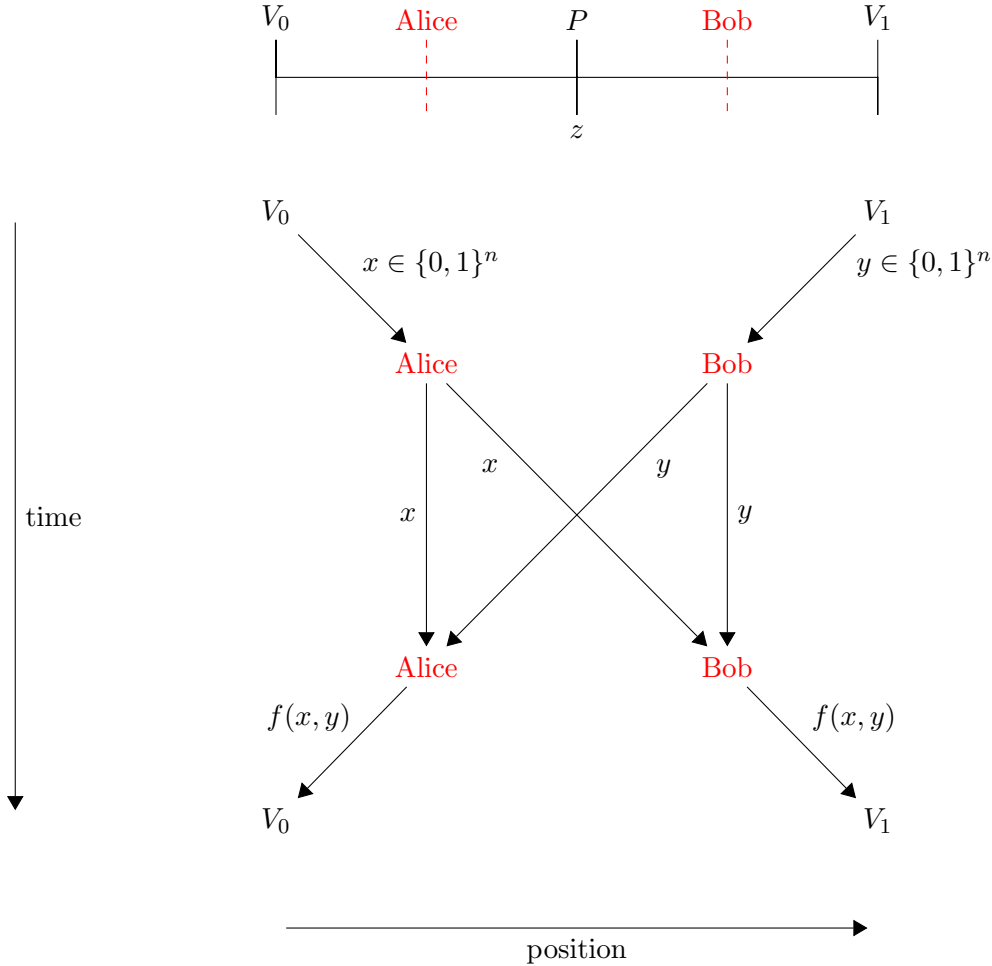

A first attempt for a secure protocol could consist of a Boolean function $f$ taking the message $x$ from verifier 0 and $y$ (both  $n$-bit strings) from verifier 1 as input and sending the bit $f(x, y)$ back to the verifiers. In order for the agent to return the correct answer, clearly (for most functions $f$) both $x$ and $y$ are needed, but if the agent was not at midpoint but, say, closer to verifier 0, the agent could never both receive $y$ and send the answer back to verifier 1 in time. Indeed, breaking the protocols entails not one attacking agent, but two, one of which is closer to verifier 0 and one which is closer to verifier 1. Customarily called Alice and Bob, the attacking agents both intercept the input from the verifier they are closest to. Each keeps a copy and forwards another copy of the input to the other partner in crime. When they hold both inputs in hand, they compute the function and return the function value just in time to their respective verifiers (see Figure \ref{fig:line}). 

This simple attack is indeed the basis of why position verification is not possible in the classical world. Note, however, that the attack directly uses the copying of information. This opens up the possibility of devising protocols based on the exchange of quantum information instead, whose copying is more restricted due to the no-cloning theorem \cite{Kent2011A, Malaney2010A}. As Alice and Bob can agree on an attack strategy prior to the start of the protocol, however, they can also distribute entangled particles in order to later coordinate their action. Still, has the balance now tipped and position verification become possible? The plain answer is no \cite{buhrman2014positionA}, as Alice and Bob can immediately upon receipt of the quantum particles engage in an elaborate scheme of back and forth teleportation (with only a single round of crossing classical communication), known as instantaneous non-local computation \cite{Vaidman2003A}. In a sense, this means `game over' for position verification --- if only the back and forth teleportation was not so expensive (doubly exponentially many EPR pairs in terms of the size of the protocol). This bound was brought down to singly exponential by use of port-based teleportation instead of standard quantum teleportation \cite{Beigi2011A}. Note that carrying out such attacks is still prohibitively expensive for the attackers. Therefore, such attacks could be seen as unrealistic, forcing us to ask again whether position verification is after all viable in the quantum world.

We give a partial answer to this question by showing that there are protocols that enhance the above classical protocol by a single qubit and that withstand attacks involving roughly $n$ qubits. A specific efficient protocol can withstand attacks of $\log n$ qubits. Thereby, we obtain security of position verification where the ratio of quantum resources required for the attack and of the honest agent is unbounded. 

This is a qualitative improvement over prior work, which did show some level of security for an $n$-qubit protocol inspired by the BB84 protocol \cite{buhrman2014positionA, Beigi2011A, tomamichel2013monogamyA, Ribeiro2015A}, but for which an attacker only needs one EPR pair per qubit involved.
Other previous works proposed new protocols \cite{Lau2011A, Chakraborty2015A, Malaney2016A, Das2017A, Gonzales2019A}, sometimes with the same scaling as BB84 protocol, and sometimes without security proofs showing how efficient the attacks can be; explicit efficient attacks can be constructed for several of these \cite{Buhrman2013A, Speelman2016A, Gonzales2019A, Olivo2020A}.
Complementary to our results are works that study other security models \cite{Kent2010A, Gao2013A, Unruh2014A}, and that introduce techniques to increase the robustness to photon loss \cite{Qi2015A, Lim2016A, Allerstorfer2021A}.
For the security analysis of a different protocol, see the recent independent work by Junge et al.\ \cite{Junge2021A}; we compare our results in Section \ref{sec:attack-model} of the Supplementary Information. The routing protocol was introduced by Kent et al.\ \cite{Kent2011A} and studied further by Buhrman et al.\ \cite{Buhrman2013A}. We build on the proof strategy of the latter work.

We consider in this work two closely related protocols, which we will dub the `routing' and `measuring' protocols, which are direct enhancements of the classical protocols explained. The protocols are thus specified by a Boolean function $f$ on $2n$ bits. In addition to the verifiers choosing random inputs $x$ and $y$ respectively, in both protocols verifier 0 will prepare a qubit chosen randomly from one of the BB84 states: $\ket{0}, \ket{1}, \ket{+}, \ket{-}$ and send it to the agent along with $x$. This could for instance be a single polarized photon sent in free space. In the routing protocol, the agent is asked to return the qubit unchanged to the verifier with number $f(x, y)$. Concretely, if $f(x, y)=1$, the verifier could let the photon pass to verifier 1 and if $f(x, y)=0$ use a mirror to reflect the photon back to verifier 0. The verifier could then measure the qubit and check whether the measurement result is consistent with the preparation. The protocol is illustrated in Figure \ref{fig:protocol-route}. In the `measuring' protocol, instead of routing the qubit, the agent is asked to measure in the $\ket{0}, \ket{1}$ basis in case $f(x, y)=0$ and in the $\ket{+}, \ket{-}$ basis if $f(x,y)=1$ and to return the measurement result to both verifiers. The protocol is illustrated in Figure \ref{fig:protocol-measure}.

\begin{figure}[b]
    \centering
    \begin{tikzpicture}[node distance=3cm, auto]
    \node (A) {$V_0$};
    \node [left=1cm of A] {};
    \node [right=of A] (B) {};
    \node [right=of B] (C) {$V_1$};
    \node [right=1cm of C] {};
    \node [below=of A] (D) {};
    \node [below=of B] (E) {$P$};
    \node [below=of C] (F) {};
    \node [below=of D] (G) {$V_0$};
    \node [below=of E] (H) {};
    \node [below=of F] (I) {$V_1$};
    \node [left= 6cm of E] (J) {};
    \node [below= 3cm of J] (K) {};
    \node [above= 3cm of J] (L) {};
    \node [below= 1cm of G] (X) {};
    \node [below= 1cm of I] (Y) {};

    \draw [-{Triangle[scale=1.5]}, transform canvas={xshift=- 2pt, yshift = -2 pt}, quantum] (A) -- (E) ;
    \draw [-{Triangle[scale=1.5]}, transform canvas={xshift=+ 2pt, yshift = +2 pt}] (A) -- (E) node[midway] (x) {$x \in \{0,1\}^n$};
    \draw [-{Triangle[scale=1.5]}] (C) -- (E);
    \draw [-{Triangle[scale=1.5]}, quantum] (E) -- (I) node[midway] (q) {$Q$ if $f(x,y) = 1$};
    \draw [-{Triangle[scale=1.5]},, quantum] (E) -- (G);

    \draw [-{Triangle[scale=1.5]}] (L) -- (K) node[midway] {time};
    \draw [-{Triangle[scale=1.5]}] (X) -- (Y) node[midway] {position};

    \node[left=1cm of x, transform canvas={xshift=+ 2pt, yshift = +2 pt}] {qubit $Q$};
    \node[right = 2.5cm of x, transform canvas={xshift=+ 2pt, yshift = +2 pt}] {$y \in \{0,1\}^n$};
    \node[left = 3.3cm of q] {$Q$ if $f(x,y) = 0$};
\end{tikzpicture}
\caption{\textbf{The routing protocol.} In the protocol, the verifier $V_0$ prepares a qubit $Q$ in one of the four BB84 states uniformly at random. Subsequently, $V_0$ sends $Q$ together with a random $n$-bit string $x$  to the agent $P$ at position $z$ and~$V_1$ sends a random $n$-bit string $y$. All communication happens at the speed of light and the timing is such that $Q$, $x$ and $y$ reach position $z$ at the same time.
Depending on the outcome $f(x,y)$ of a previously agreed upon Boolean function $f$ on $2n$ bits, the prover has to send the qubit $Q$ received immediately to either verifier $V_0$ or $V_1$. The qubit $Q$ has to reach the verifiers on time, i.e.\ the time of arrival at $V_{f(x,y)}$ has to be consistent with $Q$ being sent from $z$ at the speed of light right after $Q$ has reached $z$. Straight lines correspond to classical information, while undulated lines correspond to quantum information being sent.}
\label{fig:protocol-route}
\end{figure}
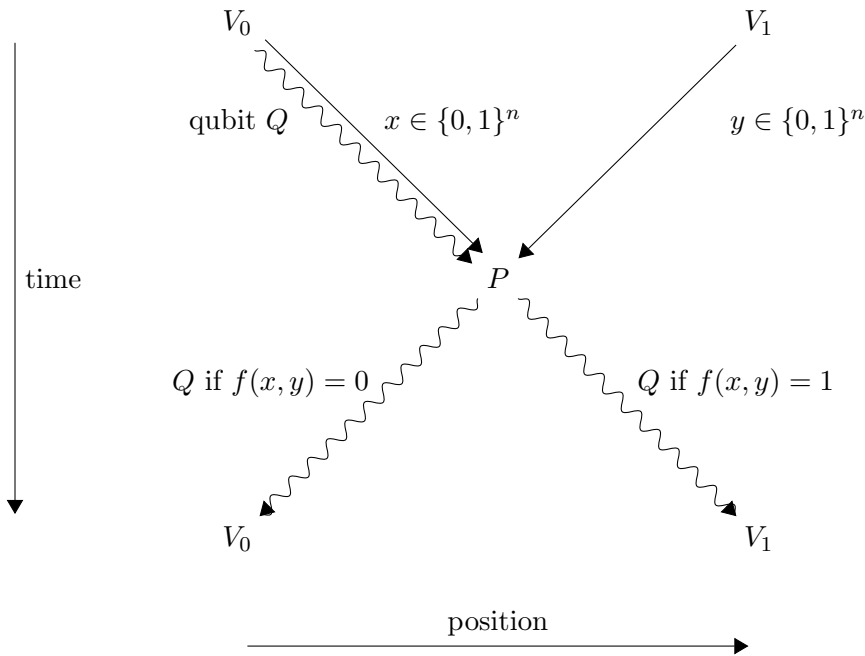

\begin{figure}[b]
    \centering
    \begin{tikzpicture}[node distance=3cm, auto]
    \node (A) {$V_0$};
    \node [left=1cm of A] {};
    \node [right=of A] (B) {};
    \node [right=of B] (C) {$V_1$};
    \node [right=1cm of C] {};
    \node [below=of A] (D) {};
    \node [below=of B] (E) {$P$};
    \node [below=of C] (F) {};
    \node [below=of D] (G) {$V_0$};
    \node [below=of E] (H) {};
    \node [below=of F] (I) {$V_1$};
    \node [left= 6cm of E] (J) {};
    \node [below= 3cm of J] (K) {};
    \node [above= 3cm of J] (L) {};
    \node [below= 1cm of G] (X) {};
    \node [below= 1cm of I] (Y) {};

    \draw [-{Triangle[scale=1.5]}, transform canvas={xshift=- 2pt, yshift = -2 pt}, quantum] (A) -- (E) ;
    \draw [-{Triangle[scale=1.5]}, transform canvas={xshift=+ 2pt, yshift = +2 pt}] (A) -- (E) node[midway] (x) {$x \in \{0,1\}^n$};
    \draw [-{Triangle[scale=1.5]}] (C) -- (E);
    \draw [-{Triangle[scale=1.5]}] (E) -- (I) node[midway] (q) {$b \in \{0,1\}$};
    \draw [-{Triangle[scale=1.5]}] (E) -- (G);

    \draw [-{Triangle[scale=1.5]}] (L) -- (K) node[midway] {time};

    \node[left=1cm of x, transform canvas={xshift=+ 2pt, yshift = +2 pt}] {qubit $Q$};
    \node[right = 2.5cm of x, transform canvas={xshift=+ 2pt, yshift = +2 pt}] {$y \in \{0,1\}^n$};
    \node[left = 3.3cm of q] {$b \in \{0,1\}$};
    \draw [-{Triangle[scale=1.5]}] (X) -- (Y) node[midway] {position};
\end{tikzpicture}
\caption{\textbf{The measuring protocol.} In the protocol, the verifiers $V_0$ and~$V_1$ choose two random bit strings $x$, $y$ of length~$n$. If $f(x,y)=0$, $V_0$ prepares a qubit $Q$ in one of the computational basis states with equal probability, otherwise, $V_0$ prepares $Q$ in one of the Hadamard basis states. Then, $V_0$ sends $Q$ and $x$ to $P$, $V_1$ sends $y$, and the timing is such that $Q$, $x$ and $y$ reach position $z$ at the same time.
If $f(x,y) = 0$, the prover measures $Q$ in the computational basis, otherwise in the Hadamard basis. The outcome bit $b$ of the measurement is subsequently sent back to both verifiers. It has to reach the verifiers on time, i.e.\ the time of arrival of $b$ has to be consistent with $b$ being sent from $z$ at the speed of light right after $Q$ has reached $z$. Straight lines correspond to classical information, while undulated lines correspond to quantum information being sent.}
\label{fig:protocol-measure}
\end{figure}
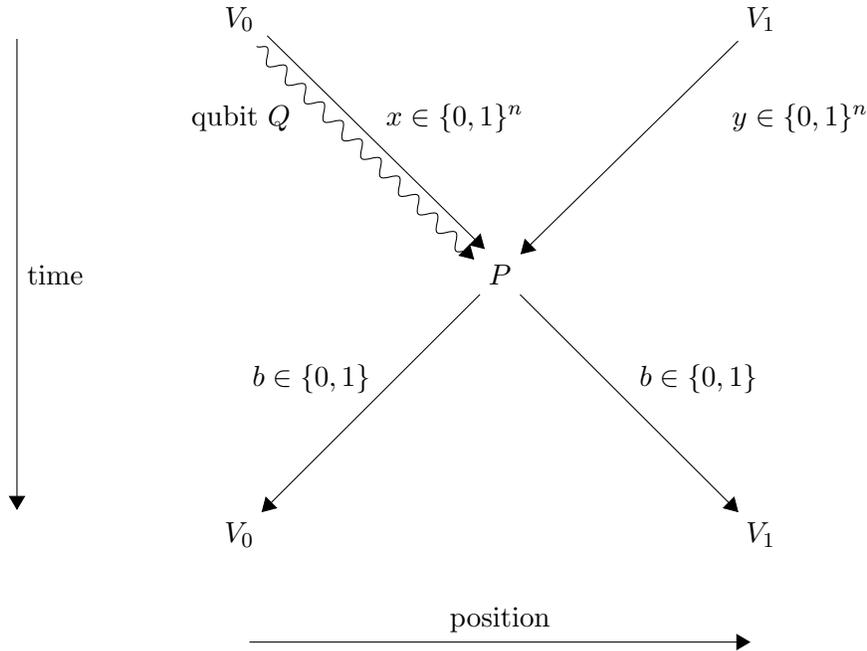

In a sense, the only difference between the protocols is who carries out the measurement. It turns out that our security arguments therefore only differ in a single place (for more information see the methods section). As is familiar from the security analysis of quantum key distribution protocols, the security analysis of the described prepare and measure protocols is equivalent to their natural entanglement-based versions, which is preferred in formal arguments due to their conceptual simplicity. Here, verifier 0 prepares an EPR pair and sends half of it to the agent and holds on to the other half as a reference qubit. The protocol is otherwise unchanged and in order for the verifier later to compare results, the verifier will measure the reference qubit.

Let us point out that the implementation of the protocols merely requires the honest parties to be able to prepare and measure BB84 states, a task that is routinely carried out in the context of quantum key distribution both in laboratories and commercially. Indeed, the least quantum-technological requirements are demanded from the agent or the agent's device in the routing protocol: namely to either to reflect a photon with a mirror or to measure it.

The routing protocol is even simpler than the measuring protocol in the sense that the honest agent needs to perform no measurements. On the other hand, the reply of the agent in the measuring protocol is completely classical, and here our security proof also applies to the setting where quantum information travels slowly, meaning that only classical messages travel at the speed of light. This requirement fits current technology better, where qubits are transmitted using fiber optics. Thus, both protocols have their pros and cons and it depends on the desired application to determine which one is better suited.

We can show that for an appropriate function $f$, both the routing and measuring protocol are secure if the attackers do not hold more than $n/2 - 5$ qubits each at the beginning of their attack, when strings $x$ and $y$ of length $n$ are sent by the verifiers. The most general form of the attacks is depicted in Figure \ref{fig:strategy-both}.  Moreover, the protocols can be repeated sequentially to make the probability that the attackers go unnoticed exponentially small. While we cannot give a concrete function $f$, we show that a uniformly random Boolean function will work with overwhelming probability. 

\begin{figure}
\centering
\begin{tikzpicture}[node distance=2cm, auto]
\node[red] (B) {Alice};
\node[right=7 cm of B, red] (C) {Bob};
\node[right=3.5 cm of B] (F) {};
\node[below= 1.5 cm of B] (J) {$A$~~$\tilde A$~~$A_c$};
\node[above=1cm of J] (J') {};
\node[right=1 cm of J] (J'') {\phantom{$\tilde A$}$x$};
\node[below= 1.5 cm of C] (M) {$B_c$~~$\tilde B$~~$B$};
\node[above=1cm of M] (M') {};
\node[left=1 cm of M] (M'') {$y$\phantom{$\tilde A$}};
\node[below= of J] (P) {$A$~~$\tilde A$~~$B_c$};
\node[below=1 cm of P] (P') {};
\node[right= 0.8 cm of P] (P'') {\phantom{$\tilde A$}$x, y$};
\node[below= of M] (S) {$A_c$~~$\tilde B$~~$B$};
\node[below=1 cm of S] (S') {};
\node[left =0.8 cm of S] (S'') {$x, y$\phantom{$\tilde A$}};
\node[left= 2 cm of J] (V) {};
\node[left= 2 cm of P] (W) {};
\node[above= 1.5 cm of V] (X) {};
\node[below= 1 cm of W] (Y) {};

\draw[-{Implies}, double, double distance= 1mm] (J') -- (J) node[midway, left] (u) {$U^x$\phantom{aa}};
\draw[-{Implies}, double, double distance= 1mm] (M') -- (M) node[midway, right] (v) {\phantom{aa}$V^y$};
\draw[-{Implies}, double, double distance= 1mm] (S) -- (S') node[midway, right] (l) {\phantom{aa}\textcolor{cyan}{$L^{xy}$}} node[midway, left] (sigma) {\textcolor{brown}{$\{\Sigma^{xy}, I - \Sigma^{xy}\}$}\phantom{aa}};
\draw[-{Implies}, double, double distance= 1mm] (P) -- (P') node[midway, left] (k) {\textcolor{cyan}{$K^{xy}$}\phantom{aa}} node[midway, right] (pi) {\phantom{aa}\textcolor{brown}{$\{\Pi^{xy}, I - \Pi^{xy}\}$}};
\draw[-{Triangle[scale=1.5]}, quantum] (J) -- (S);
\draw[-{Triangle[scale=1.5]}, quantum] (M) -- (P);
\draw[-{Triangle[scale=1.5]}] (J'') -- (S'');
\draw[-{Triangle[scale=1.5]}] (M'') -- (P'');
\draw[-{Triangle[scale=1.5]}] (X) -- (Y) node[midway, left] (t) {time};
\end{tikzpicture}
 \caption{\textbf{Attack strategies for the routing and the measuring protocol.} The parts in black are the same for both protocols, while cyan belongs to the routing protocol and brown to the measuring protocol. Straight lines correspond to classical information, while undulated lines correspond to quantum information being sent. We assume that Alice and Bob each have a qubit system $A$ and $B$, respectively. Moreover, Alice and Bob have local quantum registers $\tilde A$ and $\tilde B$. Due to the constraints imposed by special relativity, Alice and Bob are allowed one round of quantum communication, during which they can exchange systems $A_c$ and $B_c$. We assume that both Alice and Bob have the same number of qubits. At the beginning of the protocol, Alice intercepts $x$ and stores the qubit $Q$ in $A$, while Bob intercepts $y$. The most general attacks are as follows:
    $(1)$ Alice applies $U^x$ on $A\tilde A A_c$, Bob applies $V^y$ to $ B \tilde B B_c$.
    $(2)$ Alice sends $A_c$ and $x$ to Bob, Bob sends $B_c$ and $y$ to Alice.
    \textbf{Routing protocol:}
    $(3)$ Alice applies $K^{xy}$ on $A \tilde A B_c$ and Bob applies $L^{xy}$ on $B \tilde B A_c$.
    $(4)$ If $f(x,y) = 0$, Alice returns $A$ to verifier $V_0$, if $f(x,y) = 1$, Bob returns $B$ to verifier $V_1$. 
    \textbf{Measuring protocol:}
    $(3)$ Alice measures $\{\Pi^{xy}, I - \Pi^{xy}\}$ on $A \tilde A B_c$ and Bob measures $\{\Sigma^{xy}, I - \Sigma^{xy}\}$ on $B \tilde B A_c$.
    $(4)$ Alice sends her measurement outcome to verifier $0$, Bob sends his measurement outcome to verifier $1$.
    Here, we take all operators to be unitaries and the superscript indicates which classical strings the unitaries might depend on. }
\label{fig:strategy-both}
\end{figure}

Moreover, we consider the effect of noise on both protocols. The noise in the setup causes the honest agent not to succeed  with certainty, but to fail with a $1\%$ chance. In order to deal with the noise, the verifiers will repeat the protocol sequentially a number of times and accept if the protocol succeeds more than a fixed number of times. We can show that such protocols are still secure: An honest prover is rejected with a probability exponentially small in the number of repetitions. On the other hand, attackers controlling at most $n/2-5$ qubits at each round will succeed with probability exponentially small in the number of repetitions. This noise robustness of the single qubit protocols makes them interesting for near-term experimental implementations: for any reasonable bound on the number of qubits, i.e., a standard quantum bounded-storage assumption in cryptography, we have a secure protocol transmitting only a little more \emph{classical} information as well as a single qubit over a \emph{noisy} communication line.

Finally, we give lower bounds for concrete functions $f$, based on their communication complexity. For example, routing and measuring protocols using the inner product function are secure against attackers with at most $\log(n)/2 - 5$ qubits each. While these bounds for concrete functions are exponentially worse than for random functions, they still exhibit the feature that the ratio of the quantum resources the attackers need compared to the quantum resources an honest prover needs is unbounded in the number of classical bits $n$ involved in the protocol, something not achieved in previous work. Furthermore, this also works in the presence of noise, hence providing us with a practical protocol that will provide security position verification under a quantum bounded storage assumption. We therefore believe that our work provides a blue print to the near-term realization of a new cryptographic primitive, which has the possibility to enhance our communication infrastructure with verified location as an additional security token.

In order to understand the open questions emerging from this work, note that it is important in our analysis of the random protocol that the Boolean function $f$ we choose in order to run the protocols has to be truly random. This implies that the classical circuit to compute $f$ is of exponential size in $n$. To decrease the classical resources needed for this protocol, it is therefore highly relevant to know whether it is possible to use pseudo-randomness instead, or whether there is another way to choose $f$ with a circuit of polynomial size.

Finally, the most important open question is the following: When considering the dependence on the number of classical bits $n$, our lower bound implies that a number of qubits proportional to the number of classical bits sent by the verifiers is needed to attack the scheme.
However, the best construction for a general attack takes $2^n$ EPR pairs \cite{Beigi2011A, Buhrman2013A}. This leaves open the possibility that it could be even harder for attackers to break the security. Can we improve the lower bound to be exponential in $n$?

\section*{Methods}
To prove our main result, we build on the proof strategy used in \cite{Buhrman2013A}, overcoming both conceptual and technical difficulties. For simplicity, we will describe the security proof of the routing protocol first and comment on the differences for the measuring protocol at the end of the section. First, we observe that the joint quantum state of the attackers before their mutual communication arrives already suffices to determine where the qubit will be routed to in the given attack. We subsequently discretize the possible quantum strategies of the attackers with the help of  $\varepsilon$-nets. Since the number of qubits of the attackers is bounded, the size of the $\varepsilon$-nets is limited. From there, we construct classical rounding functions which capture the essentials of the quantum strategies. In particular, an $(\varepsilon,q)$-classical rounding gives rise to a Boolean function for each attack Alice and Bob could do controlling at most $q$ qubits each. These functions agree with the Boolean function $f$ used in the routing protocol on all pairs of classical bit strings $(x,y)$ on which the attackers succeed with probability at least $1-\varepsilon^2$. In this sense, the classical rounding captures the information where the qubit is routed to during an attack. The $\varepsilon$-net construction shows that for $\varepsilon$ small enough and $q \in \mathbb N$, there exists an $(\varepsilon, q)$-classical rounding of size exponential in $q$. For the exact definition of an $(\varepsilon, q)$-classical rounding and the details of the proofs, we refer the reader to the Supplementary Information (see Section \ref{sec:routing} for the routing protocol and Section \ref{sec:measure} for the measuring protocol).

A counting argument that compares the number of $(\varepsilon, q)$-classical roundings to the number of Boolean functions $f$ (on $2n$ bits) used to define the protocol then shows that most Boolean functions are far from any functions produced from classical roundings. More precisely, we show that for $q \leq n/2 - 5$, there exists a function $f:\{0,1\}^{2n} \to \{0,1\}$ that agrees with any function produced from the $(\varepsilon, q)$-classical rounding constructed previously on less than $3/4$ of the possible input pairs $(x,y)$. Moreover, a uniformly random function $f$ has this property with probability at least $1-2^{-2^{n}}$.

Picking $f$ as above, the main result can then be proven by contradiction from the properties of an $(\varepsilon, q)$-classical rounding. Indeed, the counting argument implies that attackers controlling at most $n/2- 5$ qubits each have to be detected with probability greater than $\varepsilon^2$ on at least $1/4$ of all possible pairs of bit strings $(x,y)$. This shows that cheaters will be detected with constant probability for a random function. 

For the measuring protocol, we can show that similarly the attackers have to decide in which basis to measure the qubit $Q$ already before their mutual communication. The argument is based on an entropic uncertainty relation relative to quantum side information \cite{RB09A,BCC+10A}. The rest of the proof proceeds as for the routing protocol.

To obtain lower bounds for concrete functions, we consider the distributional communication complexity in the simultaneous message passing model. Here, Alice and Bob receive inputs $x$ and $y$ and send each a message of equal length to a referee. The latter is supposed to compute the value of the function with probability at least $3/4$ if the inputs are drawn from the uniform distribution. The aforementioned communication complexity is the number of bits Alice (or Bob) has to send. We prove that the routing and the measuring protocols are secure for a function with communication complexity at least $k$ against attackers that control at most $\log(k)/2 - 3$ qubits each. The key insight is that any $(\varepsilon,q)$-classical rounding can be converted into a protocol in the communication complexity setting. Since successful attack strategies lead to $(\varepsilon,q)$-classical roundings, the number of qubits $q$ of the attackers cannot be too small, since otherwise very efficient communication protocols would exist, contradicting the lower bound on the communication complexity. 

\section*{Acknowledgements}

The authors would like to thank Adrian Kent for organizing a workshop on relativistic quantum information theory in February 2020 during which part of this work was presented. AB and MC acknowledge financial support from the European Research Council (ERC Grant Agreement No. 81876), VILLUM FONDEN via the QMATH Centre of Excellence (Grant No.10059) and the QuantERA ERA-NET Cofund in Quantum Technologies implemented within the European Union’s Horizon 2020 Programme (QuantAlgo project) via the Innovation Fund Denmark.

\newpage

\section*{\bf{Supplementary material}}

\section{Main results} \label{sec:contribution}

In this supplementary material, we provide proofs for all results mentioned in the main text. In particular, we prove the following theorems, which are our main results:

Our first result is that the routing protocol \PVroute is secure if Alice and Bob control less than $n/2 - 5$ qubits, where $2n$ classical bits are being sent.  
\begin{thm} \label{thm:intro}
 Let $n \geq 10$. Let us assume that the verifiers choose the bit strings $x$, $y$ of length~$n$ uniformly at random. Then there exists a function $f:\{0,1\}^{2n} \to \{0,1\}$ with the property that, if the number $q$ of qubits each of the attackers controls satisfies 
\begin{equation*}
q \leq \frac{1}{2}n- 5,
\end{equation*}
the attackers are caught during \PVroute with probability at least $2 \cdot 10^{-2}$. Moreover, a uniformly random function $f$ will have this property (except with exponentially small probability).
\end{thm}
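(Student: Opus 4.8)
The plan is to reduce the analysis of an arbitrary quantum attack to a purely combinatorial counting problem. The first step is a structural observation about the routing protocol: because Alice and Bob are allowed only a single round of crossing communication, the decision of which verifier eventually receives the qubit $Q$ is already fixed by their joint state immediately after step $(1)$ of Figure~\ref{fig:strategy-both}, before $A_c$ and $B_c$ are exchanged. Concretely, a successful attack on input $(x,y)$ must route $Q$ to $V_{f(x,y)}$, so the attack induces a well-defined ``routing bit'' as a function of the strategy and of $(x,y)$, readable off the pre-communication configuration. I would make this precise by arguing that the marginal reduced state on Alice's side (respectively Bob's) already commits to an output verifier, which is what lets us speak of a Boolean outcome before the communication round completes.

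The second step is discretization. Since each attacker controls at most $q$ qubits, the unitaries $U^x, V^y$ and the post-communication operations live in a space of dimension $2^{O(q)}$. I would cover the set of all such strategies by an $\varepsilon$-net whose cardinality is singly exponential in $q$ (of the form $(C/\varepsilon)^{2^{O(q)}}$). To each net point I associate, via the structural observation, a Boolean function $(x,y) \mapsto$ routing bit, and I would prove the defining property of an $(\varepsilon,q)$-classical rounding: whenever the true attack succeeds with probability at least $1 - \varepsilon^2$ on some input $(x,y)$, the associated net function equals $f(x,y)$ there. This requires a continuity estimate showing that replacing the genuine strategy by a nearby net point changes the success probability by at most $O(\varepsilon)$, so that near-certain success forces the rounded bit to match.

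The third step is the counting argument. Let $N$ be the number of distinct Boolean functions arising from the $(\varepsilon,q)$-classical rounding; by the second step $N$ is at most the net size, hence $\log N \le 2^{O(q)}\log(1/\varepsilon)$. A fixed Boolean function on $2n$ bits that agrees with a given rounding function on at least $3/4$ of the $2^{2n}$ inputs lies in a Hamming ball of relative radius $1/4$, containing at most $2^{H(1/4)\,2^{2n}}$ functions. Thus the number of $f$ that are close to \emph{some} rounding function is at most $N\cdot 2^{H(1/4)\,2^{2n}}$. The point of the hypothesis $q \le n/2 - 5$ is exactly to make $\log N$ small enough that this product is a tiny fraction of the total $2^{2^{2n}}$ Boolean functions; a union bound then shows a uniformly random $f$ avoids all these balls except with probability at most $N\cdot 2^{-(1-H(1/4))2^{2n}} \le 2^{-2^{n}}$, which simultaneously yields the existential statement.

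Finally, picking such an $f$, I argue by contradiction. If the attackers, controlling $q \le n/2-5$ qubits each, succeeded with probability at least $1-\varepsilon^2$ on more than $3/4$ of the inputs, then the associated rounding function would agree with $f$ on more than $3/4$ of inputs, contradicting the chosen separation. Hence on at least a $1/4$ fraction of the uniformly random inputs $(x,y)$ the success probability is below $1-\varepsilon^2$, so the detection probability is at least $\tfrac14\varepsilon^2$; choosing $\varepsilon$ as an appropriate fixed constant gives the claimed bound $2\cdot 10^{-2}$. I expect the main obstacle to be the second step: both the quantitative control of the $\varepsilon$-net size (it must stay singly exponential in $q$ while $\varepsilon$ remains a genuine constant) and the robustness estimate that legitimizes the rounding—ensuring that approximating the attack by a net point, combined with the ``routing is predetermined'' observation, preserves the success/agreement relation tightly enough for the constant $n/2 - 5$ to emerge.
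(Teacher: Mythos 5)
Your proposal follows essentially the same route as the paper: a separation statement showing the pre-communication state already commits to a routing direction (Proposition \ref{prop:states-must-have-low-fidelity}, built on Lemma E.1 of Buhrman et al.), an $\epsilon$-net discretization yielding an $(\epsilon,q)$-classical rounding (Lemma \ref{lem:classical_compression}), a volume/counting argument over Boolean functions (Lemma \ref{lem:stable_E4}), and a proof by contradiction converting the $1/4$-fraction of bad inputs into a constant detection probability. Two points deserve correction. First, your count of the rounding functions is off in the step that actually produces the threshold $q\le n/2-5$: a rounding function is specified by a \emph{tuple} $(f_A,f_B,\lambda)$ where $f_A,f_B:\{0,1\}^n\to\{0,1\}^k$ assign a net index to \emph{each} of the $2^n$ inputs, so $\log N\le(2^{n+1}+1)k$ with $k=\log(927)\,2^{2q+2}$, not $\log N\le 2^{O(q)}\log(1/\epsilon)$. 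The union bound then requires $2^{n+1}\cdot 2^{2q+2}\log(927)\ll\bigl(1-h(1/4)\bigr)2^{2n}$, which is exactly where $2^{2q}\lesssim 2^{n}$, i.e.\ $q\le n/2-O(1)$, comes from; your undercount would instead suggest a threshold near $q\lesssim n$, which is not what the theorem claims. Second, the statement concerns the prepare-and-measure protocol \PVroute, while the whole analysis (including the separation of the sets $\mathcal S_0^{\epsilon,\mathrm{route}},\mathcal S_1^{\epsilon,\mathrm{route}}$ via the reference qubit $R$) is carried out for the EPR-based variant \PVroutemod; you need the source-replacement argument of Proposition \ref{protocols-equivalent}, which costs a factor $1/2$ in detection probability and is why the final constant is $2\cdot 10^{-2}$ rather than the $4\cdot 10^{-2}$ of Theorem \ref{thm:main-result}. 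Also, be careful with the phrase that Alice's \emph{marginal} state commits to the routing direction: the disjointness argument is about the global post-$U^x\otimes V^y$ state lying near one of two well-separated sets, not about a local reduced state.
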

Theorem \ref{thm:intro} follows from Corollary \ref{cor:M2-repetitions} in this supplementary material. Already in the original publication of Kent, Munro, and Spiller \cite{Kent2011} which proposed the routing protocol, it was shown that it is possible to attack this scheme if attackers share $2^n$ EPR pairs.
Buhrman, Fehr, Schaffner, and Speelman~\cite{Buhrman2013} studied this class of protocols further, introducing the garden-hose model of communication complexity, which captures attacks relying on teleportation, and showed that an attack exists on the routing protocol using at most $\mathrm{GH}(f)$ EPR pairs. 
Here, $\mathrm{GH}(f)$ is the garden-hose complexity of the function $f$, a measure which is at most polynomial if the function is computable by a log-space Turing machine, but is exponential for a random function.

Our second result is that the measuring protocol \PVBB is also secure if Alice and Bob control less than $n/2 - 5$ qubits, where $2n$ classical bits are being sent. 

\begin{thm} \label{thm:intro-2}
 Let $n \geq 10$. Let us assume that the verifiers choose the bit strings $x$, $y$ of length~$n$ uniformly at random. Then there exists a function $f:\{0,1\}^{2n} \to \{0,1\}$ with the property that, if the number $q$ of qubits each of the attackers controls satisfies 
\begin{equation*}
q \leq \frac{1}{2}n- 5,
\end{equation*}
the attackers are caught during \PVBB with probability at least $2 \cdot 10^{-2}$. Moreover, a uniformly random function $f$ will have this property (except with exponentially small probability).
\end{thm}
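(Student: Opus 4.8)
The plan is to follow the proof of Theorem \ref{thm:intro} for the routing protocol essentially verbatim, replacing only the single step where the two protocols genuinely differ. As in the routing case, I would first pass to the entanglement-based version of \PVBB, in which verifier $0$ prepares an EPR pair, sends one half to the agent together with $x$, and retains the other half as a reference qubit that is measured later in the computational or Hadamard basis depending on $f(x,y)$. Standard QKD-style reasoning shows the security of this version is equivalent to that of the prepare-and-measure protocol. The most general attack is then the one of Figure \ref{fig:strategy-both}: Alice and Bob apply local unitaries $U^x$, $V^y$, exchange $A_c$, $B_c$ together with $x$, $y$ in one round of crossing communication, and finally perform the binary measurements $\{\Pi^{xy}, I-\Pi^{xy}\}$ and $\{\Sigma^{xy}, I-\Sigma^{xy}\}$, returning their outcomes to the nearer verifier.

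The crucial --- and only genuinely new --- step is to show that the effective measurement basis is already determined by the joint state of the attackers before their mutual communication arrives, exactly paralleling the observation in the routing proof that the pre-communication state fixes the routing direction. For routing this is almost immediate, since the qubit $Q$ itself must be returned. For measuring it is more delicate, because the attackers output only classical bits. Here I would invoke the entropic uncertainty relation in the presence of quantum side information of \cite{RB09A, BCC+10A}: if, for a fixed pair $(x,y)$, the attack were to succeed with high probability both when the honest measurement is in the computational basis and when it is in the Hadamard basis, then the verifier's reference qubit would be strongly correlated with guesses in two complementary bases simultaneously, contradicting the uncertainty relation. Quantifying this --- turning the success probabilities into bounds on the relevant conditional entropies and concluding that the pre-communication state must already commit the attack to one basis (equivalently, to the value of $f(x,y)$) --- is where I expect the main effort to lie, and is the single place the argument departs from the routing case.

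Once this commitment is established, the rest proceeds identically. I would discretize the attackers' pre-communication strategies by an $\varepsilon$-net; since each attacker controls at most $q$ qubits, the net has size exponential in $q$, and each net point yields an $(\varepsilon,q)$-classical rounding, hence a Boolean function agreeing with $f$ on every $(x,y)$ where the attack wins with probability at least $1-\varepsilon^2$. A counting argument comparing the number of such roundings to the $2^{2^{2n}}$ Boolean functions on $2n$ bits then shows that for $q \leq n/2 - 5$ there is an explicit $f$ --- and a uniformly random $f$ except with probability $2^{-2^{n}}$ --- that agrees with every rounding-induced function on fewer than $3/4$ of input pairs. For such an $f$, a successful $q$-qubit attack would have to win with probability at least $1 - \varepsilon^2$ on at least $3/4$ of all pairs, a contradiction; hence the attackers are caught with probability exceeding $\varepsilon^2$ on at least a $1/4$-fraction of pairs, which averages to the claimed detection probability $2\cdot 10^{-2}$. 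Sequential repetition then amplifies this to exponentially small cheating probability, exactly as in the routing case.
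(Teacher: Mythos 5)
Your proposal is correct and follows essentially the same route as the paper: pass to the entanglement-based version, use the entropic uncertainty relation with quantum side information (via Fano's inequality and continuity of conditional entropy) to show that the pre-communication state already commits the attackers to one measurement basis, and then reuse the $\varepsilon$-net, classical-rounding and counting machinery from the routing proof. You correctly isolate the uncertainty-relation step as the only genuinely new ingredient, which is exactly where the paper's argument (Lemmas on low conditional entropy and the separation of the sets $\mathcal S_0^{\epsilon,\mathrm{meas}}$, $\mathcal S_1^{\epsilon,\mathrm{meas}}$) does its work.
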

Theorem \ref{thm:intro-2} follows from Theorem \ref{thm:second-main-result}. The results of \cite{Buhrman2013} can be adapted to construct attacks on \PVBB{} for which the entanglement required is given by the garden-hose complexity of $f$ (a measure that is polynomial for log-space functions, but can be exponential in general). Thus, a general attack on \PVBB{} is possible when attackers share $2^n$ EPR pairs for any function $f$. As an aside, we do note that, despite the fact that these specific attacks can be translated, we do not know in general whether an attack on \PVroutemod can be translated into an attack on \PVBB{} or vice-versa.

Finally, we consider concrete instead of random functions $f$. In particular, for the binary inner product function\begin{equation} \label{eq:ip}
IP(x,y)=\sum^n_{i=1} x_i y_i \pmod{2} \,,
\end{equation}
we can prove the following:
\begin{thm}\label{thm:intro-IP-lower-bounds}
Let $n \geq 10$. Let us assume that the verifiers choose the bit strings $x$, $y$ of length~$n$ uniformly at random. If the number $q$ of qubits each of the attackers controls satisfies 
\begin{equation*}
    q\leq \frac{1}{2}\log n - 5,
\end{equation*}
the attackers are caught during $\widetilde{PV}_{\mathrm{route}}^{IP}$ and $PV_{\mathrm{meas}}^{IP}$ with probability at least $2 \cdot 10^{-2}$, respectively.
\end{thm}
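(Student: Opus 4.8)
The plan is to reduce the statement to the general communication-complexity security result announced in the Methods: the routing and measuring protocols instantiated with a function $f$ are secure against attackers controlling at most $q \leq \tfrac{1}{2}\log k - 3$ qubits each, provided the distributional simultaneous-message-passing (SMP) communication complexity of $f$ under the uniform distribution, at success probability $3/4$ (error $\varepsilon = 1/4$), is at least $k$. Since $IP$ is the textbook example of a function of minimal discrepancy, all that remains is to produce a good lower bound $k$ on its SMP complexity and then to verify that the resulting qubit threshold is at least $\tfrac12\log n - 5$.

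First I would lower-bound the communication complexity of $IP$ from \eqref{eq:ip} by the discrepancy method. The $2^n \times 2^n$ sign matrix $M$ with entries $M_{xy} = (-1)^{IP(x,y)}$ is the unnormalized Hadamard matrix $H^{\otimes n}$, so $M M^{\mathsf T} = 2^n I$ and every singular value of $M$ equals $2^{n/2}$. The spectral (Lindsey) bound then gives that the discrepancy of $IP$ under the uniform distribution is at most $\lVert M \rVert / 2^{n} = 2^{-n/2}$. Plugging this into the standard discrepancy lower bound $D^{\mu}_{\varepsilon}(f) \geq \log\frac{1 - 2\varepsilon}{\operatorname{disc}_\mu(f)}$ at error $\varepsilon = 1/4$ yields, in the interactive two-way model,
\[
D^{\mathrm{unif}}_{1/4}(IP) \;\geq\; \log\frac{1/2}{2^{-n/2}} \;=\; \frac{n}{2} - 1 .
\]

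Next I would transfer this bound to the SMP model. Any SMP protocol in which each of Alice and Bob sends $k$ bits to the referee can be simulated by a one-way protocol of cost $k$ (Alice forwards her message to Bob, who then plays the role of the referee); hence the per-player SMP complexity dominates the one-way, and therefore the interactive, distributional complexity, so $k \geq \tfrac{n}{2} - 1$. Because $n \geq 10$ gives $\tfrac{n}{2} - 1 \geq \tfrac{n}{4}$, the qubit threshold guaranteed by the general result satisfies
\[
\frac{1}{2}\log k - 3 \;\geq\; \frac{1}{2}\log\frac{n}{4} - 3 \;=\; \frac{1}{2}\log n - 4 \;\geq\; \frac{1}{2}\log n - 5 ,
\]
so every $q \leq \tfrac12\log n - 5$ lies within the secure range. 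The general result then delivers detection probability at least $2 \cdot 10^{-2}$ for both $\widetilde{PV}_{\mathrm{route}}^{IP}$ and $PV_{\mathrm{meas}}^{IP}$, as claimed.

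The conceptual heart of the argument is really the general reduction from a successful $(\varepsilon,q)$-attack to an SMP protocol whose message length scales with $q$ — so that too cheap an attack would contradict a communication lower bound — and I would expect establishing that conversion (with faithful control of $\varepsilon$ and the referee's success probability) to be the main obstacle; I would prove it separately. Granting that reduction, the present step is routine: the only care needed is to keep the constants aligned — in particular to confirm that the slack between $\tfrac{n}{2}-1$ and $\tfrac{n}{4}$ comfortably absorbs both the $-3$ versus $-5$ gap and the rounding of logarithms — and to observe that, since the discrepancy bound holds in the most powerful (two-way) model, it automatically lower-bounds the weaker SMP model invoked by the attack-to-protocol conversion.
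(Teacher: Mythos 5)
Your proposal is correct and follows essentially the same route as the paper: the theorem is deduced from the general communication-complexity security result (Theorem \ref{thm:comm-complex}, via Proposition \ref{prop:CC-lower-bounds}) by supplying a lower bound of order $n/2$ on the distributional SMP complexity of $IP$ under the uniform distribution and checking that the constants leave enough slack. The only cosmetic difference is that you re-derive the $IP$ bound via the discrepancy method, whereas the paper cites the equivalent bound $D^{1,u}_{1/2-\epsilon}(IP)\geq n/2-\log(1/\epsilon)-1$ from Kushilevitz--Nisan, which rests on the same discrepancy estimate.
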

The statement follows from Theorem \ref{thm:comm-complex}. The supplementary material is organized as follows: Section \ref{sec:prelim} contains some preliminaries concerning communication matrices and the purified distance between quantum states. Our results concerning the routing protocol appear in Section \ref{sec:routing}. Subsequently, we consider the measuring protocol in Section \ref{sec:measure}, before we prove both protocols to be noise robust in Section \ref{sec:noise}. Lower bounds for concrete instead of random functions for both protocols are proven in Section \ref{sec:concrete-functions}. In Section \ref{sec:attack-model} we discuss the importance of the attack model in results on quantum position verification and compare our results to previous and independent work. Finally, we conclude in Section \ref{sec:technical} with some technical results which are needed in the proofs.

\section{Preliminaries} \label{sec:prelim}

\subsection{Communication matrix}
Let $d_H:\{0,1\}^n \times \{0,1\}^n \to \mathbb N$ be the Hamming distance. Let us define for $a$, $n \in \mathbb N$,
\begin{equation*}
V(n,a)= \sum_{l = 0}^a \binom{n}{l}.
\end{equation*}
That is the cardinality of the ball of Hamming distance $a$. Let $\lambda \in (0,1/2)$ be such that $\lambda n \in \mathbb N$. In \cite[p.310]{MacWilliams1977}, we find the useful bound
\begin{equation}\label{eq:entropy_estimate}
V(n,\lambda n) \leq 2^{nh(\lambda)},
\end{equation}
where $h(p) := - p \log p - (1-p) \log (1-p)$ is the binary entropy function. The function $\log$ will be the logarithm with respect to base $2$ in this paper. 

Let $f:\{0,1\}^n \times \{0,1\}^n \to \{0,1\}$. The \emph{communication matrix} of $f$ is defined as
\begin{equation*}
(M_f)_{x,y} = f(x,y).
\end{equation*}
It is thus a $2^n \times 2^n$ matrix. The Hamming distance $d_H(M_f, M_g)$ therefore tells you, for how many pairs of bit strings $(x,y)$ the value $g(x,y)$ differs from $f(x,y)$. Note that here we interpret $M_f$, $M_g$ as strings of length $2^{2n}$.

\subsection{Fidelity and purified distance}
Let us define the fidelity between two quantum states as
\begin{equation} \label{eq:fidelity}
F(\rho, \sigma) := \tr[\sqrt{\sqrt{\sigma}\rho \sqrt{\sigma}}].
\end{equation}
In particular, $F(\ket{\psi}, \ket{\phi}) = |\braket{\psi|\phi}|$. Here, we write $F(\ket{\psi}, \ket{\varphi})$ for pure states $\ket{\psi}$, $\ket{\varphi}$ to mean $F(\dyad{\psi}, \dyad{\varphi})$ for brevity. Note that sometimes the fidelity is defined as the square of \eqref{eq:fidelity}. 
The fidelity can be used to define the purified distance on the set of density matrices \cite[Definition 3.8]{Tomamichel2016}. For quantum states $\rho$, $\sigma$, it is defined as
\begin{equation*}
\mathcal P(\rho, \sigma) := \sqrt{1 - F(\rho, \sigma)^2}.
\end{equation*}
Again, we often write $\mathcal P(\ket{\psi}, \ket{\varphi})$ for pure states $\ket{\psi}$, $\ket{\varphi}$ instead of $\mathcal P(\dyad{\psi}, \dyad{\varphi})$. Unlike the fidelity, the purified distance is a metric on the set of states, which makes it easier to work with (see e.g.\ {\cite[Proposition 3.3]{Tomamichel2016}}). In particular, it satisfies the triangle ineguality. 

\section{The qubit routing protocol} \label{sec:routing}

\subsection{Qubit routing} \label{sec:routing-protocol}

Let $f:\{0,1\}^{2n} \to \{0,1\}$. We consider PV in one spatial dimension. We will start by considering a modified version of the routing protocol which uses a maximally entangled pair. Our main result will follow by realizing in Section \ref{sec:alternative} that the entangled and unentangled qubit routing protocols are essentially equivalent. The general setup for the entangled qubit routing protocol \PVroutemod is the following: The prover $P$ claims to be at position $z$ on a line. To the left and right of $z$ are the verifiers $V_0$ and $V_1$. All communication happens at the speed of light. The protocol \PVroutemod considered in \cite{Buhrman2013} is the following (see Figure 2 of that paper):
\begin{enumerate}
\item $V_0$ randomly chooses two $n$-bit strings $x$, $y$, computes $f(x,y)$ and sends $y$ on to $V_1$. Moreover, $V_0$ prepares a maximally entangled $2$-qubit state $\ket{\Omega}=\frac{1}{\sqrt{2}}(\ket{00} + \ket{11})$. If $f(x,y) = 0$, $V_0$ does nothing, if $f(x,y) = 1$, $V_0$ sends one qubit $R$ of $\ket{\Omega}$ to $V_1$.
\item $V_0$ sends the other qubit $Q$ of $\ket{\Omega}$ together with $x$ to $P$. $V_1$ sends $y$ such that it arrives at $z$ at the same time as $x$ and $Q$ sent by $V_0$.
\item $P$ sends the qubit $Q$ on to $V_{f(x,y)}$.
\item $V_0$ and $V_1$ accept if the qubit arrives at the correct time at the correct verifier and a Bell measurement of both qubits yields the correct outcome. 
\end{enumerate}
The timing of the response from $P$ is deemed correct if it is compatible with the qubit $Q$ originating from $z$ right after it reached that point. An illustration of the protocol can be found in Figure 2 in the main text.

 The advantage of this protocol compared to others is that the honest prover only needs to handle one qubit. Note that this qubit could even be presented as a logical qubit in an error-correcting code to combat noise in the communication line. Since the protocol only requires routing this qubit and no further processing of it, any error correcting code (even without fault-tolerant properties) is fine. Errors in creating the qubit states and verifying it on the side of the  verifiers, however, need to be carried out in a fault-tolerant manner.

We will now give the form of the most general attack on \PVroutemod. Note that we can restrict our attention to unitaries by considering the Stinespring dilation of the quantum channels the attackers might wish to perform. There are two attackers Alice and Bob, where Alice is between $V_0$ and $z$ and Bob is between $z$ and $V_1$. However, neither of the attackers is actually at $z$. As explained in Section \ref{sec:contribution}, the verifiers hold a qubit system $R$, while Alice holds a qubit system~$A$, a local quantum system~$\tilde A$ and a quantum system used for communication~$A_c$. Bob has similar systems $B$, $\tilde B$ and $B_c$. 
\begin{defi}[$q$-qubit strategy for \PVroutemod] \label{defi:strategy}
Fix a partition into systems $R A \tilde A A_c B \tilde B B_c$. Both Alice's and Bob's registers each consist of $q$ qubits.
Let $d$ be the combined dimension of this system, therefore $d = 2^{2q+1}$. A \emph{$q$-qubit strategy for \PVroutemod} consists of the starting state $\ket{\psi}$ on $R A \tilde A A_c B \tilde B B_c$ and of unitaries $U^x_{A \tilde A A_c}$, $V^y_{B \tilde B B_c}$, $K^{xy}_{A \tilde A B_c}$ and $L^{xy}_{B \tilde B A_c}$ for all $x$, $y \in \{0,1\}^n$. The superscripts indicate whether the unitaries may depend on the message $x$ that $V_0$ sends, the message $y$ that $V_1$ sends or on both messages.
\end{defi}
\noindent Note that Alice and Bob only hold equally many qubits at the beginning of the strategy, after the communication phase the numbers can be different. An illustration of the above can be found in Figure 4 in the main text.

\begin{remark}\label{rmk:initial-state-product}
From the description of the protocol, it is clear that we only need consider strategies for which $\ket{\psi} = \ket{\Omega}_{AR} \otimes \ket{\psi^\prime}_{\tilde A A_c B \tilde B B_c}$ to prove the protocol secure. However, it is advantageous to consider more general starting states: In reality, photon signals over fiber travel at around $(2/3)c$, where $c$ is the speed of light, while we want to allow our attackers to signal at speed~$c$.  If our proof can quantify over all pre-shared states between $R A \tilde A A_c B \tilde B B_c$, then conceptually the input message can be `slow'. We could even imagine the state being available long before the protocol, with Alice and Bob distributing the state amongst themselves however they want. The input timing is then only on the classical messages. Alternatively, they could start computing locally before the classical messages $x$ and $y$ are available. All these scenarios lead to a starting state not of the form $\ket{\Omega} \otimes \ket{\psi^\prime}$, which is why considering general states $\ket{\psi}$ within the 
$q$-qubit strategies for \PVroutemod only makes the security notion stronger.
\end{remark}

\begin{remark}\label{rmk:random-does-not-help}
It can easily be seen that shared randomness between Alice and Bob does not help them for a fixed function $f$. Indeed, if $\rho$ is the reduced state at the end of the protocol on $RA$ if $f(x,y) = 0$ or $RB$ if $f(x,y) = 1$, the probability that Alice and Bob are not caught by the verifiers is $\bra{\Omega}\rho\ket{\Omega}$. Note that the objective function $\bra{\Omega}\rho\ket{\Omega}$ is linear in $\rho$ and the partial trace is a linear map. Thus, the maximum over convex combinations of strategies is achieved at deterministic strategies $\{U^x, V^y, K^{xy}, L^{xy}\}_{xy}$.
\end{remark}
\noindent The main lower-bound result of \cite{Buhrman2013} concerning the entangled qubit routing protocol \PVroutemod is the following:
\begin{thm}[{\cite[Theorem E.4]{Buhrman2013}}] \label{thm:E4}
Let $q$, $n \in \mathbb N$. For any $q$-qubit starting state $\ket{\psi}$ on $R A \tilde A A_c B \tilde B B_c$, there exists a Boolean function on inputs $x$, $y \in \{0,1\}^n$ such that any perfect attack on \PVroutemod requires $q$ to be linear in $n$.
\end{thm}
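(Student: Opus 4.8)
The plan is to reduce the existence of a \emph{perfect} $q$-qubit attack on \PVroutemod to the existence of a Boolean function of a very restricted combinatorial type, and then to defeat all such functions at once by a counting argument, choosing $f$ at random. Concretely, I would proceed in four steps: (i) show that in any perfect attack the verifier's routing outcome is already fixed by the attackers' joint state \emph{before} the single round of communication; (ii) discretise the relevant pre-communication operations with $\varepsilon$-nets, whose cardinality is controlled because each party holds only $q$ qubits; (iii) read off from the net a Boolean ``routing'' function $g$ that a perfect attack forces to coincide with $f$; and (iv) count how many such $g$ can arise and compare with the $2^{2^{2n}}$ Boolean functions on $2n$ bits. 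Since by Stinespring dilation we may take all attacker operations to be unitaries, we work throughout with $q$-qubit strategies $(\ket{\psi},U^x,V^y,K^{xy},L^{xy})$.

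For step (i), note that the reference qubit $R$ is never touched by the attackers, so its reduced state stays $I/2$. In a perfect attack, if $f(x,y)=0$ then the register $A$ returned by Alice must form $\ket{\Omega}$ with $R$, and if $f(x,y)=1$ then $B$ must; by monogamy of entanglement this means that at the end $R$'s purification lies entirely on Alice's final side $A\tilde A B_c$ in the first case and entirely on Bob's final side $B\tilde B A_c$ in the second. Since $K^{xy}$ and $L^{xy}$ are unitaries local to these two sides, they cannot move entanglement with $R$ across the cut, and the communication step merely relabels the ownership of $A_c$ and $B_c$; therefore the side on which $R$ is maximally entangled is already determined by the post-$U^x\otimes V^y$ state, i.e.\ by the value taken by the entanglement of $R$ with $A\tilde A B_c$. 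This defines a routing bit $g(x,y)$ computable from $\ket{\psi}$, $U^x$ and $V^y$ alone, and a perfect attack forces $g=f$ everywhere. For the robust form one replaces these exact identities by fidelity estimates, using that the purified distance $\mathcal P$ is a metric, so that an attack succeeding with probability at least $1-\varepsilon^2$ still yields a $g$ agreeing with $f$ on every such input.

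For steps (ii)--(iii), I would cover the unitary group on Alice's $q$ qubits (real dimension $2^{2q}$), and likewise the set of starting states, by $\varepsilon$-nets of cardinality $(C/\varepsilon)^{O(2^{2q})}$, and approximate each $U^x$, each $V^y$ and $\ket{\psi}$ by a net element. A Lipschitz/perturbation bound shows that replacing the true operations by their net representatives changes the final fidelity by $O(\varepsilon)$, so the rounded strategy still routes correctly on every input where the original attack succeeded well; since the routing bit is $0$ or $1$, bounded away from the threshold, it is stable under this rounding. The rounded data determine $g$ through $g(x,y)=r(\hat U^x,\hat V^y)$ for a fixed rule $r$ read off from the rounded $\ket{\psi}$, so counting the assignments $x\mapsto\hat U^x$ and $y\mapsto\hat V^y$ gives at most $(C/\varepsilon)^{O(2^{2q})\,2^n}$ distinct functions. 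Hence the number $G$ of Boolean functions realisable by some $(\varepsilon,q)$-rounding obeys $\log G = O\!\big(2^{\,n+2q}\log(1/\varepsilon)\big)$.

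Finally, for step (iv) I would compare $G$ with $2^{2^{2n}}$ via the ball-volume bound $V(2^{2n},\lambda 2^{2n})\le 2^{2^{2n}h(\lambda)}$ from \eqref{eq:entropy_estimate}: the number of $f$ lying within relative Hamming distance $1/4$ of \emph{some} realisable $g$ is at most $G\cdot 2^{2^{2n}h(1/4)}$, and since $1-h(1/4)>0$ this is a vanishing fraction of all $f$ as soon as $\log G \le (1-h(1/4))\,2^{2n}-2^{n}$, i.e.\ as soon as $2q+n\lesssim 2n$, which is exactly the regime $q\le \tfrac12 n-O(1)$. Thus a uniformly random $f$ is, except with probability $2^{-\Omega(2^n)}$, at Hamming distance more than $\tfrac14 2^{2n}$ from every rounding function, and for such an $f$ no perfect $q$-qubit attack can exist, so any perfect attack requires $q=\Omega(n)$. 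The main obstacle I anticipate is making step (i) rigorous in its robust form: converting the clean monogamy dichotomy into quantitative fidelity bounds that survive both the $\varepsilon$-net rounding and the $1-\varepsilon^2$ success threshold, while keeping the continuity constants sharp enough that the final arithmetic closes at $q\le \tfrac12 n-5$ rather than at a weaker linear bound.
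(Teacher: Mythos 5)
Your proposal is correct and follows essentially the same route as the paper: the pre-communication state already determines the routing side (Proposition \ref{prop:states-must-have-low-fidelity}, built on Lemma \ref{lem:E1}), $\varepsilon$-nets yield an $(\epsilon,q)$-classical rounding of size $O(2^{2q})$ (Lemma \ref{lem:classical_compression}), and the Hamming-ball counting argument against a random $f$ (Lemma \ref{lem:stable_E4}) closes the proof at $q\le \tfrac12 n-5$. In fact, by quantifying over all starting states inside the net and handling the $1-\epsilon^2$ success threshold, you prove the paper's strengthened version (Proposition \ref{prop:after-embezzlement}), from which the cited Theorem \ref{thm:E4} follows a fortiori.
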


On the one hand, this theorem proves that \PVroutemod is secure in some sense if the number of qubits the attackers control is at most linear in $n$. On the other hand, it has several features which make it unsuitable to derive any limits for actual attacks on the \PVroutemod-scheme from it.
Firstly, it only discusses perfect attacks, while actual attackers would still be practically successful if they have a small probability of being caught.
Secondly, the theorem fixes the state before quantifying over the functions, while actual attackers would be able to choose their entanglement after knowing the function $f$. This can be interpreted as a violation of Kerkhoffs’s principle, since the function $f$ must not be known to the attackers beforehand.
Finally, the theorem only shows that there exist an input pair $x,y$ for which the attackers will be detectable, but does not say anything about how many such pairs exist, leaving the possibility that these pairs might only be asked with exponentially small probability.  These severe drawbacks make Theorem \ref{thm:E4} unsuitable for practical applications.

The aim of this work is therefore to improve upon Theorem \ref{thm:E4} and to provide a version that solves all three problems, thus recovering the statement under a more realistic class of attacks.

\subsection{Lower bounds on the entangled qubit routing protocol}

We start our analysis of the entangled qubit routing protocol by defining an $(\epsilon,l)$-perfect $q$-qubit strategy as one which has a high chance of being accepted by the verifiers at the end of the protocol. In~\cite{Buhrman2013}, only perfect strategies were considered, i.e.\ $\epsilon = 0$, $l = 2^{2n}$. Moreover, we want to allow that the attackers only succeed on $l$ of the $2^{2n}$ pairs of bit strings $(x,y)$ that the verifiers might send.
\begin{defi}[$(\epsilon, l)$-perfect $q$-qubit strategy for \PVroutemod]
Let $\epsilon > 0$, $l \in \mathbb N$. A  $q$-qubit strategy for \PVroutemod as in Definition~\ref{defi:strategy} is \emph{$(\epsilon, l)$-perfect} if on $l$ pairs of strings $(x,y)$, Alice and Bob are caught by the verifiers with probability at most $\epsilon^2$.  
\end{defi}

\begin{remark}
Note that Alice and Bob are caught by the verfiers on input $(x,y)$ with probability at most $\epsilon^2$ if and only if Alice and Bob produce a state $\ket{\tilde \psi}$ at the end of the protocol such that $\mathcal P(\rho_{RA}, \dyad{\Omega}_{RA}) \leq \epsilon$ if $f(x,y) = 0$ and $\mathcal P(\rho_{RB}, \dyad{\Omega}_{RB}) \leq \epsilon$ if $f(x,y) = 1$, where $\rho$ is the corresponding reduced state of $\ket{\tilde \psi}$.
\end{remark}
The following proposition relates the above definition to the purified distance with respect to the state $\ket{\tilde \psi}$ as in \cite[Appendix E]{Buhrman2013}. It is a direct consequence of Uhlmann's theorem \cite[Theorem 3.22]{Watrous2018}.
\begin{prop} \label{prop:F_reduced}
For a state $\ket{\tilde \psi}_{RA\tilde A A_c B \tilde B B_c}$, it holds that
\begin{equation*}
\inf_{\ket{\phi}} \mathcal P( \ket{\tilde \psi}_{RA\tilde A A_c B \tilde B B_c}, \ket{\Omega}_{RA} \otimes \ket{\phi}_{\tilde{A}A_cB\tilde{B}B_c}) = \mathcal  P(\rho_{RA}, \dyad{\Omega}_{RA})
\end{equation*}
and
\begin{equation*}
\inf_{\ket{\phi}} \mathcal P( \ket{\tilde \psi}_{RA\tilde A A_c B \tilde B B_c}, \ket{\Omega}_{RB} \otimes \ket{\phi}_{A\tilde{A}A_c\tilde{B}B_c}) =  \mathcal P(\rho_{RB}, \dyad{\Omega}_{RB}),
\end{equation*}
where $\rho_{RA}$ and $\rho_{RB}$ are the corresponding reduced density matrices of $\ket{\tilde \psi}_{RA\tilde A A_c B \tilde B B_c}$.
\end{prop}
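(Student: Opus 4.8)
The plan is to reduce the statement to Uhlmann's theorem. Since the purified distance is related to the fidelity by $\mathcal P(\rho,\sigma) = \sqrt{1 - F(\rho,\sigma)^2}$, and $t \mapsto \sqrt{1-t^2}$ is strictly decreasing on $[0,1]$, minimizing the left-hand side over $\ket{\phi}$ is equivalent to maximizing the fidelity $F(\ket{\tilde\psi}, \ket{\Omega}_{RA}\otimes\ket{\phi})$ over $\ket{\phi}$. Thus it suffices to prove that
\[
\sup_{\ket{\phi}} F\bigl(\ket{\tilde\psi}_{RA\tilde A A_c B \tilde B B_c},\, \ket{\Omega}_{RA}\otimes\ket{\phi}_{\tilde A A_c B \tilde B B_c}\bigr) = F(\rho_{RA}, \dyad{\Omega}_{RA}),
\]
after which the claimed identity for the purified distance follows by applying $\sqrt{1-(\cdot)^2}$ to both sides. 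The second identity (for $RB$) is then proven in exactly the same way, interchanging the roles of $A$ and $B$.

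First I would observe that $\ket{\tilde\psi}$ is, by definition, a purification of its own reduced state $\rho_{RA}$ into the complementary register $C := \tilde A A_c B \tilde B B_c$. The crucial point is the special structure of the second state: since $\dyad{\Omega}_{RA}$ is a \emph{pure} state on $RA$, every purification of it into the full space $RA\,C$ is necessarily a product of the form $\ket{\Omega}_{RA}\otimes\ket{\phi}$ for some pure $\ket{\phi}$ on $C$. Hence the family $\{\ket{\Omega}_{RA}\otimes\ket{\phi}\}_{\ket{\phi}}$ over which the infimum ranges is \emph{precisely} the set of all purifications of $\dyad{\Omega}_{RA}$ living in the given Hilbert space. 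Uhlmann's theorem, in the form of \cite[Theorem 3.22]{Watrous2018} — which states that the fidelity of two states equals the maximal overlap of their purifications with one purification held fixed — then yields the displayed equality directly.

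The only point requiring care is the dimension hypothesis in Uhlmann's theorem: to realize the optimal purification of $\dyad{\Omega}_{RA}$ against the fixed purification $\ket{\tilde\psi}$ of $\rho_{RA}$, the complementary register $C$ must be large enough. Here this is automatic, because $\dyad{\Omega}_{RA}$ has rank one and so its purifications require no ancilla beyond what is already present. If I wanted the argument to be self-contained, I would bypass the dimension discussion entirely by a short direct computation: expanding $\ket{\tilde\psi} = \sum_{i,j} c_{ij}\ket{i}_{RA}\ket{j}_{C}$ in product bases and setting $a_j = \sum_i \overline{c_{ij}}\,\omega_i$ with $\omega_i = \braket{i|\Omega}$, the Cauchy--Schwarz inequality gives $\max_{\ket{\phi}}\,\bigl|\braket{\tilde\psi | \Omega\otimes\phi}\bigr| = \bigl(\sum_j |a_j|^2\bigr)^{1/2}$, the optimum being attained at $\phi_j \propto \overline{a_j}$; a one-line index manipulation then identifies $\sum_j |a_j|^2$ with $\bra{\Omega}\rho_{RA}\ket{\Omega} = F(\rho_{RA},\dyad{\Omega}_{RA})^2$. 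I expect the main (and only mild) obstacle to be nothing more than correctly bookkeeping this dimension hypothesis and recognizing that the product states exhaust the purifications of the pure target state; the rest is an immediate appeal to Uhlmann's theorem.
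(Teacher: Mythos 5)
Your proof is correct and follows the same route as the paper, which simply cites Uhlmann's theorem (\cite[Theorem 3.22]{Watrous2018}); your key observation that the product states $\ket{\Omega}_{RA}\otimes\ket{\phi}$ exhaust the purifications of the pure state $\dyad{\Omega}_{RA}$ is exactly the point that makes that citation apply, and your backup Cauchy--Schwarz computation is a valid self-contained alternative.
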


Before we go on, we define the sets of states from which the routed qubit can be recovered by attacker Alice or Bob, to be returned to $V_0$ or $V_1$ respectively. Note that we will always write $A_X$ instead $A_X \otimes I_{X^c}$ for ease of notation, where $X$ is a system with complement $X^c$, $A$ an operator and $I$ the identity operator. 
\begin{defi} \label{defi:good-sets-routing}
Let $ \epsilon \in [0,1]$. We define $\mathcal S_0^{\epsilon, \mathrm{route}}$ as the set of states $\ket{\varphi}_{RA\tilde A A_c B \tilde B B_c}$ for which there exists a unitary $K_{A \tilde A B_c}$ such that $\mathcal P(\rho_{RA}, \dyad{\Omega}_{RA}) \leq \epsilon$, where $\rho$ is the reduced state of $K \ket{\varphi}$. Moreover, we define $\mathcal S_1^{\epsilon, \mathrm{route}}$ as the set of states $\ket{\varphi^\prime}_{RA\tilde A A_c B \tilde B B_c}$ for which there exists a unitary $L_{B \tilde B A_c}$ such that $\mathcal P(\rho^\prime_{RB}, \dyad{\Omega}_{RB}) \leq \epsilon$, where $\rho^\prime$ is the reduced state of $ L \ket{\varphi^\prime}$.
\end{defi}

Now, we consider a state that can be used to reveal the qubit at $V_0$ in the last step of a $q$-qubit strategy and a state that can be used to reveal the qubit at $V_1$ in the last step of the strategy. We prove a proposition which formalizes the idea that these two such states have to differ by at least a certain amount. This shows that the sets we just defined are disjoint if we choose $\epsilon$ small enough.
This proposition can be seen as a robust version of \cite[Lemma E.1]{Buhrman2013}.

\begin{prop} \label{prop:states-must-have-low-fidelity}
Let $0 \leq \epsilon \leq 0.41$ and let $\ket{\psi_0}_{RA\tilde A A_c B \tilde B B_c} \in \mathcal S_0^{\epsilon, \mathrm{route}}$, $\ket{\psi_1}_{RA\tilde A A_c B \tilde B B_c} \in \mathcal S_1^{\epsilon, \mathrm{route}}$. Then,
\begin{equation*}
    \mathcal P(\ket{\psi_0}, \ket{\psi_1}) > 0.046.
\end{equation*}
\end{prop}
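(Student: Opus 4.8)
The plan is to argue by contradiction using a monogamy-of-entanglement bound, which is precisely the robust analogue of \cite[Lemma E.1]{Buhrman2013} that the statement alludes to. Suppose for contradiction that $\delta := \mathcal P(\ket{\psi_0}, \ket{\psi_1}) \le 0.046$. By Definition \ref{defi:good-sets-routing} there is a unitary $K_{A\tilde A B_c}$ with $\mathcal P(\rho_{RA}, \dyad{\Omega}_{RA}) \le \epsilon$ for the reduced state of $K\ket{\psi_0}$, and a unitary $L_{B\tilde B A_c}$ with $\mathcal P(\rho'_{RB}, \dyad{\Omega}_{RB}) \le \epsilon$ for the reduced state of $L\ket{\psi_1}$. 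The crucial structural observation is that $K$ and $L$ act on \emph{disjoint} registers ($A\tilde A B_c$ versus $B\tilde B A_c$): they therefore commute, and $RA$ is untouched by $L$ while $RB$ is untouched by $K$.

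First I would apply the single unitary $W := KL = LK$ to both states, obtaining $\ket{\chi_0} := W\ket{\psi_0}$ and $\ket{\chi_1} := W\ket{\psi_1}$. Since $W$ is unitary, $\mathcal P(\ket{\chi_0}, \ket{\chi_1}) = \delta$. Because $L$ acts only on systems that are traced out when forming the $RA$-marginal, the reduced state $(\chi_0)_{RA}$ equals $(K\ket{\psi_0})_{RA}$ and is thus $\epsilon$-close to $\dyad{\Omega}_{RA}$; symmetrically $(\chi_1)_{RB}$ is $\epsilon$-close to $\dyad{\Omega}_{RB}$. Next I transport the good $RB$-marginal from $\chi_1$ over to $\chi_0$: monotonicity of the purified distance under the partial trace gives $\mathcal P((\chi_0)_{RB}, (\chi_1)_{RB}) \le \delta$, and the triangle inequality then yields $\mathcal P((\chi_0)_{RB}, \dyad{\Omega}_{RB}) \le \delta + \epsilon$. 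Thus the \emph{single} state $\ket{\chi_0}$ has both its $RA$- and its $RB$-marginal close to a state maximally entangled with the reference qubit $R$.

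The contradiction then comes from monogamy. Using $F(\sigma, \dyad{\Omega})^2 = \bra{\Omega}\sigma\ket{\Omega}$ together with $F^2 = 1 - \mathcal P^2$, the two closeness bounds read $\bra{\Omega}(\chi_0)_{RA}\ket{\Omega} \ge 1 - \epsilon^2$ and $\bra{\Omega}(\chi_0)_{RB}\ket{\Omega} \ge 1 - (\delta+\epsilon)^2$. On the other hand, for \emph{any} state of $RAB$ the sum of these two overlaps is at most the largest eigenvalue of $\dyad{\Omega}_{RA}\otimes I_B + I_A\otimes\dyad{\Omega}_{RB}$. I would compute this eigenvalue explicitly: the two rank-two projectors have ranges that never intersect, the largest overlap between unit vectors of the two ranges equals $1/2$, and hence the largest eigenvalue is $1 + \|(\dyad{\Omega}_{RA}\otimes I_B)(I_A\otimes\dyad{\Omega}_{RB})\| = 3/2$. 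Combining the bounds gives $2 - \epsilon^2 - (\delta+\epsilon)^2 \le 3/2$. But for $\epsilon \le 0.41$ and the assumed $\delta \le 0.046$ the left-hand side is at least $2 - (0.41)^2 - (0.456)^2 = 1.624 > 3/2$, a contradiction; hence $\mathcal P(\ket{\psi_0},\ket{\psi_1}) > 0.046$. In fact the same argument yields the stronger threshold $\sqrt{1/2 - \epsilon^2} - \epsilon$.

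The main obstacle is the monogamy estimate itself, namely pinning down the sharp constant $3/2$ for $\lambda_{\max}(\dyad{\Omega}_{RA}\otimes I_B + I_A\otimes\dyad{\Omega}_{RB})$; everything else is bookkeeping with the metric properties of the purified distance. The one conceptual point requiring care is the commutation of $K$ and $L$: it is exactly what lets me reduce a statement about the distance between two \emph{different} states to a monogamy statement about the marginals of a \emph{single} state, thereby avoiding the lossy route of passing through exact-recovery states via two applications of the triangle inequality (which, for $\epsilon$ as large as $0.41$, would produce a vacuous bound).
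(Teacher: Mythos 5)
Your proof is correct, but it takes a genuinely different route from the paper's. The paper applies Uhlmann's theorem (its Proposition \ref{prop:F_reduced}) to replace $K\ket{\psi_0}$ and $L\ket{\psi_1}$ by nearby product states $\ket{\Omega}_{RA}\otimes\ket{\phi_0}$ and $\ket{\Omega}_{RB}\otimes\ket{\phi_1}$, invokes the exact-case inner-product bound $|\langle K^\ast\Omega\otimes\phi_0|L^\ast\Omega\otimes\phi_1\rangle|\le\tfrac12$ of \cite[Lemma E.1]{Buhrman2013}, and then loses $2\epsilon$ through two triangle inequalities on the full pure states, arriving at $\mathcal P(\ket{\psi_0},\ket{\psi_1})\ge\tfrac{\sqrt3}{2}-2\epsilon$. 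You instead exploit the commutation of $K$ and $L$ to push both unitaries onto a single state $\ket{\chi_0}=LK\ket{\psi_0}$, transport the $RB$-closeness from $\ket{\chi_1}$ via data processing plus one triangle inequality, and then close with the monogamy eigenvalue bound $\lambda_{\max}(\dyad{\Omega}_{RA}\otimes I_B+I_A\otimes\dyad{\Omega}_{RB})=\tfrac32$, which I have checked (the cross term $\bra{\Omega}_{RA}\ket{\Omega}_{RB}=\tfrac12(\ket{0}_B\bra{0}_A+\ket{1}_B\bra{1}_A)$ has norm $\tfrac12$, so Jordan's lemma gives $1+\tfrac12$). The two key facts are really the same computation in different clothing, but the surrounding robustness arguments differ: yours avoids Uhlmann (and the compactness needed to attain the infimum there) and, because the error enters quadratically rather than linearly, yields the stronger bound $\sqrt{1/2-\epsilon^2}-\epsilon\approx 0.166$ at $\epsilon=0.41$, versus the paper's $0.046$; the paper's bound is tighter only for small $\epsilon$ (it gives $\sqrt3/2$ versus your $\sqrt{1/2}$ at $\epsilon=0$). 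Either version suffices for the downstream use in Lemma \ref{lem:classical_compression}, and your constant could even be propagated to relax the $\epsilon$-net parameters there.
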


\begin{proof}
By the definition of the sets in Definition \ref{defi:good-sets-routing}, there exist unitaries $K_{A \tilde A B_c}$ and $L_{B \tilde B A_c}$ such  that $\mathcal P(\rho_0, \dyad{\Omega}_{RA} ) \leq \epsilon$ and $\mathcal P(\rho_1, \dyad{\Omega}_{RB} ) \leq \epsilon$ for $\rho_0$ the reduced state on $RA$ of $K_{A \tilde AB_c} \ket{\psi_0}$, $\rho_1$ the reduced state on $RB$ of $L_{B \tilde B A_c} \ket{\psi_1}$. 
By Proposition \ref{prop:F_reduced} and compactness, we can find states $\ket{\phi_0}_{\tilde{A}A_cB\tilde{B}B_c}$ and $\ket{\phi_1}_{A\tilde{A}A_c\tilde{B}B_c}$ such that
\begin{align*}
    \mathcal P(K_{A \tilde AB_c} \ket{\psi_0}_{RA\tilde{A}A_cB\tilde{B}B_c}, \ket{\Omega}_{RA} \otimes \ket{\phi_0}_{\tilde{A}A_cB\tilde{B}B_c}) &= \mathcal P(\rho_0, \dyad{\Omega}_{RA} ), \\
    \mathcal P(L_{B \tilde B A_c} \ket{\psi_1}_{RA\tilde{A}A_cB\tilde{B}B_c}, \ket{\Omega}_{RB} \otimes \ket{\phi_1}_{A\tilde{A}A_c\tilde{B}B_c}) &=    \mathcal P(\rho_1, \dyad{\Omega}_{RB} ).
\end{align*}
Applying the triangle inequality twice and using the fact that $\mathcal P(U \ket{\varphi}, U \ket{\psi})=\mathcal P(\ket{\varphi}, \ket{\psi})$ for any unitary $U$ and any states $\ket{\varphi}$, $\ket{\psi}$, we obtain
\begin{equation*}
\mathcal P(\ket{\psi_0}, \ket{\psi_1}) \geq \mathcal P(K^\ast \ket{\Omega} \otimes \ket{\varphi_0}, L^\ast \ket{\Omega} \otimes \ket{\phi_1}) - \mathcal P(K \ket{\psi_0},  \ket{\Omega} \otimes \ket{\varphi_0}) - \mathcal P(L \ket{\psi_1}, \ket{\Omega} \otimes \ket{\varphi_1}).
\end{equation*}
We can estimate the last two terms on the right hand side as
\begin{align*}
\mathcal P(K \ket{\psi_0},  \ket{\Omega} \otimes \ket{\varphi_0}) & \leq \epsilon,\\
\mathcal P(L \ket{\psi_1}, \ket{\Omega} \otimes \ket{\varphi_1}) & \leq \epsilon.
\end{align*}
These inequalities hold by assumption. By the computations of \cite[Lemma E.1]{Buhrman2013} (repeated as Lemma \ref{lem:E1} for completeness), we can estimate the first term as
\begin{equation*}
    \mathcal P(K^\ast \ket{\Omega} \otimes \ket{\varphi_0}, L^\ast \ket{\Omega} \otimes \ket{\phi_1}) \geq \frac{\sqrt{3}}{2}.
\end{equation*}
Thus, 
\begin{equation*}
\mathcal P(\ket{\psi_0},\ket{\psi_1}) \geq \frac{\sqrt{3}}{2} - 2 \epsilon
\end{equation*}
and the assertion follows using the assumption $\epsilon \leq 0.41$.
\end{proof}

We will need a final easy lemma to convert between Euclidean distance and purified distance. It is a direct consequence of the fact that $1-x^2=(1-x)(1+x) \leq 2(1-x)$ for $x \in [0,1]$.
\begin{lem}\label{lem:euclidean_ball_fidelity}
Let $\ket{x}$, $\ket{y} \in \mathbb C^d$ be two unit vectors. Then,
\begin{equation*}
\mathcal P(\ket{x},\ket{y}) \leq \|\ket{x}-\ket{y}\|_2.
\end{equation*}
\end{lem}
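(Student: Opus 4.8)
The plan is to reduce everything to the single scalar $c := |\braket{x|y}| \in [0,1]$ and to compare the \emph{squares} of the two quantities, which is legitimate since both $\mathcal P(\ket{x},\ket{y})$ and $\|\ket{x}-\ket{y}\|_2$ are nonnegative. First I would expand the right-hand side directly from the definition of the Euclidean norm, using that $\ket{x}$ and $\ket{y}$ are unit vectors:
\begin{equation*}
\|\ket{x}-\ket{y}\|_2^2 = \braket{x|x} - \braket{x|y} - \braket{y|x} + \braket{y|y} = 2 - 2\operatorname{Re}\braket{x|y}.
\end{equation*}
On the left-hand side I would unfold the definition of the purified distance together with the pure-state fidelity $F(\ket{x},\ket{y}) = |\braket{x|y}| = c$, which gives $\mathcal P(\ket{x},\ket{y})^2 = 1 - c^2$.

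Next I would apply the factorization suggested in the statement, namely $1 - c^2 = (1-c)(1+c) \leq 2(1-c)$, which is valid because $c \in [0,1]$ forces $1 + c \leq 2$ and $1 - c \geq 0$. The one point that needs a little care is the passage from $c = |\braket{x|y}|$ back to the real part appearing in the norm: for any complex number one has $\operatorname{Re}\braket{x|y} \leq |\braket{x|y}| = c$, hence $2(1-c) \leq 2 - 2\operatorname{Re}\braket{x|y}$. Chaining these inequalities yields
\begin{equation*}
\mathcal P(\ket{x},\ket{y})^2 = 1 - c^2 \leq 2(1-c) \leq 2 - 2\operatorname{Re}\braket{x|y} = \|\ket{x}-\ket{y}\|_2^2,
\end{equation*}
and taking nonnegative square roots (using monotonicity of $\sqrt{\cdot}$ on $[0,\infty)$) gives the claim.

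I do not expect any genuine obstacle here, since the statement is elementary once both sides are rewritten in terms of the inner product. The only place where one could slip is in treating $\braket{x|y}$ as if it were real: it is the bound $\operatorname{Re}\braket{x|y} \leq |\braket{x|y}|$, rather than the factorization itself, that absorbs the possible phase and keeps the inequality pointing in the correct direction. Everything else is a direct computation.
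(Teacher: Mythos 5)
Your proof is correct and follows exactly the route the paper intends: the paper gives no separate proof beyond noting that the claim "is a direct consequence of the fact that $1-x^2=(1-x)(1+x) \leq 2(1-x)$ for $x \in [0,1]$," which is precisely the factorization you use. Your added observation that $\operatorname{Re}\braket{x|y} \leq |\braket{x|y}|$ is the right (and necessary) detail to make the sketch rigorous.
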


We observe that Proposition~\ref{prop:states-must-have-low-fidelity} implies that the attackers in some sense already decide before their communication step where the qubit can end up at the end of the protocol.
Therefore, if the dimension of the state they share is small enough, a classical description of the first part of their strategy yields a compression of $f$.
The classical compression is captured in the following notion of classical roundings:
 \begin{defi}[$(\epsilon, q)$-classical rounding] \label{def:classical-rounding}
Let $q$, $k$, $n \in \mathbb N$, $\epsilon > 0$. Then, 
\begin{equation*}
g:\{0,1\}^{3k} \to \{0,1\}    
\end{equation*}
is an \emph{$(\epsilon, q)$-classical rounding} of size $k$ if for all $f:\{0,1\}^{2n} \to \{0,1\}$, for all states $\ket{\psi}$ on $2q+1$ qubits, for all $l \in \{1, \ldots, 2^{2n}\}$ and for all $(\epsilon, l)$-perfect $q$-qubit strategies for \PVroutemod, there are functions $f_A:\{0,1\}^n \to \{0,1\}^{k}$, $f_B:\{0,1\}^n \to \{0,1\}^{k}$ and $\lambda \in \{0,1\}^{k}$ such that $g (f_A(x), f_B(y), \lambda) =f(x,y)$ on at least $l$ pairs $(x,y)$.
\end{defi}

The function $\tilde f$ defined as
\begin{equation} \label{eq:f-tilde}
  \tilde f(x,y) := g(f_A(x), f_B(y), \lambda)   \qquad \forall x, y \in \{0,1\}^n
\end{equation}
in a classical rounding hence measures how good the $q$-qubit strategy Alice and Bob use performs for the qubit routing specified by the function $f$. For example, if the strategy is an $(\epsilon, 2^{2n})$-perfect $q$-qubit strategy, then $f=\tilde f$.

Since the following statement holds for both the routing and the measure protocol, which we consider in the next section, we prove it here for both protocols, although the sets $\mathcal S_i^{\epsilon, \mathrm{meas}}$ are only defined in Definition \ref{def:good-sets-BB84}.

\begin{lem} \label{lem:classical_compression}
Let $\# \in \{\mathrm{route}, \mathrm{meas}\}$, $q \in \mathbb N$. Furthermore, let $0\leq\epsilon \leq \epsilon_0$, where $\epsilon_0$ is such that $\ket{\varphi_i} \in \mathcal S_i^{\epsilon, \#}$, $i \in \{1,2\}$ implies $\mathcal P(\ket{\varphi_0}, \ket{\varphi_1}) > 0.013$. 
Then, there is an $(\epsilon, q)$-classical rounding of size $\log(927) 2^{2q+2}$. 
\end{lem}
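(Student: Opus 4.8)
The plan is to convert the geometric separation from Proposition~\ref{prop:states-must-have-low-fidelity} (and, in the measuring case, its analogue, which the hypothesis packages as a $0.013$-separation of $\mathcal S_0^{\epsilon,\#}$ and $\mathcal S_1^{\epsilon,\#}$) into a two-party classical description. The conceptual ingredient is already available: in any $(\epsilon,l)$-perfect $q$-qubit strategy the \emph{pre-communication} state $\ket{\psi_{xy}} := U^x_{A\tilde A A_c}V^y_{B\tilde B B_c}\ket{\psi}$ already decides where the qubit can be revealed. Indeed, on each of the $l$ good pairs $(x,y)$ one has $\ket{\psi_{xy}}\in\mathcal S_{f(x,y)}^{\epsilon,\#}$: if $f(x,y)=0$ the witnessing unitary is the actual $K^{xy}$ on $A\tilde A B_c$, while the companion $L^{xy}$ acts on $B\tilde B A_c$, which is disjoint from $RA$ and therefore does not change the reduced state on $RA$; hence the success condition from the remark after the definition of $(\epsilon,l)$-perfection is exactly membership in $\mathcal S_0^{\epsilon,\#}$, and symmetrically for $f(x,y)=1$. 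Since the two sets are more than $0.013$ apart in purified distance, it suffices to \emph{classically} locate $\ket{\psi_{xy}}$ to within less than half of this separation.

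First I would fix two nets. Let $d=2^{2q+1}$ be the dimension of $RA\tilde A A_c B\tilde B B_c$. Choose a $\delta_{\mathrm{st}}$-net $\mathcal N_{\mathrm{st}}$ (in Euclidean distance) on the unit sphere of $\mathbb C^d\cong\mathbb R^{2d}$, and a $\delta_{\mathrm u}$-net $\mathcal N_{\mathrm u}$ (in operator norm) on the unitaries $U(2^q)$ acting on Alice's (equivalently Bob's) $q$ qubits. The standard volumetric covering bound gives $|\mathcal N_{\mathrm{st}}|\le(1+2/\delta_{\mathrm{st}})^{2d}$ and $|\mathcal N_{\mathrm u}|\le(C/\delta_{\mathrm u})^{4^q}$ for an absolute constant $C$. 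Given a strategy, I set $\lambda$ to index a net point $\hat{\ket{\psi}}$ nearest $\ket{\psi}$, and $f_A(x)$, $f_B(y)$ to index net points $\hat U_x$, $\hat V_y$ nearest $U^x$, $V^y$. Crucially, the final operations $K^{xy},L^{xy}$ (resp.\ the attackers' measurements in the measuring case) are never encoded, which is exactly why a single universal $g$ can serve both protocol flavours.

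The rounding function $g$ decodes $\hat U_x,\hat V_y,\hat{\ket{\psi}}$ from its three arguments, forms $\hat{\ket{\psi}}_{xy}:=\hat U_x\hat V_y\hat{\ket{\psi}}$, and outputs $0$ if the purified distance from $\hat{\ket{\psi}}_{xy}$ to the compact set $\mathcal S_0^{\epsilon,\#}$ is at most a threshold $\tau$, and $1$ otherwise. A triangle-inequality estimate, using unitary invariance of the norm and Lemma~\ref{lem:euclidean_ball_fidelity} to pass from Euclidean to purified distance, yields
\begin{equation*}
\mathcal P\big(\hat{\ket{\psi}}_{xy},\ket{\psi_{xy}}\big)\le\|\hat{\ket{\psi}}-\ket{\psi}\|_2+\|\hat U_x-U^x\|_\infty+\|\hat V_y-V^y\|_\infty\le\delta_{\mathrm{st}}+2\delta_{\mathrm u}=:\eta .
\end{equation*}
Thus on a good pair with $f(x,y)=0$ one has $\mathrm{dist}(\hat{\ket{\psi}}_{xy},\mathcal S_0^{\epsilon,\#})\le\eta$, whereas $f(x,y)=1$ forces $\ket{\psi_{xy}}\in\mathcal S_1^{\epsilon,\#}$ and hence $\mathrm{dist}(\hat{\ket{\psi}}_{xy},\mathcal S_0^{\epsilon,\#})\ge0.013-\eta$. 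Any $\tau\in[\eta,0.013-\eta)$ then makes $g(f_A(x),f_B(y),\lambda)=f(x,y)$ on all $l$ good pairs, as Definition~\ref{def:classical-rounding} demands; this is possible precisely when $\eta<0.0065$.

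The remaining task, and the only place I expect real friction, is balancing the two opposing constraints that simultaneously fix the constant $927$: the radii must be small enough that $\eta<0.0065$, yet the \emph{state} net (which dominates the bit count) must be coarse enough that $|\mathcal N_{\mathrm{st}}|\le927^{2d}$, so that $\lambda$ — and hence $k$ — needs exactly $\log(927)\,2d=\log(927)\,2^{2q+2}$ bits. Taking $\delta_{\mathrm{st}}=2/926$ makes $1+2/\delta_{\mathrm{st}}=927$ and contributes about $0.00216$ to $\eta$; a unitary radius $\delta_{\mathrm u}$ just above $0.002$ keeps $2\delta_{\mathrm u}$ below $0.0044$, so $\eta<0.0065$. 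Finally I would check that the unitary indices fit inside $k$ bits: $\log_2|\mathcal N_{\mathrm u}|\le 4^q\log_2(C/\delta_{\mathrm u})$ is far below $k=4\cdot 4^q\log_2 927$, so $f_A$ and $f_B$ can simply be padded to length $k$. This establishes an $(\epsilon,q)$-classical rounding of the claimed size uniformly in $\#\in\{\mathrm{route},\mathrm{meas}\}$.
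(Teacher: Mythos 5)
Your proposal is correct and follows essentially the same route as the paper's proof: $\delta$-nets on the initial state and on the pre-communication unitaries, a rounding function $g$ defined by proximity of the decoded net point $\hat U_x\otimes\hat V_y\hat{\ket{\psi}}$ to the sets $\mathcal S_0^{\epsilon,\#}$, $\mathcal S_1^{\epsilon,\#}$, a triangle-inequality/unitary-invariance estimate via Lemma~\ref{lem:euclidean_ball_fidelity} showing the net point stays within $0.0065$ of the true state, and the volumetric covering bound yielding the constant $927$ with the state net of real dimension $2^{2q+2}$ dominating the bit count. The only cosmetic differences are that you threshold the distance to $\mathcal S_0^{\epsilon,\#}$ alone rather than comparing distances to both sets, and your perturbation bound $\delta_{\mathrm{st}}+2\delta_{\mathrm u}$ is slightly tighter than the paper's $3\delta+3\delta^2+\delta^3$; neither changes the argument.
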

\begin{proof}
 We consider $\delta = 0.00216$. Let us choose a $\delta$-net $\mathcal N_S$ in Euclidean norm for the set of pure states on $2q+1$ qubits, where the net has cardinality at most $2^{k}$. Likewise, let us choose $\delta$-nets $\mathcal N_A$ and $\mathcal N_B$ in operator norm for the set of unitaries in dimension $2^{q}$, where the nets have cardinalities at most $2^k$ each. We will show at the end of the proof that we can choose $k$ as in the assertion.
 
Let us now construct the $(\epsilon, d)$-classical rounding $g$ as in Definition \ref{def:classical-rounding}. Let $x^\prime \in \{0,1\}^{k}$, $y^\prime \in \{0,1\}^{k}$ and $\lambda \in \{0,1\}^{k}$ and let $U \in \mathcal N_A$ be the element with index $x^\prime$, $V \in \mathcal N_B$ be the element with index $y^\prime$ and $\ket{\phi} \in \mathcal N_S$ be the element with index $\lambda$. Then, we define $g(x^\prime, y^\prime, \lambda) = 0$ if $U_{A \tilde A A_c} \otimes V_{B \tilde B B_c} \ket{\phi}$ is closer to $\mathcal S_0^{\epsilon, \#}$ than to $\mathcal S_1^{\epsilon, \#}$ in purified distance and $g(x^\prime, y^\prime, \lambda) = 1$ if $U_{A \tilde A A_c} \otimes V_{B \tilde B B_c} \ket{\phi}$ is closer to $\mathcal S_1^{\epsilon, \#}$ than to $\mathcal S_0^{\epsilon, \#}$ in purified distance. If neither is the case, we make the arbitrary choice $g(x^\prime, y^\prime, \lambda) = 1$. Since the assumption on $\epsilon_0$ implies that $\mathcal S_0^{\epsilon, \#} \cap \mathcal S_1^{\epsilon, \#} = \emptyset$, this is a well-defined function.
 
 It remains to show that $g$ is indeed an $(\epsilon, q)$-classical rounding. We consider an arbitrary $f:\{0,1\}^{2n} \to \{0,1\}$ and an arbitrary state on $2q+1$ qubits $\ket{\psi}$. Let $\ket{\psi}$ and $\{U^x, V^y\}_{xy}$ be from a $q$-qubit strategy for $PV_{\#}^f$. We choose $\lambda$ as the index of the closest element from $\mathcal N_S$ to $\ket{\psi}$ in Euclidean norm. Moreover, we choose $f_A(x)$ to be the closest element from $\mathcal N_A$ to $U^x$ in operator norm and $f_B(y)$ to be the closest element from $\mathcal N_B$ to $V^y$ in operator norm. If the closest element is not unique, we make an arbitrary choice. We claim that if $U^x \otimes V^y \ket{\psi} \in \mathcal S^{\epsilon, \#}_0$, then $U \otimes V \ket{\phi}$ is closer to $\mathcal S^{\epsilon, \#}_0$ than to $\mathcal S^{\epsilon, \#}_1$, where $(U, V, \ket{\phi})$ are the elements from the nets corresponding to $(f_A(x), f_B(y), \lambda)$. In particular, we will show that for $\delta$ as chosen above,
 \begin{equation} \label{eq:still-close}
     \mathcal P(U^x \otimes V^y \ket{\psi}, U \otimes V \ket{\phi}) < 0.0065.
 \end{equation}
 Since $\mathcal P(\ket{\psi_0}, \ket{\psi_1}) > 0.013$  for $\ket{\psi_0} \in \mathcal S_0^{\epsilon,\#}$ and $\ket{\psi_1} \in \mathcal S_1^{\epsilon,\#}$, the claim then follows by the triangle inequality for the purified distance. Thus, we now prove \eqref{eq:still-close}. Let $\Delta_A := U^x - U$, $\Delta_B := V^y - V$ and $\ket{\Delta_S} = \ket{\psi} - \ket{\phi}$. Note that $\|\Delta_A\|_\infty \leq \delta$, $\|\Delta_B\|_\infty \leq \delta$ and $\|\ket{\Delta_S}\|_2 \leq \delta$. Indeed, using Lemma~\ref{lem:euclidean_ball_fidelity}, 
 \begin{align*}
     \mathcal P(U^x \otimes V^y \ket{\psi}, U \otimes V \ket{\phi}) & \leq \|U^x \otimes V^y \ket{\psi} - U \otimes V \ket{\phi}\|_2 \\
     & \leq \|(U + \Delta_A) \otimes (V + \Delta_B) (\ket{\phi} + \ket{\Delta_S}) - U \otimes V \ket{\phi}\|_2\\
     & \leq 3 \delta + 3 \delta^2 + \delta^3.
 \end{align*}
 In the last line, we have used the triangle inequality together with $\|X \otimes Y\ket{\eta}\|_2 \leq \|X\|_\infty \|Y\|_\infty \|\ket{\eta}\|_2$. For $\delta = 0.00216$, we can compute $3 \delta + 3 \delta^2 + \delta^3 < 0.0065$ and  \eqref{eq:still-close} follows.
 
 Finally, consider an $(\epsilon, l)$-perfect strategy for $PV^f_{\#}$ and let $(x, y)$ be such that the attackers are caught with probability at most $\epsilon^2$. Without loss of generality, let $(x, y)$ be such that $f(x,y) = 0$. Then, it holds in particular that $U^x \otimes V^y \ket{\psi} \in \mathcal S_0^{\epsilon, \#}$. Thus, using \eqref{eq:still-close}, it follows that $g(f_A(x), f_B(y), \lambda) = f(x,y)$ on such a pair $(x ,y)$. Since there are at least $l$ pairs $(x ,y)$ which achieve low detection probability for an $(\epsilon, l)$-perfect $q$-qubit strategy, $g(f_A(x) ,f_B(y), \lambda) = f(x,y)$ on at least $l$ pairs $(x,y)$. Hence, $g$ is an $(\epsilon, q)$-classical rounding.
 
 In order to conclude the proof, we must estimate $k$. Lemma 9.5 of \cite{ledoux1991probability} implies that $\mathcal N_A$, $\mathcal N_B$, $\mathcal N_S$ can be chosen to have cardinality at most
 \begin{equation*}
     |\mathcal N_A| \leq (927)^{2^{2q+1}}, \qquad |\mathcal N_B| \leq (927)^{2^{2q+1}} \quad \textrm{and} \quad |\mathcal N_S| \leq (927)^{2^{2q+2}}.
 \end{equation*}
 Taking the logarithm, the desired bounds on the size of the classical rounding follow.
\end{proof}

The next statement says that if we fix a number of qubits $q$ and an error $\epsilon$, the attackers will get a large fraction of the inputs $(x,y)$ wrong in any $q$-qubit strategy if we choose $f$ to be random and if the number of qubits of the state $\ket{\psi}$ in the strategy is not too large.
\begin{lem} \label{lem:stable_E4}
Let $\epsilon \in [0,1]$ $n$, $k$, $q \in \mathbb N$, $n \geq 10$. Moreover, fix an $(\epsilon, q)$-classical rounding $g$ of size $k$ with  
    $k=\log(927) 2^{2q+2}$. Let
\begin{equation*}
q \leq \frac{1}{2}n-5.
\end{equation*}
Then, a uniformly random $f: \{0,1\}^{2n} \to \{0,1\}$ fulfills the following with probability at least $1 - 2^{-2^{n}}$:
 For any $f_A:\{0,1\}^n \to \{0,1\}^{k}$, $f_B:\{0,1\}^n \to \{0,1\}^{k}$, $\lambda \in \{0,1\}^{k}$, the equality $g(f_A(x), f_B(y), \lambda) = f(x,y)$ holds on less than $3/4$ of all pairs $(x,y)$.
\end{lem}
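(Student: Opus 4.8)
The plan is to bound the number of functions of the form $\tilde f(x,y)=g(f_A(x),f_B(y),\lambda)$ and show this count is far too small compared to $2^{2^{2n}}$, the total number of Boolean functions on $2n$ bits, so that a union bound over all such $\tilde f$ controls the probability that a random $f$ agrees with any of them on a $3/4$-fraction of inputs. This is the counting argument announced in the Methods section, and it is the natural way to exploit that the classical rounding $g$ has size only $k=\log(927)2^{2q+2}$, which is polynomial in $2^{2q}$ and hence small when $q\le n/2-5$.

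\emph{First I would} count the functions $\tilde f$ that can arise. The triple $(f_A,f_B,\lambda)$ determines $\tilde f$, so I bound the number of possible triples. Since $f_A:\{0,1\}^n\to\{0,1\}^k$ ranges over at most $(2^k)^{2^n}=2^{k2^n}$ choices, and likewise $f_B$, while $\lambda\in\{0,1\}^k$ contributes only $2^k$, the total number of achievable functions $\tilde f$ is at most $2^{2k2^n+k}$. Substituting $k=\log(927)2^{2q+2}$ and using $q\le n/2-5$ so that $2^{2q+2}\le 2^{n-8}$, the exponent $2k2^n+k$ becomes at most $\log(927)\cdot 2^{n-7}\cdot 2^n$ up to the negligible additive $k$, i.e. of order $2^{2n}/2^{7}\cdot\log(927)$. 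The key numerical point is that $2\log(927)\cdot 2^{-7}$ is strictly less than $1$, so the number of functions $\tilde f$ is at most $2^{c\,2^{2n}}$ for some constant $c<1$ that I would verify explicitly (here $c\approx \log(927)/64\approx 0.154$).

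\emph{Next I would} estimate, for a single fixed $\tilde f$, the probability that a uniformly random $f$ agrees with $\tilde f$ on at least $3/4$ of the $N:=2^{2n}$ input pairs. Viewing $f$ as a uniformly random string in $\{0,1\}^N$ and $\tilde f$ as fixed, the number of agreements is a $\mathrm{Binomial}(N,1/2)$ variable, and the probability it exceeds $3N/4$ is controlled by a standard tail/Hoeffding bound of the form $2^{-N(1-h(1/4))}$, where $h$ is the binary entropy introduced in the preliminaries. Since $1-h(1/4)>0$ is an absolute constant (roughly $0.19$), this single-function probability is $2^{-c'N}$ with $c'>0$. I would then apply the union bound over the at most $2^{c2^{2n}}=2^{cN}$ functions $\tilde f$: the failure probability is at most $2^{cN}\cdot 2^{-c'N}=2^{-(c'-c)N}$, and because the constants satisfy $c'-c>0$ (for $n\ge 10$, matching the stated hypothesis), this is at most $2^{-2^n}$ as claimed.

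\emph{The main obstacle} is the bookkeeping with the explicit constants: I must confirm that the entropy-deficit constant $1-h(1/4)$ from the tail bound genuinely dominates the counting constant coming from $\log(927)$ and the factor $2^{-7}$ from the slack $q\le n/2-5$, and that the combined exponent clears the target $2^{-2^n}$ rather than merely $2^{-c''N}$. Getting the "$5$" in $n/2-5$ exactly right, together with the precise net-size constant $927$, is where the argument is delicate, since a weaker slack would flip the inequality $c'>c$. Everything else — the product bound on triples and the binomial tail estimate — is routine, so I would present those quickly and concentrate the care on verifying the final numerical comparison of exponents for all $n\ge 10$.
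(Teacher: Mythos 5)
Your proposal is correct and follows essentially the same route as the paper: a union bound over the at most $2^{(2^{n+1}+1)k}$ achievable functions $\tilde f$, combined with the Hamming-ball estimate $V(2^{2n},2^{2n-2})\le 2^{2^{2n}h(1/4)}$ (your binomial-tail bound is the same inequality in probabilistic dress), and the final comparison of exponents using $k=\log(927)2^{2q+2}$ and $q\le n/2-5$. The only quibble is a harmless factor-of-two slip in your counting constant (the correct value is $c\approx\log(927)/128\approx 0.077$, not $0.154$), which does not affect the conclusion since $1-h(1/4)\approx 0.189$ still dominates with room to spare for all $n\ge 10$.
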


\begin{proof}
Let $\tilde f$ be as in \eqref{eq:f-tilde}. Definition \ref{def:classical-rounding} states that given $q \in \mathbb N$ and $\epsilon>0$, the number of functions~$\tilde f$ that Alice and Bob can implement only depends on the number of choices for $f_A$, $f_B$, $\lambda$ (since $g$ is fixed given $\epsilon$ and $q$). Thus, for an $(\epsilon, q)$-classical rounding of size $k \in \mathbb N$, they can implement $2^{ (2^{n+1}+1) k}$ possible functions. Therefore, we want to estimate the probability that for a randomly chosen $f$, we can find $f_A$ and $f_B$ such that the corresponding function $\tilde f$ lies within Hamming distance $1/4\cdot 2^{2n}$ of $f$.
Hence,
\begin{align*}
&\mathbb P(f: \exists f_A, f_B, \lambda \mathrm{~s.t.~} d_H(M_f, M_{\tilde f}) \leq 2^{2n-2}) \\
& = \frac{|\{f: \exists f_A, f_B, \lambda \mathrm{~s.t.~} d_H(M_f, M_{\tilde f}) \leq 2^{2n-2}\}|}{|\{f:\{0,1\}^{2n} \to \{0,1\}\}|} \\
&\leq \frac{|\{f: \exists f_A, f_B, \lambda \mathrm{~s.t.~} f = \tilde f\}|\cdot|V(2^{2n}, 2^{2n-2})|}{|\{f:\{0,1\}^{2n} \to \{0,1\}\}|}\\
&\leq 2^{ (2^{n+1}+1)k} 2^{2^{2n}h(1/4)} 2^{-2^{2n}}
\end{align*}
For the first equality, we use the fact that the function $f$ is drawn uniformly at random. For the first inequality, we estimate the numerator by considering a ball in Hamming distance around every function $\tilde f$ we can express by suitable $f_A$, $f_B$, $\lambda$. In the last line, we have used \eqref{eq:entropy_estimate}. Using $k = \log(927) 2^{2q+2}$ and $q\leq n/2-5$, we infer that $\mathbb P(f: \exists f_A, f_B, \lambda \mathrm{~s.t.~} d_H(M_f, M_{\tilde f}) \leq 2^{2n-2}) $ is strictly bounded from above by $2^{-2^{n}}$. 
\end{proof}

In Lemma \ref{lem:stable_E4}, we have shown that for any state $\ket{\psi}$, a random function $f$ has large Hamming distance to any $\tilde f$ if the dimension of the state is small enough. In particular, this means that any $(\epsilon,3/4\cdot 2^{2n})$-perfect $q$-qubit strategy needs a number of qubits which is linear in the number of classical bits. The following proposition makes this precise.
\begin{prop}\label{prop:after-embezzlement} Let $0 \leq \epsilon \leq 0.41$ and $n \geq 10$, $q$, $n \in \mathbb N$. Then, a uniformly random function $f:\{0,1\}^{2n} \to \{0,1\}$ has the following property with probability at least $1-2^{-2^{n}}$: 
Any $(\epsilon,\frac{3}{4}\cdot 2^{2n})$-perfect $q$-qubit strategy for \PVroutemod requires
\begin{equation}\label{eq:dimension-bound}
q > \frac{1}{2} n - 5,
\end{equation}
where $\ket{\psi}$ is a state on $2q+1$ qubits.
\end{prop}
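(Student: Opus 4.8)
The plan is to combine the three preceding results---Proposition \ref{prop:states-must-have-low-fidelity}, Lemma \ref{lem:classical_compression}, and Lemma \ref{lem:stable_E4}---into a single proof by contradiction, so the actual work is mostly a matter of chaining them with the right quantifier order. First I would verify that the hypothesis of Lemma \ref{lem:classical_compression} is met for the routing protocol. By Proposition \ref{prop:states-must-have-low-fidelity}, whenever $0 \leq \epsilon \leq 0.41$ and $\ket{\varphi_i} \in \mathcal S_i^{\epsilon, \mathrm{route}}$, we have $\mathcal P(\ket{\varphi_0}, \ket{\varphi_1}) > 0.046 > 0.013$; hence $\epsilon_0 = 0.41$ is admissible and Lemma \ref{lem:classical_compression} produces an $(\epsilon, q)$-classical rounding $g$ of size $k = \log(927)\, 2^{2q+2}$. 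The crucial point is that this $g$ is constructed once and for all, independently of any function $f$, so that it is legitimate to feed it into Lemma \ref{lem:stable_E4}.

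Next I would apply Lemma \ref{lem:stable_E4} to this fixed $g$. Since $q \leq \frac{1}{2}n - 5$ and $n \geq 10$, the lemma guarantees that a uniformly random $f$ satisfies, with probability at least $1 - 2^{-2^n}$, the property that for every choice of $f_A$, $f_B$, $\lambda$ the agreement $g(f_A(x), f_B(y), \lambda) = f(x,y)$ holds on strictly fewer than $\frac{3}{4}\cdot 2^{2n}$ of the pairs $(x,y)$. This is precisely the event whose probability we want to control, and its failure probability $2^{-2^n}$ is exactly the bound claimed in the proposition.

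To close the argument I would proceed by contradiction: fix an $f$ in the good event above and suppose, for some $q \leq \frac{1}{2}n - 5$, that there were an $(\epsilon, \frac{3}{4}\cdot 2^{2n})$-perfect $q$-qubit strategy for \PVroutemod. Instantiating Definition \ref{def:classical-rounding} with $l = \frac{3}{4}\cdot 2^{2n}$ then yields functions $f_A$, $f_B$ and a string $\lambda$ with $g(f_A(x), f_B(y), \lambda) = f(x,y)$ on at least $\frac{3}{4}\cdot 2^{2n}$ pairs, directly contradicting the property just extracted from Lemma \ref{lem:stable_E4}. Hence no such strategy can exist when $q \leq \frac{1}{2}n - 5$, which is equivalent to the stated conclusion that every $(\epsilon, \frac{3}{4}\cdot 2^{2n})$-perfect strategy forces $q > \frac{1}{2}n - 5$.

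I do not expect a genuine technical obstacle here, since the hard quantitative estimates---the net-size counting and the entropy bound of \eqref{eq:entropy_estimate}---are already absorbed into Lemmas \ref{lem:classical_compression} and \ref{lem:stable_E4}. The one point that needs care is the quantifier order: the classical rounding $g$ must be independent of $f$ so that the high-probability statement of Lemma \ref{lem:stable_E4} can be applied to a single fixed $g$, and the threshold $l = \frac{3}{4}\cdot 2^{2n}$ in the definition of $(\epsilon,l)$-perfect strategies must line up exactly with the $3/4$-fraction appearing in Lemma \ref{lem:stable_E4}. Checking that the admissible range $\epsilon \in [0, 0.41]$ indeed satisfies the $0.013$-separation hypothesis of Lemma \ref{lem:classical_compression} is the only place where the constants must be matched by hand.
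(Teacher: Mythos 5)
Your proposal is correct and follows essentially the same route as the paper's own proof: invoke Proposition \ref{prop:states-must-have-low-fidelity} to satisfy the separation hypothesis of Lemma \ref{lem:classical_compression}, obtain the $f$-independent classical rounding $g$ of size $k=\log(927)2^{2q+2}$, apply Lemma \ref{lem:stable_E4} to a uniformly random $f$, and derive a contradiction from Definition \ref{def:classical-rounding} with $l=\frac{3}{4}\cdot 2^{2n}$. Your explicit attention to the quantifier order and to matching the constants $0.046 > 0.013$ is exactly the care the paper's argument relies on implicitly.
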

\begin{proof}
We prove the statement by contradiction. Let $g$ be the $(\epsilon, q)$-classical rounding of size $k$, where $k=\log(927) 2^{2q+2}$,
which is guaranteed to exist by Proposition \ref{prop:states-must-have-low-fidelity} and Lemma \ref{lem:classical_compression}. Assume that $q \leq \frac{1}{2} n - 5$. Pick a function $f: \{0,1\}^{2n}\to \{0,1\}$ such that for any $f_A:\{0,1\}^n \to \{0,1\}^{k}$, $f_B:\{0,1\}^n \to \{0,1\}^{k}$, $\lambda \in \{0,1\}^{k}$ and $\tilde f$ as in \eqref{eq:f-tilde}, the equality 
$f(x,y) = \tilde f(x,y)$ holds on less than $3/4$ of all pairs $(x,y)$. By Lemma \ref{lem:stable_E4}, a uniformly random $f$ will have this property with probability at least $1 - 2^{-2^{n}}$. 

Let $\ket{\psi}$ be a state on $2q+1$ qubits and assume that there is a $(\epsilon,\frac{3}{4}\cdot 2^{2n})$-perfect $q$-qubit strategy for \PVroutemod. Then, the corresponding $f_A$, $f_B$, $\lambda$ satisfy    $g(f_A(x), f_B(y), \lambda) = f(x,y)$ on at least $\frac{3}{4}\cdot 2^{2n}$ pairs $(x,y)$. However, this is a contradiction to the choice of $f$.
\end{proof}

Finally, we can rephrase the previous theorem as a statement about the probability that Alice and Bob are caught by the verifiers. 
\begin{thm} \label{thm:main-result}
Let $f:\{0,1\}^{2n} \to \{0,1\}$, $n \geq 10$ and let 
\begin{equation*}
q \leq \frac{1}{2} n- 5.
\end{equation*}
Let us assume that the verifiers choose a function $f$ uniformly at random before the protocol and that they choose $x$, $y$ uniformly at random during the protocol. Moreover, let us assume that Alice and Bob control at most $q$ qubits each at the beginning of the protocol. Then, the attackers are caught during \PVroutemod with probability at least $4\cdot 10^{-2}$.
\end{thm}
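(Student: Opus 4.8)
The plan is to convert the dimension bound of Proposition~\ref{prop:after-embezzlement} into a statement about the average detection probability by a short counting-and-averaging argument. Essentially all the conceptual work has already been done by the monogamy-type separation, the net/rounding construction, and the counting lemma; what remains is bookkeeping together with the right choice of the error parameter~$\epsilon$.

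First I would fix $\epsilon = 0.41$, the largest value permitted by Proposition~\ref{prop:states-must-have-low-fidelity}. For this choice the sets $\mathcal S_0^{\epsilon, \mathrm{route}}$ and $\mathcal S_1^{\epsilon, \mathrm{route}}$ are separated by more than $0.046 > 0.013$ in purified distance, so the hypothesis of Lemma~\ref{lem:classical_compression} is met and the $(\epsilon, q)$-classical rounding used by Proposition~\ref{prop:after-embezzlement} exists. Invoking Proposition~\ref{prop:after-embezzlement} with this $\epsilon$, I conclude that a uniformly random $f$ has, with probability at least $1 - 2^{-2^{n}}$, the property that no $(\epsilon, \tfrac{3}{4}\cdot 2^{2n})$-perfect $q$-qubit strategy for \PVroutemod exists once $q \leq \tfrac12 n - 5$. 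Call such an $f$ \emph{good} and condition on it; by Remark~\ref{rmk:random-does-not-help} it suffices to consider deterministic strategies, and a strategy in which the attackers control fewer than $q$ qubits each is a special case of a $q$-qubit strategy after padding with ancillas.

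The core step is then to unfold the definition of $(\epsilon, l)$-perfection. Since for a good $f$ the attackers' (arbitrary, possibly $f$-dependent) strategy is not $(\epsilon, \tfrac34 \cdot 2^{2n})$-perfect, strictly fewer than $\tfrac34 \cdot 2^{2n}$ input pairs can have detection probability at most $\epsilon^2$; hence strictly more than $\tfrac14 \cdot 2^{2n}$ pairs $(x,y)$ have detection probability exceeding $\epsilon^2$. Averaging over the uniformly chosen $(x,y)$ then gives
\begin{equation*}
\mathbb P(\mathrm{caught}) = \frac{1}{2^{2n}} \sum_{x,y} \mathbb P(\mathrm{caught} \mid x,y) > \frac{1}{2^{2n}} \cdot \frac{1}{4} \cdot 2^{2n} \cdot \epsilon^2 = \frac{\epsilon^2}{4}.
\end{equation*}
With $\epsilon = 0.41$ this yields $\mathbb P(\mathrm{caught}) > \tfrac14 (0.41)^2 = 0.042\ldots > 4 \cdot 10^{-2}$. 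Folding in the failure probability $2^{-2^{n}}$ over the choice of $f$, which is negligible for $n \geq 10$, leaves the bound above $4 \cdot 10^{-2}$ whether one reads the probability as conditioned on a good $f$ or jointly over $f,x,y$.

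I do not expect a genuine obstacle here, since the heavy lifting lives in Proposition~\ref{prop:states-must-have-low-fidelity}, Lemma~\ref{lem:classical_compression} and Lemma~\ref{lem:stable_E4}. The only points requiring care are to take $\epsilon$ as large as the constraints allow so that $\epsilon^2/4$ clears the target $4 \cdot 10^{-2}$, and to keep the quantifier order straight: the random $f$ is fixed first, after which the detection bound holds uniformly over \emph{all} admissible attacker strategies, so that choosing the strategy with knowledge of $f$ (Kerckhoffs's principle) confers no advantage.
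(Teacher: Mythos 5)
Your proposal is correct and follows essentially the same route as the paper: invoke Proposition~\ref{prop:after-embezzlement} to get a good $f$ with probability $1-2^{-2^{n}}$, unfold the failure of $(\epsilon,\tfrac34\cdot 2^{2n})$-perfection to conclude that more than a quarter of the input pairs are detected with probability exceeding $\epsilon^2$, and multiply $\tfrac14\cdot(0.41)^2$ by the (negligibly less than one) probability that $f$ is good. Your added remarks on padding with ancillas and on Remark~\ref{rmk:random-does-not-help} are consistent with, and slightly more explicit than, the paper's own write-up.
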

\begin{proof}
Let  $0 <\epsilon \leq 0.41$. By Proposition \ref{prop:after-embezzlement}, with probability at least $1-2^{-2^{n}}$ the function $f$ is such that there are no $( \epsilon,3/4\cdot2^{2n})$-perfect $q$-qubit strategies for \PVroutemod. 
That means that for any strategy Alice and Bob can implement with their state, on a fraction at least $1/4$ of the possible bit strings $(x,y)$, the final reduced state will be at least $\epsilon$ away in purified distance from the maximally entangled state.

That means, that the measurement $\{\dyad{\Omega}, I_4 - \dyad{\Omega}\}$ on such an input pair catches them cheating with probability at least
\begin{equation*}
1 - F(\rho, \dyad{\Omega})^2 > \epsilon^2.
\end{equation*}
Multiplying all these probabilities and using that $n \geq 10$, we obtain the bound in the assertion.
\end{proof}

In order to increase the probability with which the attackers are caught, it is possible to repeat the protocol sequentially several times, as the following proposition shows. It is important to note that Alice and Bob are not allowed to go to $z$, the position of the honest prover, during the time the protocol runs. The implicit assumption of PBC is that $z$ is in some secure zone like a bank, for example, that attackers do not have access to.

\begin{prop} \label{prop:repitition}
Let $f:\{0,1\}^{2n} \to \{0,1\}$, $n \geq 10$, $r$, $q$, $n \in \mathbb N$ and let 
\begin{equation*}
q \leq \frac{1}{2} n- 5.
\end{equation*}
Let us assume that the verifiers choose a function $f$ uniformly at random at the beginning and that they run the protocol \PVroutemod sequentially $r$-times, choosing $x$, $y$ uniformly at random each time. Moreover, let us assume that Alice and Bob control at most $q$ qubits each at the beginning of each iteration of \PVroutemod. Then, the attackers are caught with probability at least $1 - 0.96^r$.
\end{prop}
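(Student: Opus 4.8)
The plan is to reduce the $r$-round statement to the single-round analysis of Theorem~\ref{thm:main-result} by a conditioning argument. Since the attackers may retain quantum memory between rounds and adapt their behaviour to the history, I would not treat the rounds as independent. Instead I would bound the probability of surviving round $i$ \emph{conditioned} on having survived rounds $1,\dots,i-1$, and then multiply these conditional survival probabilities using the chain rule.

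First I would condition on the event $G$ that the single, once-chosen function $f$ is ``good'' in the sense of Proposition~\ref{prop:after-embezzlement}, i.e.\ that it admits no $(\epsilon,\frac34\cdot 2^{2n})$-perfect $q$-qubit strategy for $\epsilon = 0.41$; by that proposition $\mathbb{P}(G)\ge 1-2^{-2^{n}}$. Fix such an $f$. The crucial observation is that at the start of round $i$ the attackers hold, by hypothesis, at most $q$ qubits each, so together with the verifier's freshly prepared reference qubit their joint state $\rho_i$ lives on $2q+1$ qubits, exactly the setting of Definition~\ref{defi:strategy}. This $\rho_i$ may be entangled, correlated with the past, and already conditioned on survival, but Proposition~\ref{prop:after-embezzlement} quantifies over \emph{all} starting states on $2q+1$ qubits, so it applies verbatim. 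Writing $\rho_i$ as a mixture of pure states and using that the (non-)detection probability is linear in the state (Remark~\ref{rmk:random-does-not-help}), it suffices to treat pure $\rho_i$: for such a state no $(\epsilon,\frac34\cdot 2^{2n})$-perfect strategy exists, hence on at least $1/4$ of the uniformly random pairs $(x,y)$ the attackers are caught with probability greater than $\epsilon^2$, giving a round-$i$ catch probability of at least $\frac14\epsilon^2$. As the round-$i$ inputs are drawn freshly and independently of the history, the conditional survival probability of round $i$ is at most $1-\frac14\epsilon^2$.

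Multiplying over the $r$ rounds then yields $\mathbb{P}(\text{survive all }r \mid G)\le (1-\tfrac14\epsilon^2)^r$. Finally I would remove the conditioning via
$\mathbb{P}(\text{survive all }r)\le \mathbb{P}(\text{survive all }r \mid G)+\mathbb{P}(G^c)\le (1-\tfrac14\epsilon^2)^r + 2^{-2^{n}}$. With $\epsilon = 0.41$ one has $1-\frac14\epsilon^2 < 0.958$, and the slack between $0.958$ and $0.96$, together with $n\ge 10$, absorbs the doubly-exponentially small term $2^{-2^{n}}$; this gives survival probability at most $0.96^r$ and hence the claimed catch probability $1-0.96^r$.

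The main obstacle is precisely the adaptivity and cross-round entanglement of the attackers: a priori the state entering round $i$ could be tailored, exploiting the history and the knowledge that detection has not yet occurred, to defeat the single-round bound. What rescues the argument is that the earlier results were established for arbitrary $2q+1$-qubit starting states rather than only product states (cf.\ the discussion in Remark~\ref{rmk:initial-state-product}), so the per-round bound is uniform over all admissible states and therefore robust to conditioning on survival; the convexity remark then reduces the mixed conditional states to the pure case.
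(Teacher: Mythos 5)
Your reduction to the single-round bound follows the paper's proof almost exactly: bound the conditional survival probability of round $i$ given the history, observe that the attackers enter each round with at most $q$ qubits each plus a fresh reference qubit so that Proposition~\ref{prop:after-embezzlement} (which quantifies over \emph{all} $(2q+1)$-qubit starting states) applies, reduce mixed conditional states to pure ones by convexity and Remark~\ref{rmk:random-does-not-help}, use the freshness of $(x,y)$, and multiply via the chain rule. The one place you genuinely diverge is the treatment of the random function $f$, and that is where your argument breaks. You condition on the good event $G$ once and then pay for $\mathbb P(G^c)\leq 2^{-2^{n}}$ \emph{additively} at the end, claiming $(0.958)^r+2^{-2^{n}}\leq(0.96)^r$. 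This inequality is false for large $r$: the available slack $(0.96)^r-(0.958)^r\leq (0.96)^r$ tends to zero as $r\to\infty$, whereas $2^{-2^{n}}$ is a constant independent of $r$, so the absorption fails once $r\gtrsim 2^{n}/\log_2(1/0.96)$ (already around $1.7\cdot 10^{4}$ for $n=10$). The obstruction is not merely numerical: conditioned on $G^c$ the attackers may survive every round with probability $1$ (e.g.\ for a constant $f$), so no bound of the form ``exponential in $r$ plus a constant'' can be converted into $0.96^r$ uniformly in $r$.

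The paper avoids the additive term by keeping the randomness of $f$ inside each factor: it asserts $\mathbb P(X_i=1\mid X_{i-1}=x_{i-1},\dots,X_1=x_1)\leq 0.96$ directly from Theorem~\ref{thm:main-result} and then multiplies, so the bound stays multiplicative. (Whether that conditional application is itself fully rigorous is debatable, since conditioning on past survival skews the posterior of the once-chosen $f$ away from uniform; but it is the argument the paper makes.) The cleanest repair of your version is to state and prove the result for every \emph{fixed} good $f$ — for which your own argument gives the stronger constant $(1-\tfrac14\epsilon^2)^r\leq 0.958^r$ with no additive error — and then separately record that a uniformly random $f$ is good except with probability $2^{-2^{n}}$, rather than folding both statements into a single unconditional $1-0.96^r$ claim valid for all $r$.
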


\begin{proof}
Let $X_i \in \{0,1\}$ be random variables where $i \in \{1,\ldots, r\}$. We set $X_i = 0$ if the attackers are detected in round $i$ and $X_i = 1$ if they are not detected. First, we observe that Theorem \ref{thm:main-result} still holds if the state Alice and Bob share is mixed, because it is equivalent to a random mixture of pure states. Thus, the strategy for a mixed state is a random mixture of strategies for pure states. By Remark \ref{rmk:random-does-not-help}, shared randomness does not increase the probability of the attackers to avoid detecion. Between the repetitions, it can happen that the state Alice and Bob share depends on previous iterations. However, the qubit of the maximally entangled pair at the beginning of each round is uncorrelated with that state and the pair $(x,y)$ in each round does not depend on previous rounds. Moreover, the attackers are assumed to control at most $q$ qubits at the beginning of each round.  Thus, the probability that the attackers are not detected is at most $\mathbb P(X_i = 1|X_{i-1} = x_{i-1}, \ldots,  X_{1} = x_{1}) \leq  0.96$  for any $i\in \{1, \ldots, r\}$ by Theorem \ref{thm:main-result}. Since there are $r$ rounds, the probability to escape detection in all rounds is 
\begin{equation*}
    \mathbb P(X_r = 1, \ldots, X_1 = 1) = \prod_{i = 1}^r \mathbb P(X_i = 1|X_{i-1} = 1, \ldots,  X_{1} = 1) \leq 0.96^r.
\end{equation*}
This proves the assertion.
\end{proof}

\subsection{The qubit routing protocol} \label{sec:alternative}

For ease of analysis, we have been considering a protocol where the verifiers hold a reference qubit, but an almost-equivalent protocol exists where the verifiers only need to store classical information. This is the qubit routing protocol considered in the main text.
Let $\ket{\pm} = \frac{1}{\sqrt{2}}(\ket{0} \pm \ket{1})$. As known from the context of the BB84 protocol \cite{Bennett1984}, we can replace the final measurement  $\{\dyad{\Omega}, I_4 - \dyad{\Omega}\}$ by the following measurement:
With probability $\frac{1}{2}$ each, measure either $\{\dyad{++} + \dyad{--}, I_4-(\dyad{++} + \dyad{--})\}$ or $\{\dyad{00} + \dyad{11}, I_4 - (\dyad{00} + \dyad{11})\}$. We denote this measurement by M2. Let us therefore consider the slightly altered protocol \PVroute:
\begin{enumerate}
\item $V_0$ chooses an $n$-bit string $x$ uniformly at random, $V_1$ chooses an $n$-bit string $y$ uniformly at random. Moreover, $V_0$ prepares one of the states $\ket{0}$, $\ket{1}$, $\ket{+}$, $\ket{-}$ uniformly at random. Let this state be the qubit $Q$.
\item $V_0$ sends qubit $Q$ together with $x$ to $P$. $V_1$ sends $y$ such that it arrives at $z$ at the same time as $x$ and $Q$ sent by $V_0$.
\item $P$ sends the qubit $Q$ on to $V_{f(x,y)}$.
\item If $Q$ was $\ket{0}$ or $\ket{1}$ at step $(1)$, $V_{f(x,y)}$ measures $Q$ in the computational basis. Otherwise, $V_{f(x,y)}$ measures $Q$ in the Hadamard basis. $V_0$ and $V_1$ accept if the qubit arrives at the correct time at the correct verifier and if the measurement returns the outcome consistent with the state of $Q$ at step $(1)$.
\end{enumerate}
The following proposition implies that \PVroutemod and \PVroute are essentially equivalent.
\begin{prop} \label{protocols-equivalent}
Let $p \in [0,1]$. If the attackers are caught with probability at least $p$ during \PVroutemod, then the attackers are caught with probability at least $\frac{1}{2}p$ during \PVroute. Conversely, if the attackers are caught with probability at least $p$ during \PVroute, they are caught with probability at least $p$ during \PVroutemod.
  \end{prop}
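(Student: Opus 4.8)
The plan is to reduce both protocols to their entanglement-based descriptions and then compare the two final verification measurements acting on the \emph{same} returned two-qubit state. The text already records that \PVroutemod accepts exactly when the Bell measurement $\{\dyad{\Omega}, I_4 - \dyad{\Omega}\}$ on the reference qubit $R$ and the returned qubit $Q$ succeeds, whereas \PVroute corresponds (via the standard BB84 equivalence between preparing a random state and deferring a random-basis measurement on half of $\ket{\Omega}$) to the measurement M2, i.e.\ to the accept element $\tfrac12(\dyad{00}+\dyad{11}) + \tfrac12(\dyad{++}+\dyad{--})$. The first thing I would make precise is that $V_0$'s handling of $R$ (either keeping it for the Bell measurement, or measuring it in a random basis and sending the corresponding BB84 state) acts on a system the attackers never touch, so it commutes with every operation in the attack. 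Hence, for a fixed attack and fixed pair $(x,y)$, the joint state $\rho_{RQ}$ held by the verifiers at the end is identical in the two protocols, and the protocols differ \emph{only} in the final measurement applied to $\rho_{RQ}$.

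The heart of the argument is an operator comparison between the two ``catch'' operators. Writing $P_\Omega = \dyad{\Omega}$, $P_{\mathrm{c}} = \dyad{00}+\dyad{11}$ and $P_{\mathrm{h}} = \dyad{++}+\dyad{--}$, I would expand all three in the Bell basis $\{\ket{\Phi^+}=\ket{\Omega}, \ket{\Phi^-}, \ket{\Psi^+}, \ket{\Psi^-}\}$, using $\ket{\Omega} = \tfrac{1}{\sqrt2}(\ket{++}+\ket{--})$ to see that $P_{\mathrm{c}} = \dyad{\Phi^+} + \dyad{\Phi^-}$ and $P_{\mathrm{h}} = \dyad{\Phi^+} + \dyad{\Psi^+}$. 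This yields the clean identities $I - P_\Omega = \dyad{\Phi^-}+\dyad{\Psi^+}+\dyad{\Psi^-}$ and $(I-P_{\mathrm c})+(I-P_{\mathrm h}) = \dyad{\Phi^-}+\dyad{\Psi^+}+2\dyad{\Psi^-}$, from which the sandwich
\[
  I - P_\Omega \;\le\; (I-P_{\mathrm c}) + (I-P_{\mathrm h}) \;\le\; 2\,(I-P_\Omega)
\]
drops out by inspecting coefficients (the gaps are $\dyad{\Psi^-}$ and $\dyad{\Phi^-}+\dyad{\Psi^+}$, both positive semidefinite). Dividing by $2$ identifies the middle term with the M2 catch operator $M := \tfrac12(I-P_{\mathrm c}) + \tfrac12(I-P_{\mathrm h})$ and gives $\tfrac12(I-P_\Omega) \le M \le (I-P_\Omega)$.

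Taking the expectation of this operator inequality against $\rho_{RQ}$ converts it into a statement about probabilities: with $c_\Omega := \tr[(I-P_\Omega)\rho_{RQ}]$ the catch probability in \PVroutemod and $c_{\mathrm{M2}} := \tr[M\rho_{RQ}]$ the catch probability in \PVroute, we get $\tfrac12 c_\Omega \le c_{\mathrm{M2}} \le c_\Omega$ for every $\rho_{RQ}$. Since the distribution of $\rho_{RQ}$ over the uniform choice of $(x,y)$ is the same in both protocols (and, by Remark \ref{rmk:random-does-not-help}, we may restrict to deterministic strategies), the same inequalities hold after averaging, i.e.\ for the overall catch probabilities. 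The lower bound $c_{\mathrm{M2}} \ge \tfrac12 c_\Omega$ then yields the first implication together with its factor $\tfrac12$, and the upper bound $c_\Omega \ge c_{\mathrm{M2}}$ yields the converse, which proves the proposition.

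I expect the only genuine subtlety --- the main obstacle --- to be the justification that $\rho_{RQ}$ is truly protocol-independent: that replacing the EPR-based reference qubit by a prepare-and-measure BB84 source changes nothing about what the verifiers can test. This is precisely the deferred-measurement/commutation point of the first paragraph, and it must be stated with care because the reference qubit sits at one verifier while the returned qubit may be measured at the other, so the ``joint'' final measurement is really executed by classical comparison between the two verifiers. Once that is in place the operator computation is short and both implications follow immediately.
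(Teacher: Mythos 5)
Your proposal is correct and follows essentially the same route as the paper: the same deferred-measurement/commutation argument shows the verifiers' joint state $\rho_{RQ}$ is protocol-independent, and your Bell-basis operator sandwich $\tfrac12(I-\dyad{\Omega})\le M \le I-\dyad{\Omega}$ is exactly the content of the paper's Proposition \ref{prop:measurements-equivalent} (which uses the same decomposition $\dyad{00}+\dyad{11}=\dyad{\Omega}+\dyad{\phi_2}$, $\dyad{++}+\dyad{--}=\dyad{\Omega}+\dyad{\phi_1}$, just phrased as scalar inequalities on the probabilities $p_\Omega, p_1, p_2$). No gaps.
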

\begin{proof}
We begin by noting that preparing  $\ket{0}$, $\ket{1}$, $\ket{+}$, $\ket{-}$ with equal probability is equivalent to preparing $\ket{\Omega}$ and measuring one qubit $R$ with probability $1/2$ in the computational basis and with  probability $1/2$ in the Hadamard basis. Moreover, the other qubit $Q$ is measured in the same basis in step $(4)$ and any action on $R$ during \PVroutemod commutes with all the operations the honest prover or the attackers can do. Thus,  \PVroute is equivalent to \PVroutemod except for the final measurement, which is M2 instead of $\{\dyad{\Omega}, I_4 - \dyad{\Omega}\}$. Hence, the assertions follow from Proposition \ref{prop:measurements-equivalent}.
\end{proof}

In particular, \PVroute is secure for an adequate function $f$ since \PVroutemod is. Indeed, the following corollary is an immediate consequence of Propositions \ref{prop:repitition} and \ref{protocols-equivalent}:

\begin{cor} \label{cor:M2-repetitions}
Let $f:\{0,1\}^{2n} \to \{0,1\}$, $n \geq 10$,  $r$, $q$, $n \in \mathbb N$ and let 
\begin{equation*}
q \leq \frac{1}{2} n- 5.
\end{equation*}
Let us assume that the verifiers choose a function $f$ uniformly at random at the beginning and that they run the protocol \PVroute sequentially $r$-times, choosing $x$, $y$ uniformly at random each time. Moreover, let us assume that Alice and Bob control at most $q$ qubits each at the beginning of each iteration of \PVroute. Then, the attackers are caught with probability at least $1 - 0.98^r$.
\end{cor}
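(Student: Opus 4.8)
The plan is to obtain the corollary by combining a single-round security transfer with the sequential-composition machinery already developed for the entangled protocol. Concretely, I would first convert the one-shot guarantee for \PVroutemod into a one-shot guarantee for \PVroute using Proposition \ref{protocols-equivalent}, and then re-run the conditioning argument from the proof of Proposition \ref{prop:repitition} with this new per-round bound.

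First I would pin down the per-round catch probability. By Theorem \ref{thm:main-result}, in one execution of \PVroutemod attackers controlling at most $q \leq \frac{1}{2}n-5$ qubits each are caught with probability at least $4 \cdot 10^{-2}$. The first implication of Proposition \ref{protocols-equivalent} then gives that in a single execution of \PVroute the same attackers are caught with probability at least $\tfrac{1}{2} \cdot 4 \cdot 10^{-2} = 2 \cdot 10^{-2}$; equivalently, they escape detection in one round with probability at most $0.98$.

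Second, I would replay the conditioning argument from the proof of Proposition \ref{prop:repitition}, now for \PVroute. Setting $X_i = 1$ exactly when the attackers escape in round $i$, the same structural facts apply: a mixed carried-over state reduces to a convex mixture of pure-state strategies and, by Remark \ref{rmk:random-does-not-help}, shared randomness does not help; the fresh BB84 qubit $Q$ prepared each round is uncorrelated with the attackers' retained state; the inputs $(x,y)$ are drawn freshly and independently each round; and the attackers hold at most $q$ qubits at the start of each round. Hence the one-shot bound applies conditionally on the past, so that $\mathbb P(X_i = 1 \mid X_{i-1} = 1, \ldots, X_1 = 1) \leq 0.98$ for every $i$, and multiplying these conditional probabilities over the $r$ rounds bounds the all-rounds escape probability by $0.98^r$. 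The claimed catch probability $1 - 0.98^r$ follows.

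The step I expect to require the most care is verifying that the single-round reduction of Proposition \ref{protocols-equivalent} remains legitimate round-by-round when Alice and Bob's state is correlated with the earlier rounds. This holds because that reduction depends only on the relation between the freshly prepared reference qubit $R$ (equivalently, the fresh BB84 state) and the returned qubit $Q$; since the fresh EPR pair is independent of the carried-over register and all operations on $R$ commute with the attackers' operations, the equivalence of the final measurement M2 with $\{\dyad{\Omega}, I_4 - \dyad{\Omega}\}$ is unaffected by the history. With this in hand the conditional per-round bound is valid and the product estimate closes the argument.
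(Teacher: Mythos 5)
Your proposal is correct and follows essentially the same route as the paper, which derives the corollary as an immediate consequence of Propositions \ref{prop:repitition} and \ref{protocols-equivalent}: apply the factor-$\frac{1}{2}$ transfer of Proposition \ref{protocols-equivalent} to the single-round bound of Theorem \ref{thm:main-result} to get a per-round detection probability of $2\cdot 10^{-2}$, then rerun the conditioning/product argument of Proposition \ref{prop:repitition}. Your added care about the round-by-round validity of the measurement equivalence under carried-over correlations is exactly the right point and is consistent with the paper's reasoning.
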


\section{The measuring protocol} \label{sec:measure}

\subsection{The measuring protocol}
In this section, we consider the measuring protocol. It turns out that we can prove similar security guarantees as for the qubit routing protocol, using essentially the same proof techniques. For the sake of analysis, we consider again a modified protocol in which $V_0$ sends half of an EPR pair, and measures the other half in the correct basis at the end of the protocol. In this case, the modified protocol is completely equivalent to the original and we will refer to both as \PVBB. For a Boolean function $f$ on $2n$ classical bits, $n \in \mathbb N$, the modified protocol is defined as follows:
\begin{enumerate}
\item $V_0$ randomly chooses two $n$-bit strings $x$, $y$, computes $f(x,y)$ and sends $y$ on to $V_1$. Moreover, $V_0$ prepares a maximally entangled $2$-qubit state $\ket{\Omega}=\frac{1}{\sqrt{2}}(\ket{00} + \ket{11})$.
\item $V_0$ sends one qubit $Q$ of $\ket{\Omega}$ together with $x$ to $P$. $V_1$ sends $y$ such that it arrives at $z$ at the same time as $x$ and $Q$ sent by $V_0$.
\item $P$ measures $Q$ in the computational basis if $f(x,y)=0$ and in the Hadamard basis if $f(x,y)=1$. Subsequently, $P$ broadcasts the measurement outcome $b \in \{0,1\}$ to both $V_0$ and $V_1$.
\item $V_0$ and $V_1$ accept if the classical bit $b$ arrives at the correct time and if a measurement on their qubit (in the computational basis if $f(x,y)=0$ and in the Hadamard basis if $f(x,y)=1$) yields the outcome $b$. 
\end{enumerate}
The timing of the response from $P$ is deemed correct if it is compatible with bit $b$ originating from $z$ right after the qubit $Q$ reached that point. An illustration of the protocol can be found in Figure 3 in the main text.

Now, we define attack strategies for the protocol.
\begin{defi}[$q$-qubit strategy for \PVBB{}] \label{defi:strategy-BB84}
Fix a partition into systems $R A \tilde A A_c B \tilde B B_c$. Both Alice's and Bob's registers each consist of $q$ qubits.
Let $d$ be the combined dimension of this system, therefore $d = 2^{2q+1}$.
A \emph{$q$-qubit strategy for \PVBB{}} consists of the starting state $\ket{\psi}$ on $R A \tilde A A_c B \tilde B B_c$,  unitaries $U^x_{A \tilde A A_c}$, $V^y_{B \tilde B B_c}$, and Alice's and Bob's local two-outcome POVMs \{$\Pi^{xy}_{A \tilde A B_c}$, $I-\Pi^{xy}_{A \tilde A B_c}$\} and  $\{\Sigma^{xy}_{B \tilde B A_c},I-\Sigma^{xy}_{B \tilde B A_c}\}$,  for all $x$, $y \in \{0,1\}^n$.
The superscripts indicate whether the operators may depend on the message $x$ that $V_0$ sends, the message $y$ that $V_1$ sends, or on both messages.
\end{defi}
See Figure 4 in the main text for an illustration. We interpret the strategy as follows: First Alice applies $U$ as a function of $x$ and Bob applies $V$ as function of $y$. Then, Alice and Bob exchange registers $A_c$ and $B_c$. Finally, Alice (with full knowledge of both $x$ and $y$) measures her local registers using a POVM given by $\{\Pi, I-\Pi\}$, responding to $V_0$ with her outcome. Similarly, Bob measures his local registers using $\{\Sigma, I-\Sigma\}$ to determine his response to $V_1$. Any unitary on Alice's or Bob's side after the communication phase, which may depend on $x$, $y$, can be absorbed into the POVMs. The same holds for any classical post-processing. The definition of an $(\epsilon, l)$-perfect $q$-qubit strategy is the same as for \PVroutemod:
\begin{defi}[$(\epsilon, l)$-perfect $q$-qubit strategy for \PVBB]
Let $\epsilon > 0$, $l \in \mathbb N$. A  $q$-qubit strategy for \PVBB as in Definition~\ref{defi:strategy} is \emph{$(\epsilon, l)$-perfect} if on $l$ pairs of strings $(x,y)$, Alice and Bob are caught by the verifiers with probability at most $\epsilon^2$.  
\end{defi}

\subsection{Lower bounds}

Our main task is to find a proposition which plays the role of Proposition \ref{prop:states-must-have-low-fidelity}. We will use entropic uncertainty relations to achieve this task.

Buhrman et al.~\cite{buhrman2014position} used an entropic uncertainty principle called the \emph{strong complementary information tradeoff} (CIT) from \cite{RB09, BCC+10} to bound the attack probability on the basic BB84 quantum PV scheme against unentangled attackers. The following version is also used in \cite[Theorem 2.4]{buhrman2014position}, where we have relabeled registers and instantiated with $n=1$.
\begin{thm}[CIT]\label{thm:CIT}
Let  $\ket{\psi_{REF}} \in \mathcal{H}_R \otimes \mathcal{H}_E \otimes \mathcal{H}_F$ be an arbitrary tri-partite state, where
$\mathcal{H}_R = \mathbb{C}^{2}$. Let the hybrid state $\rho_{ZEF}$ be obtained by measuring $R$ in basis $\theta \in \{0,1\}$, and let the hybrid state $\sigma_{ZEF}$ be obtained by measuring $R$ (of the original state $\ket{\psi_{REF}}$) in the complementary basis $\bar{\theta}$. Then, using conditional quantum entropy,
\[
H(\rho_{ZE} | E) + H(\sigma_{ZF} | F) \geq 1.
\]
\end{thm}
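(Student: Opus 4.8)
The plan is to prove this as the single-qubit instance of the entropic uncertainty relation with quantum side information --- the strong complementary information tradeoff of Renes and Boileau \cite{RB09}, established in \cite{BCC+10} --- specialised to the two BB84 bases. For the computational and Hadamard bases on a qubit the maximal overlap is $c = \max_{i,j}|\langle i_\theta | j_{\bar\theta}\rangle|^2 = 1/2$, so that $\log(1/c) = 1$, which is exactly the constant on the right-hand side. It is important to prove the relation \emph{directly} rather than to invoke the general mixed-state version: for a pure global state the latter yields only $H(\rho_{ZE}|E) + H(\sigma_{ZF}|F) \geq 1 + H(R|EF)_\psi = 1 - H(\psi_R) \geq 0$, which is trivial. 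Thus the purity of $\ket{\psi_{REF}}$, together with the fact that the two measurements use the \emph{complementary} memories $E$ and $F$, must be used essentially.

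First I would represent each measurement coherently: implement the measurement of $R$ in basis $\theta$ as an isometry copying the outcome into a fresh classical register $Z$, producing a purification $\ket{\Phi}_{ZREF}$ of $\rho_{ZEF}$, and likewise for the complementary basis. Using $d_R = 2$, I would rewrite each term through the relative-entropy identity
\[
H(\rho_{ZE} \,|\, E) = 1 - D\big(\rho_{ZE} \,\big\|\, \tfrac{1}{2} I_Z \otimes \rho_E\big),
\]
and the analogue for $\sigma_{ZF}$, and then apply monotonicity of the relative entropy under the measurement (dephasing) channels. This is the step that turns the operator overlap of the two bases into the additive constant $\log(1/c) = 1$: the Maassen--Uffink estimate enters through the inequality comparing the computational dephasing map, evaluated on a state, against the Hadamard dephasing map, controlled by $\max_{i,j}|\langle i_\theta | j_{\bar\theta}\rangle|^2$.

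The crucial remaining ingredient is the duality of the conditional von Neumann entropy: for a pure tripartite state $\ket{\Phi}_{XYW}$ one has $H(X|Y)_\Phi = -H(X|W)_\Phi$. Applying this to the coherent purification lets me trade the memory $F$ of the Hadamard measurement for the complementary system on the $E$-side, which is precisely what couples the two complementary measurements and removes the $H(R|EF)$ correction term present in the mixed-state relation. I expect this coupling to be the main obstacle: it works only because $\ket{\psi_{REF}}$ is pure, and it requires careful book-keeping of registers so that the two data-processing estimates combine with the duality into the single clean bound $\geq 1$. As a sanity check one verifies tightness on the state where $R$ is maximally entangled with $E$ while $F$ is a spectator, giving $H(\rho_{ZE}|E) = 0$ and $H(\sigma_{ZF}|F) = 1$. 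For the single qubit an alternative, fully self-contained route is to parametrise $\ket{\psi_{REF}}$ by its Schmidt decomposition across the cut $R \,|\, EF$, reduce the claim to a two-parameter optimisation, and close it using concavity of the binary entropy $h$; this bypasses duality but is less transparent and does not generalise.
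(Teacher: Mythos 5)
The paper does not prove this statement at all: Theorem \ref{thm:CIT} is imported verbatim from the literature, namely the strong complementary information tradeoff of Renes--Boileau \cite{RB09}, proved in \cite{BCC+10}, in the form already used as Theorem 2.4 of \cite{buhrman2014position} with $n=1$ and relabelled registers. So there is no in-paper argument to compare against; what you have written is a reconstruction of the known proof of the tripartite entropic uncertainty relation, specialised to a qubit and two mutually unbiased bases.

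As a reconstruction, your plan is sound and identifies the genuinely essential points. The overlap computation $c=1/2$, $\log(1/c)=1$ is right; the sanity check ($R$ maximally entangled with $E$, $F$ a spectator) correctly exhibits tightness; and, most importantly, you correctly diagnose that the bipartite relation conditioned on the joint memory $EF$ only yields $1+H(R|EF)_\psi = 1-H(\psi_R)\geq 0$ and is therefore useless here, so that the purity of $\ket{\psi_{REF}}$ must enter through entropy duality $H(X|Y)_\Phi=-H(X|W)_\Phi$ applied to a coherent (isometric) implementation of the measurement. The combination of the relative-entropy rewriting $H(\rho_{ZE}|E)=1-D(\rho_{ZE}\,\|\,\tfrac12 I_Z\otimes\rho_E)$, monotonicity under the dephasing channels, and duality is exactly the architecture of the proofs in \cite{BCC+10} and its streamlined successors. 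Two caveats: first, the step where ``the Maassen--Uffink estimate enters'' is the technical heart of the argument and is only gestured at, so the proposal is a plan rather than a proof; second, the claimed fallback via the Schmidt decomposition of $\ket{\psi_{REF}}$ across the cut $R\,|\,EF$ does not reduce to a two-parameter optimisation, since $H(\rho_{ZE}|E)$ and $H(\sigma_{ZF}|F)$ depend on how the two Schmidt vectors in $\mathcal H_E\otimes\mathcal H_F$ correlate with each tensor factor separately, not merely on the Schmidt coefficient. Given that the paper simply cites this result, the most economical course is to do the same; if you do want a self-contained proof, the duality route you outline is the correct one to complete.
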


We start by defining sets of states from which, if they arise after the communication phase, the attackers can successfully attack the protocol.

\begin{defi} \label{def:good-sets-BB84}
Let $ \epsilon \in [0,1]$. We define $\mathcal S_0^{\epsilon, \mathrm{meas}}$ as the set of states $\ket{\varphi}_{RA\tilde A A_c B \tilde B B_c}$ such that there exists a measurement on $A \tilde A B_c$ and a measurement on $B \tilde B A_c$ which each allow to guess the outcome of a measurement performed on $R$ in the computational basis with probability at least $1-\epsilon^2$. Moreover, we define $\mathcal S_1^{\epsilon, \mathrm{meas}}$ as the set of states $\ket{\varphi^\prime}_{RA\tilde A A_c B \tilde B B_c}$ such that there exists a measurement on $A \tilde A B_c$ and a measurement on $B \tilde B A_c$ which each allow to guess the outcome of a measurement performed on $R$ in the Hadamard basis with probability at least $1-\epsilon^2$.
\end{defi}

Now, note that having a successful attack on \PVBB{} for some $x,y$ implies that the corresponding entropy is low:

\begin{lem} \label{lem:how-to-win-BB84}
Let $\epsilon \in [0,1]$ and let $\delta = h(\epsilon^2)$. Let $\ket{\varphi_0}_{RA \tilde A A_c B \tilde B B_c} \in \mathcal S_0^{\epsilon, \mathrm{meas}}$ and $\ket{\varphi_1}_{RA \tilde A A_c B \tilde B B_c} \in \mathcal S_1^{\epsilon, \mathrm{meas}}$.
Moreover, let $\rho_{Z A \tilde A A_c B \tilde B B_c}$ be the state that results after measuring register $R$ of $\ket{\phi_0}$ in the computational basis and $\sigma_{Z A \tilde A A_c B \tilde B B_c}$ the state that results after measuring register $R$ of $\ket{\phi_1}$ in the Hadamard basis. Then $H(\rho_{ZA \tilde A B_c} | A \tilde A B_c ) \leq \delta$ and $H(\rho_{Z B \tilde B A_c} | B \tilde B A_c ) \leq \delta$.
Likewise, we find that $H(\sigma_{ZA \tilde A B_c} | A \tilde A B_c ) \leq \delta$ and $H(\sigma_{Z B \tilde B A_c} | B \tilde B A_c ) \leq \delta$.
\end{lem}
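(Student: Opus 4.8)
The plan is to convert the operational statement that some measurement guesses the outcome well into the information-theoretic statement that a conditional entropy is small, by combining Fano's inequality with the data-processing inequality. Consider the first claim, $H(\rho_{ZA\tilde A B_c}\mid A\tilde A B_c)\leq\delta$. Here $Z$ records the binary outcome of measuring register $R$ of $\ket{\varphi_0}$ in the computational basis, so that $\rho_{ZA\tilde A A_c B\tilde B B_c}$ is a cq-state that is classical on $Z$, and $A\tilde A B_c$ is Alice's post-communication side information. Since $\ket{\varphi_0}\in\mathcal S_0^{\epsilon,\mathrm{meas}}$, Definition \ref{def:good-sets-BB84} furnishes a measurement $\mathcal M$ on $A\tilde A B_c$ whose outcome $Z'$ equals $Z$ with probability at least $1-\epsilon^2$, that is, $\mathbb P[Z'\neq Z]\leq\epsilon^2$.

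First I would apply Fano's inequality to the classical pair $(Z,Z')$. Because $Z$ takes only two values, the usual $\log(|Z|-1)$ term vanishes and Fano gives $H(Z\mid Z')\leq h(\mathbb P[Z'\neq Z])$. In the regime of interest $\epsilon\leq 0.41$, so $\mathbb P[Z'\neq Z]\leq\epsilon^2\leq 1/2$, and monotonicity of $h$ on $[0,1/2]$ yields $h(\mathbb P[Z'\neq Z])\leq h(\epsilon^2)=\delta$. Next I would use that $Z'$ is produced from $A\tilde A B_c$ by the quantum channel $\mathcal M$: processing the conditioning system can only increase the conditional entropy, so the data-processing inequality gives $H(Z\mid A\tilde A B_c)\leq H(Z\mid Z')\leq\delta$. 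This establishes the first inequality.

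The remaining three inequalities follow by the same argument after changing which register plays the role of the guessing side information. For the second claim I would instead invoke the measurement on $B\tilde B A_c$ guaranteed by membership in $\mathcal S_0^{\epsilon,\mathrm{meas}}$, obtaining $H(\rho_{ZB\tilde B A_c}\mid B\tilde B A_c)\leq\delta$. The two inequalities for $\sigma$ are identical, now measuring $R$ of $\ket{\varphi_1}$ in the Hadamard basis and using the two guessing measurements supplied by $\ket{\varphi_1}\in\mathcal S_1^{\epsilon,\mathrm{meas}}$.

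I do not expect a genuine obstacle, since the content is the standard conversion from guessing probability to conditional entropy. The only points demanding care are bookkeeping: checking that the $R$-measurement and the guessing measurement act on disjoint registers, so the joint cq-state is well defined and the guess is genuinely a function of the side information alone, and confirming $\epsilon^2\leq 1/2$ so that the Fano bound can be replaced by $h(\epsilon^2)$ via monotonicity of $h$ rather than left in terms of $\mathbb P[Z'\neq Z]$.
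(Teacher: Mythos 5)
Your proposal is correct and follows essentially the same route as the paper: extract the guessing measurement from Definition \ref{def:good-sets-BB84}, apply Fano's inequality to the binary outcome pair to get $H(Z\mid Z')\leq h(\epsilon^2)$, and then use data processing on the conditioning system to conclude $H(Z\mid A\tilde A B_c)\leq H(Z\mid Z')$. Your explicit remark that monotonicity of $h$ requires $\epsilon^2\leq 1/2$ is a small but legitimate point of care that the paper's proof glosses over (the lemma nominally allows $\epsilon\in[0,1]$), though it is harmless in the regime where the lemma is actually invoked.
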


\begin{proof}
First consider $\ket{\varphi_0}$ and Alice's registers $A \tilde A B_c$. Abusing notation slightly, we denote by $Z$ the random variable obtained by measuring register $R$ of $\ket{\varphi_0}$ in the computational basis, thus transforming $\ket{\varphi_0}$ into $\rho_{ZA \tilde A A_c B \tilde B B_c}$. Let $W$ be the random variable denoting Alice's outcome of the POVM measurement on local registers $A \tilde A B_c$ which allows to guess $Z$. This measurement is guaranteed to exist from the definition of $\mathcal S_0^{\epsilon, \mathrm{meas}}$ in Definition~\ref{def:good-sets-BB84}. It transforms $\rho_{ZA \tilde A A_c B \tilde B B_c}$ into $\rho_{ZWA_c B \tilde B}$.
For a probability of error $\mathbb{P}(Z \neq W) \leq \epsilon^2$, by Fano's inequality it holds that $H(Z | W) \leq h(\epsilon^2)$.
Since we have that $H(\rho_{ZA \tilde A B_c} | A \tilde A B_c ) \leq H(Z | W)$ by the data processing inequality for the relative entropy, applied to the mutual information, the statement follows directly. The other three cases can be shown analogously. 
\end{proof}

To proceed, we need to recall the continuity of conditional quantum entropy:
\begin{prop}\label{prop:entropycont}
Let $R$ be such that $\dim R = 2$, and let the dimensions of the systems $E$, $F$ be arbitrary. If $\mathcal P(\rho_{REF}, \sigma_{REF}) \leq 0.013$,  then $|H(\rho_{RE} | E) - H(\sigma_{RE} | E)| \leq 0.127$.
\end{prop}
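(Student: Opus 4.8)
The plan is to recognize this as a standard continuity estimate for the conditional von Neumann entropy and to obtain the constant $0.127$ by feeding $\epsilon = 0.013$ into a sharp continuity bound. The natural tool is the Alicki--Fannes--Winter inequality in the form due to Winter: for any two states $\omega_{AB}$, $\tau_{AB}$ with $\tfrac12\|\omega_{AB}-\tau_{AB}\|_1 \leq \epsilon \leq 1$, one has
\begin{equation*}
|H(A|B)_\omega - H(A|B)_\tau| \leq 2\epsilon \log \dim A + (1+\epsilon)\,h\!\left(\frac{\epsilon}{1+\epsilon}\right).
\end{equation*}
The only place the dimension enters is through $\log \dim A$, and since here the conditioned system is $R$ with $\dim R = 2$, this factor is exactly $1$; this is why the hypothesis $\dim R = 2$ is crucial to landing below $0.127$.

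The steps would be as follows. First, I would pass from the tripartite hypothesis to a bipartite statement on $RE$ by invoking monotonicity of the purified distance under the partial trace over $F$, which gives $\mathcal P(\rho_{RE}, \sigma_{RE}) \leq \mathcal P(\rho_{REF}, \sigma_{REF}) \leq 0.013$. Second, I would convert the purified distance into trace distance using the Fuchs--van de Graaf inequality $\tfrac12\|\rho-\sigma\|_1 \leq \sqrt{1 - F(\rho,\sigma)^2} = \mathcal P(\rho,\sigma)$, so that $\tfrac12\|\rho_{RE}-\sigma_{RE}\|_1 \leq 0.013$. Third, I would apply the displayed continuity bound with $A = R$, $B = E$, $\epsilon = 0.013$ and $\log \dim R = 1$, and finally verify the arithmetic: $2\cdot 0.013 + 1.013 \cdot h(0.013/1.013) \approx 0.026 + 0.1003 < 0.127$.

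The main obstacle, such as it is, is not conceptual but bookkeeping: one must use a continuity bound with a sufficiently tight constant (the older Alicki--Fannes form, with its larger prefactor, would not land below $0.127$), and one must correctly track the direction of the Fuchs--van de Graaf inequality together with the monotonicity of $\mathcal P$ so that the small input distance $0.013$ really does control the trace distance on $RE$. The remaining work is the explicit evaluation of the binary entropy $h(0.013/1.013)$, which is routine. It is worth noting that $F$ plays no role beyond being traced out, so the statement is really a bipartite continuity bound dressed up with an ancillary system that will be instantiated later as part of the attackers' registers.
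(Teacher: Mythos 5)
Your proposal is correct and follows essentially the same route as the paper: trace out $F$, bound the trace distance on $RE$ by the purified distance (the paper cites \cite[Lemma 3.17]{Tomamichel2016} for this, which is the Fuchs--van de Graaf direction you use), and apply the Alicki--Fannes--Winter bound \cite[Lemma 2]{Winter2016} with $\log\dim R = 1$, yielding $2\Delta + (1+\Delta)h(\Delta/(1+\Delta)) \approx 0.1263 < 0.127$. The arithmetic and the emphasis on needing the Winter-tightened constant (rather than the older Alicki--Fannes prefactor) both match the paper's argument.
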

\begin{proof}
Let $\Delta = 0.013$. The purified distance is an upper bound on the trace distance $\frac{1}{2}\norm{\cdot}_1$, see e.g.\  \cite[Lemma 3.17]{Tomamichel2016}. Thus,  $\frac{1}{2}\norm{\rho_{RE} - \sigma_{RE}}_1 \leq \Delta$, where we have used data-processing for the trace distance. The assertion follows then from the Alicki-Fannes-Winter inequality \cite[Lemma 2]{Winter2016}, which yields
\begin{equation*}
    |H(\rho_{RE} | E) - H(\sigma_{RE} | E)| \leq 2 \Delta + (1+\Delta) \, h\left(\frac{1}{1+\Delta}\right).
\end{equation*}
The assertion follows inserting the numerical value for $\Delta$.
\end{proof}

To show security, we can follow a similar strategy as for the entangled routing protocol. A key result is the following proposition:

\begin{lem} \label{lem:states-must-have-low-fidelity-BB84}
Let $\delta \leq h[(0.3)^2]$. Moreover, let $\ket{\varphi_0}_{RA\tilde{A}A_cB\tilde{B}B_c}$ be such that $H(\rho^0_{ZA\tilde{A}B_c}|A\tilde{A}B_c) \leq \delta$, where $\rho^0_{ZA\tilde{A}A_cB\tilde{B}B_c}$ is the state resulting from measuring the $R$ register of $\ket{\varphi_0}$ in the computational basis.
Similarly, let $\ket{\varphi_1}_{RA\tilde{A}A_cB\tilde{B}B_c}$ be such that $H(\sigma^1_{ZB\tilde{B}A_c}|B\tilde{B}A_c) \leq \delta$, where $\sigma^1_{ZA\tilde{A}A_cB\tilde{B}A_c}$ is the state resulting from measuring the $R$ register of $\ket{\varphi_1}$ in the Hadamard basis. Then,
\begin{equation*}
\mathcal P(\ket{\varphi_0},\ket{\varphi_1}) > 0.013.
\end{equation*}
\end{lem}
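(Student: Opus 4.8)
The plan is to argue by contradiction: suppose that $\mathcal P(\ket{\varphi_0}, \ket{\varphi_1}) \leq 0.013$ and derive a contradiction with the hypothesis $\delta \leq h[(0.3)^2]$. The crucial structural observation is that the groupings $E := A\tilde A B_c$ (Alice's post-communication registers) and $F := B\tilde B A_c$ (Bob's) form a partition of the complement of $R$ in the full system $RA\tilde A A_c B\tilde B B_c$. This is exactly the splitting required to apply the complementary information tradeoff (Theorem \ref{thm:CIT}) with $\dim \mathcal H_R = 2$, so that Alice's and Bob's measurement entropies play the roles of $H(\rho_{ZE}|E)$ and $H(\sigma_{ZF}|F)$.

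First I would apply Theorem \ref{thm:CIT} to the pure state $\ket{\varphi_0}$, taking $\theta$ to be the computational basis and $\bar\theta$ the Hadamard basis. Writing $\rho^0$ for the state obtained by measuring $R$ of $\ket{\varphi_0}$ in the computational basis and $\tilde\rho^1$ for the state obtained by measuring $R$ of the \emph{same} $\ket{\varphi_0}$ in the Hadamard basis, the CIT gives
\[
H(\rho^0_{ZA\tilde A B_c}\,|\,A\tilde A B_c) + H(\tilde\rho^1_{ZB\tilde B A_c}\,|\,B\tilde B A_c) \geq 1.
\]
The first term is at most $\delta$ by hypothesis, so $H(\tilde\rho^1_{ZB\tilde B A_c}\,|\,B\tilde B A_c) \geq 1 - \delta$. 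The point is that $\tilde\rho^1$ and the state $\sigma^1$ appearing in the hypothesis are produced from $\ket{\varphi_0}$ and $\ket{\varphi_1}$ respectively by the \emph{same} channel, namely measuring $R$ in the Hadamard basis.

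Next I would transport the assumed closeness through this channel. Since the purified distance is monotone under quantum channels (data processing), the contradiction hypothesis yields $\mathcal P(\tilde\rho^1, \sigma^1) \leq 0.013$. Applying Proposition \ref{prop:entropycont} with the dimension-two register being the classical outcome $Z$ and conditioning system $B\tilde B A_c$, I obtain
\[
\left| H(\tilde\rho^1_{ZB\tilde B A_c}\,|\,B\tilde B A_c) - H(\sigma^1_{ZB\tilde B A_c}\,|\,B\tilde B A_c) \right| \leq 0.127.
\]
Combining this with the CIT lower bound and the hypothesis $H(\sigma^1_{ZB\tilde B A_c}\,|\,B\tilde B A_c) \leq \delta$ gives $\delta \geq 1 - \delta - 0.127$, i.e.\ $\delta \geq 0.4365$. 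This contradicts $\delta \leq h[(0.3)^2] = h(0.09) < 0.4365$, completing the argument.

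The main obstacle is not conceptual but the tightness of the constants: the bound $h(0.09) < (1 - 0.127)/2$ is razor-thin (the two sides agree to four decimal places), so one must verify carefully that the continuity constant $0.127$ from Proposition \ref{prop:entropycont} is a genuine upper bound and that every inequality points the right way — in particular that the CIT lower bound on the Hadamard-basis entropy of $\ket{\varphi_0}$ is compared against the Hadamard-basis entropy of $\ket{\varphi_1}$ on exactly Bob's registers $B\tilde B A_c$. The choice of the numerical parameter $0.3$ (equivalently the later restriction $\epsilon \leq 0.3$) is dictated precisely by the requirement that $h(0.09)$ fall just below this threshold.
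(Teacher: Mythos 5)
Your proof is correct and is essentially the paper's own argument rearranged as a contradiction: both apply the CIT to $\ket{\varphi_0}$ to lower-bound the Hadamard-basis entropy on $B\tilde{B}A_c$ by $1-\delta$, compare it to the corresponding entropy of $\ket{\varphi_1}$ via the continuity bound of Proposition~\ref{prop:entropycont}, and relate the purified distance of the measured-and-traced states to that of $\ket{\varphi_0},\ket{\varphi_1}$ by data processing. The numerical tightness you flag ($h(0.09)\approx 0.43647$ versus $0.4365$) is genuine and is exactly why the paper's chain ends with $1-2\delta>0.127$.
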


\begin{proof}
Define $\sigma^0_{ZA\tilde{A}A_cB\tilde{B}B_c}$ analogously to $\rho^0_{ZA\tilde{A}A_cB\tilde{B}B_c}$, except that the $R$ register of $\ket{\varphi_0}$ is measured in the Hadamard basis instead of the computational basis.
Then, we can fill in the CIT statement Theorem \ref{thm:CIT} to obtain the inequality
we combine with our assumption of $H(\rho^0_{ZA\tilde{A}B_c} | A\tilde{A}B_c) \leq \delta$ to get 
\[
H(\sigma^0_{ZB\tilde{B}A_c} | B\tilde{B}A_c) \geq 1 - \delta \,.
\]
Recall now that we assumed $H(\sigma^1_{ZB\tilde{B}A_c} | B\tilde{B}A_c) \leq \delta$ and that therefore
\[
|H(\sigma^0_{ZB\tilde{B}A_c} | B\tilde{B}A_c) - H(\sigma^1_{ZB\tilde{B}A_c} | B\tilde{B}A_c) | \geq 1 - 2\delta > 0.127\,.
\]
By the contrapositive of Proposition~\ref{prop:entropycont}, this implies 
\[
\mathcal P(\sigma^0_{ZB\tilde{B}A_c}, \sigma^1_{ZB\tilde{B}A_c}) > 0.013  \,.
\]
Recall $\sigma^0_{ZB\tilde{B}A_c}$ was obtained from $\ket{\phi_0}$ by tracing out $A\tilde{A}B_c$ and measuring $R$ in the Hadamard basis. Similarly, $\sigma^1_{ZB\tilde{B}A_c}$ was obtained by applying precisely the same operation to $\ket{\phi_1}$.
Therefore, the lower bound on the distance  $\mathcal P(\sigma^0_{ZB\tilde{B}A_c}, \sigma^1_{ZB\tilde{B}A_c})$ implies the same lower bound for $\mathcal P(\ket{\phi_0},\ket{\phi_1})$. This follows from data-processing for the fidelity (e.g.\ \cite[Proposition 3.2]{Tomamichel2016}).
\end{proof}

Now we are ready to state our replacement for Proposition \ref{prop:states-must-have-low-fidelity}.

\begin{prop} \label{prop:states-far-apart}
Let $0 \leq \epsilon \leq 0.3$ and let $\ket{\varphi_0}_{RA\tilde A A_c B \tilde B B_c}\in \mathcal S_0^{\epsilon, \mathrm{meas}}$, $\ket{\varphi_1}_{RA\tilde A A_c B \tilde B B_c}\in \mathcal S_1^{\epsilon, \mathrm{meas}}$. Then,
\begin{equation*}
    \mathcal P(\ket{\varphi_0}, \ket{\varphi_1}) > 0.013.
\end{equation*}
\end{prop}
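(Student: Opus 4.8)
The plan is to combine the two lemmas that immediately precede this proposition, Lemma~\ref{lem:how-to-win-BB84} and Lemma~\ref{lem:states-must-have-low-fidelity-BB84}, which together do essentially all of the work. The role of Lemma~\ref{lem:how-to-win-BB84} is to translate membership in the good sets $\mathcal S_i^{\epsilon,\mathrm{meas}}$ (a statement about guessing probabilities) into the conditional-entropy bounds that Lemma~\ref{lem:states-must-have-low-fidelity-BB84} requires as hypotheses. The role of Lemma~\ref{lem:states-must-have-low-fidelity-BB84} is then to certify, via the CIT uncertainty relation (Theorem~\ref{thm:CIT}) together with continuity of conditional entropy (Proposition~\ref{prop:entropycont}), that a state from which the computational-basis outcome of $R$ is predictable and a state from which the Hadamard-basis outcome is predictable must be separated by more than $0.013$ in purified distance. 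So the proposition is obtained purely by gluing these two results together with the correct choice of parameter $\delta$.

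Concretely, first I would set $\delta = h(\epsilon^2)$ and verify that the hypothesis $\epsilon \leq 0.3$ guarantees the threshold $\delta \leq h[(0.3)^2]$ demanded by Lemma~\ref{lem:states-must-have-low-fidelity-BB84}. Since $\epsilon^2 \leq 0.09 < 1/2$ and the binary entropy $h$ is monotonically increasing on $[0,1/2]$, this is immediate from $\epsilon^2 \leq (0.3)^2$. Next, because $\ket{\varphi_0} \in \mathcal S_0^{\epsilon,\mathrm{meas}}$, Lemma~\ref{lem:how-to-win-BB84} applied to $\ket{\varphi_0}$ with the computational-basis measurement of $R$ yields $H(\rho^0_{ZA\tilde A B_c} \,|\, A\tilde A B_c) \leq \delta$, and because $\ket{\varphi_1} \in \mathcal S_1^{\epsilon,\mathrm{meas}}$, the same lemma applied to $\ket{\varphi_1}$ with the Hadamard-basis measurement yields $H(\sigma^1_{ZB\tilde B A_c} \,|\, B\tilde B A_c) \leq \delta$. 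These are precisely two of the four entropy bounds produced by Lemma~\ref{lem:how-to-win-BB84}, and they are exactly the two hypotheses of Lemma~\ref{lem:states-must-have-low-fidelity-BB84}.

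Finally, feeding these two bounds into Lemma~\ref{lem:states-must-have-low-fidelity-BB84} gives $\mathcal P(\ket{\varphi_0}, \ket{\varphi_1}) > 0.013$, which is the claim. I do not anticipate a genuine obstacle, since the substance of the argument lives in the earlier lemmas; the only point that needs care is bookkeeping. Specifically, one must select the correct two of the four inequalities from Lemma~\ref{lem:how-to-win-BB84} — Alice's registers $A\tilde A B_c$ for the computational-basis measurement of $\ket{\varphi_0}$, and Bob's registers $B\tilde B A_c$ for the Hadamard-basis measurement of $\ket{\varphi_1}$ — so that they line up with the deliberately asymmetric hypotheses of Lemma~\ref{lem:states-must-have-low-fidelity-BB84}, and one must confirm the entropy-threshold inequality $h(\epsilon^2) \leq h[(0.3)^2]$. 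This replaces Proposition~\ref{prop:states-must-have-low-fidelity} from the routing analysis, the only difference being that the geometric/Uhlmann argument there is swapped for the entropic uncertainty argument here, exactly as foreshadowed in the Methods section.
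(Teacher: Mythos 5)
Your proposal is correct and follows exactly the same route as the paper, whose proof of this proposition is simply the one-line statement that it follows from combining Lemma~\ref{lem:how-to-win-BB84} and Lemma~\ref{lem:states-must-have-low-fidelity-BB84}. Your additional bookkeeping --- verifying $h(\epsilon^2)\leq h[(0.3)^2]$ via monotonicity of $h$ on $[0,1/2]$ and selecting the two relevant entropy bounds out of the four --- is exactly the implicit content of that one-liner.
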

\begin{proof}
This follows from combining Lemma \ref{lem:how-to-win-BB84} and Lemma \ref{lem:states-must-have-low-fidelity-BB84}.
\end{proof}

We can now proceed to proving security of the measuring protocol.
\begin{prop}\label{prop:no-perfect-strategies} Let $0 \leq \epsilon \leq 0.3$ and $n \geq 10$, $q$, $n \in \mathbb N$. Then, a uniformly random function $f:\{0,1\}^{2n} \to \{0,1\}$ has the following property with probability at least $1-2^{-2^{n}}$: 
Any $(\epsilon,\frac{3}{4}\cdot 2^{2n})$-perfect $q$-qubit strategy for \PVBB requires
\begin{equation*}
q > \frac{1}{2} n - 5,
\end{equation*}
where $\ket{\psi}$ is a state on $2q+1$ qubits.
\end{prop}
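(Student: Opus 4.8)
The plan is to run the contradiction argument of Proposition~\ref{prop:after-embezzlement} essentially verbatim, with the measuring-protocol separation bound of Proposition~\ref{prop:states-far-apart} replacing the routing-protocol bound of Proposition~\ref{prop:states-must-have-low-fidelity}. The crucial point that makes this transfer painless is that the two downstream ingredients have already been stated to cover both protocols: Lemma~\ref{lem:classical_compression} is proved for $\# \in \{\mathrm{route}, \mathrm{meas}\}$, and Lemma~\ref{lem:stable_E4} is phrased purely in terms of a fixed $(\epsilon,q)$-classical rounding $g$ of size $k$ and is indifferent to which protocol produced it.

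Concretely, I would first invoke Proposition~\ref{prop:states-far-apart}: for $0 \leq \epsilon \leq 0.3$, every $\ket{\varphi_0} \in \mathcal S_0^{\epsilon, \mathrm{meas}}$ and $\ket{\varphi_1} \in \mathcal S_1^{\epsilon, \mathrm{meas}}$ satisfy $\mathcal P(\ket{\varphi_0}, \ket{\varphi_1}) > 0.013$. This is exactly the separation hypothesis demanded by Lemma~\ref{lem:classical_compression} with $\# = \mathrm{meas}$ and $\epsilon_0 = 0.3$, so that lemma produces an $(\epsilon, q)$-classical rounding $g$ of size $k = \log(927) 2^{2q+2}$ for the measuring protocol. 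Next I would apply Lemma~\ref{lem:stable_E4} to this $g$: under the assumption $q \leq \frac{1}{2}n - 5$, a uniformly random $f$ satisfies, with probability at least $1 - 2^{-2^{n}}$, that $g(f_A(x), f_B(y), \lambda) = f(x,y)$ holds on strictly fewer than $3/4$ of all input pairs, for every choice of $f_A$, $f_B$, $\lambda$. Fixing such an $f$, I would then assume for contradiction the existence of an $(\epsilon, \frac{3}{4}\cdot 2^{2n})$-perfect $q$-qubit strategy for \PVBB with $q \leq \frac{1}{2}n - 5$; by the defining property of a classical rounding (Definition~\ref{def:classical-rounding}), the associated $f_A$, $f_B$, $\lambda$ would force $g(f_A(x), f_B(y), \lambda) = f(x,y)$ to hold on at least $\frac{3}{4}\cdot 2^{2n}$ pairs, contradicting the choice of $f$.

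I do not expect a genuine obstacle in the present proof, because the only place where the measuring protocol truly diverges from the routing protocol---the claim that the two ``good'' sets of post-communication states are separated in purified distance---has already been discharged. For the routing protocol that separation is geometric, whereas here it rests on the strong complementary information tradeoff (Theorem~\ref{thm:CIT}) combined with the continuity of conditional entropy (Proposition~\ref{prop:entropycont}), both packaged into Proposition~\ref{prop:states-far-apart}. The entire content of the present argument is therefore the bookkeeping observation that the threshold $0.013$ delivered by Proposition~\ref{prop:states-far-apart} matches the threshold required as input to Lemma~\ref{lem:classical_compression}; once this is noted, the compression, counting, and contradiction steps are identical to the routing case, so the proof reduces to ``combine Proposition~\ref{prop:states-far-apart}, Lemma~\ref{lem:classical_compression}, and Lemma~\ref{lem:stable_E4}.''
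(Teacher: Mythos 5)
Your proposal is correct and follows exactly the paper's own route: the paper proves this proposition by citing Proposition~\ref{prop:states-far-apart}, Lemma~\ref{lem:classical_compression}, and Lemma~\ref{lem:stable_E4} and running the contradiction argument of Proposition~\ref{prop:after-embezzlement} unchanged. Your observation that the only protocol-specific ingredient is the $0.013$ separation of $\mathcal S_0^{\epsilon,\mathrm{meas}}$ and $\mathcal S_1^{\epsilon,\mathrm{meas}}$, which matches the threshold required by Lemma~\ref{lem:classical_compression}, is precisely the point the paper relies on.
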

\begin{proof}
The statement follows from Proposition \ref{prop:states-far-apart}, Lemma \ref{lem:classical_compression} and Lemma \ref{lem:stable_E4} in the same way as in the proof of Proposition \ref{prop:after-embezzlement}.
\end{proof}

\begin{thm} \label{thm:second-main-result}
Let $f:\{0,1\}^{2n} \to \{0,1\}$, $n \geq 10$, $r$, $q$, $n \in \mathbb N$ and let 
\begin{equation*}
q \leq \frac{1}{2} n- 5.
\end{equation*}
Let us assume that the verifiers choose a function $f$ uniformly at random at the beginning and that they run the protocol \PVBB sequentially $r$-times, choosing $x$, $y$ uniformly at random each time. Moreover, let us assume that Alice and Bob control at most $q$ qubits each at the beginning of each iteration of \PVBB. Then, the attackers are caught with probability at least $1 - 0.98^r$.
\end{thm}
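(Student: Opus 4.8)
The plan is to mirror the routing-protocol analysis: first establish a single-round catch probability of at least $2 \cdot 10^{-2}$ for \PVBB, and then amplify it to $1 - 0.98^r$ by the sequential repetition argument, exactly as in the proof of Proposition \ref{prop:repitition}. Since no free-standing single-round theorem for the measuring protocol is stated, I would assemble the per-round bound directly from Proposition \ref{prop:no-perfect-strategies}.

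For the single round I would argue as in Theorem \ref{thm:main-result}, but invoking Proposition \ref{prop:no-perfect-strategies} in place of Proposition \ref{prop:after-embezzlement}. Fix $\epsilon = 0.3$, the largest value admitted by Proposition \ref{prop:states-far-apart}. By Proposition \ref{prop:no-perfect-strategies}, a uniformly random $f$ satisfies, except with probability at most $2^{-2^n}$, that no $(\epsilon, \frac{3}{4} \cdot 2^{2n})$-perfect $q$-qubit strategy for \PVBB exists when $q \leq \frac{1}{2} n - 5$. Unwinding the definition of an $(\epsilon, l)$-perfect strategy, this means that for every $q$-qubit strategy the attackers choose, on strictly more than $\frac{1}{4} \cdot 2^{2n}$ of the input pairs $(x,y)$ they are caught with probability strictly greater than $\epsilon^2 = 0.09$. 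Since $x$ and $y$ are uniform, averaging over the input pairs yields a per-round catch probability of at least $\frac{1}{4} \cdot 0.09 = 2.25 \cdot 10^{-2} > 2 \cdot 10^{-2}$. Crucially, because the EPR-based \PVBB{} is \emph{completely} equivalent to the prepare-and-measure version (in contrast to the routing protocol, where passing from \PVroutemod to \PVroute costs a factor $\frac{1}{2}$), this bound applies directly to the protocol as run, with no further loss.

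The second step is the amplification, which I would carry out as in Proposition \ref{prop:repitition}. Let $X_i = 1$ indicate that the attackers escape detection in round $i$. Two reductions are required: first, the single-round bound extends from pure to mixed shared states, since a mixed state is a convex combination of pure states and the acceptance probability is linear in the attackers' state, so the optimum is attained by a deterministic strategy on a pure state; this is the measuring-protocol analog of Remark \ref{rmk:random-does-not-help}, and its justification is identical. Second, while the attackers' state entering round $i$ may be correlated with the history $X_{i-1}, \dots, X_1$, the EPR pair freshly prepared by $V_0$ and the fresh inputs $(x,y)$ of round $i$ are independent of that history, and by hypothesis the attackers again control at most $q$ qubits at the start of round $i$. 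Hence the single-round bound holds conditionally, giving $\mathbb P(X_i = 1 \mid X_{i-1} = 1, \dots, X_1 = 1) \leq 0.98$ for each $i$, and multiplying over the $r$ rounds gives $\mathbb P(X_r = 1, \dots, X_1 = 1) \leq 0.98^r$, i.e.\ the attackers are caught with probability at least $1 - 0.98^r$.

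The step I expect to be most delicate is the conditional per-round bound in the presence of cross-round correlations: Proposition \ref{prop:no-perfect-strategies} constrains a single run with a freshly prepared EPR pair, whereas across repetitions the attackers may hold correlated side information. The resolution is that the single-round result quantifies over \emph{all} starting states on $R A \tilde A A_c B \tilde B B_c$ (cf.\ Remark \ref{rmk:initial-state-product}), so the accumulated history can simply be absorbed into the local auxiliary registers $\tilde A, \tilde B$; the fresh half of the EPR pair landing in $A$ is uncorrelated with that history, and the assumption that each attacker controls at most $q$ qubits at the start of every round keeps us in the regime $q \leq \frac{1}{2} n - 5$. With these points in place, the remaining computations are routine.
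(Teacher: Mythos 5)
Your proposal is correct and follows essentially the same route as the paper, whose proof of this theorem is precisely the one-line instruction to combine Proposition \ref{prop:no-perfect-strategies} with the arguments of Theorem \ref{thm:main-result} and Proposition \ref{prop:repitition}; your per-round bound of $\frac{1}{4}\epsilon^2 = 2.25\cdot 10^{-2}$ at $\epsilon = 0.3$ and your observation that no factor-$\frac{1}{2}$ loss is incurred (since the EPR-based \PVBB is fully equivalent to the prepare-and-measure version) are exactly the intended details.
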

\begin{proof}
The assertion follows from Proposition \ref{prop:no-perfect-strategies} along the lines of the proofs of Theorem \ref{thm:main-result} and Proposition \ref{prop:repitition}.
\end{proof}

\begin{remark}
Note that \PVBB would still be secure if we replaced the requirement that the honest prover needs to send the bit $b$ to both verifiers at the end of the protocol by requiring that $P$ sends $b$ only to $V_{f(x,y)}$. This can be seen from the proof of Lemma \ref{lem:states-must-have-low-fidelity-BB84}.
\end{remark}

\section{Resistance to noise} \label{sec:noise}

Finally, we consider the effect of noise on \PVroute and \PVBB.  Let us now assume that the noise in the experiment causes the honest prover 
to be rejected with probability at most $\eta$. In order to deal with the noise, the verifiers will repeat the protocol independently $r$-times and accept if the individual rounds accept more than $0.996(1-\eta)r$ times. We will call such protocols \emph{${PV}_{\mathrm{noisy, \#}}^f(r)$ with noise level $\eta$}, where $\# \in \{\mathrm{route}, \mathrm{meas}\}$. The next theorem shows that such protocols are still secure.

\begin{thm} \label{thm:noise-robustness}
Let  $r$, $q$, $n \in \mathbb N$, $n \geq 10$, $0 \leq \eta \leq 10^{-2}$. Assume that a function $f:\{0,1\}^{2n} \to \{0,1\}$ is chosen uniformly at random. Then, an honest prover  succeeds in ${PV}_{\mathrm{noisy, \#}}^f(r)$ with noise level $\eta$ with probability at least
\begin{equation*}
   1-c^r,
\end{equation*}
where $\# \in \{\mathrm{route}, \mathrm{meas}\}$. Attackers controlling at most $q \leq \frac{1}{2}n- 5$ qubits each round will succeed with probability at most
\begin{equation*}
    c'^r,
\end{equation*}
where $c, c'<1$ are universal constants. In particular, we can choose $c =c'= \exp(-8\cdot 10^{-6})$.
\end{thm}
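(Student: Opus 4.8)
The plan is to prove the two claims by separate concentration arguments built on the single-round analysis, with no new quantum input required. The honest-prover bound depends only on the noise model and holds for every $f$. For the attacker bound I fix $f$ in the ``good'' event of Proposition~\ref{prop:after-embezzlement} (for $\#=\mathrm{route}$) or Proposition~\ref{prop:no-perfect-strategies} (for $\#=\mathrm{meas}$), which a uniformly random $f$ satisfies with probability at least $1-2^{-2^{n}}$, a doubly-exponentially small failure that I treat as negligible. For such an $f$, Theorem~\ref{thm:main-result} together with Proposition~\ref{protocols-equivalent} (routing), respectively Theorem~\ref{thm:second-main-result} (measuring), guarantee that in a single round the attackers are caught with probability at least $2\cdot10^{-2}$, hence accepted with probability at most $p:=0.98$.

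For the honest prover, write $T$ for the number of the $r$ independent rounds in which the prover is rejected. Each round fails independently with probability at most $\eta$, so $T$ is a sum of independent Bernoulli variables with $\mathbb E[T]\le \eta r$. The prover is rejected by ${PV}_{\mathrm{noisy},\#}^f(r)$ exactly when the number of accepted rounds is at most $0.996(1-\eta)r$, i.e.\ when $T\ge r-0.996(1-\eta)r=0.004(1-\eta)r+\eta r$. Thus $T$ must exceed its mean by at least $0.004(1-\eta)r\ge 0.004\cdot 0.99\,r$, and Hoeffding's inequality bounds the rejection probability by $\exp\!\big(-2(0.004\cdot 0.99)^2 r\big)$.

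For the attackers, let $S'$ be the number of rounds the verifiers accept. The rounds are not independent, since the shared state may persist across repetitions; however, exactly as in the proof of Proposition~\ref{prop:repitition}, the freshness of the maximally entangled pair (or BB84 qubit) in each round, together with the assumption that the attackers control at most $q$ qubits at the start of every round, yields $\mathbb P(\text{round } i \text{ accepted}\mid \text{history})\le p$. Consequently $S'$ is stochastically dominated by a $\mathrm{Binomial}(r,p)$ variable, to which a Chernoff/Hoeffding tail bound applies. The attackers win only if $S'>0.996(1-\eta)r$; since $\mathbb E[S']\le p r=0.98 r$ while $0.996(1-\eta)\ge 0.996\cdot 0.99=0.98604$ for $\eta\le 10^{-2}$, the required deviation is at least $(0.98604-0.98)r=0.00604\,r$, giving a success probability at most $\exp\!\big(-2(0.00604)^2 r\big)$.

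Both exponents lie comfortably below $-8\cdot10^{-6}r$ (the honest rate is about $-3.1\cdot10^{-5}$, the attacker rate about $-7.3\cdot10^{-5}$), so the common choice $c=c'=\exp(-8\cdot10^{-6})$ works. I expect the main obstacle to be the attacker step: one must justify replacing the per-round \emph{conditional} bound by a genuine tail bound, which the stochastic-domination observation handles, and then check the worst case over $\eta\in[0,10^{-2}]$. Both relevant gaps — the mean-to-threshold gap for the honest prover and the threshold-to-mean gap for the attacker — are smallest at $\eta=10^{-2}$, so it suffices to verify positivity of the gap and the stated constant there, as done above.
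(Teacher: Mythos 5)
Your proposal is correct and follows essentially the same route as the paper: reduce each claim to a per-round (conditional) acceptance bound, dominate the dependent round outcomes by an i.i.d.\ binomial (the paper isolates this as Lemma~\ref{lem:bound-by-iid}), and apply a concentration inequality, checking the worst case at $\eta=10^{-2}$. The only cosmetic difference is that you use Hoeffding's additive tail where the paper uses the multiplicative Chernoff bound; your resulting exponents ($\approx 3.1\cdot 10^{-5}$ and $\approx 7.3\cdot 10^{-5}$) indeed dominate the stated constant $8\cdot 10^{-6}$.
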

\begin{proof}
Let $X_i$ be random variables which are $1$ if the honest prover succeeds at round $i$ and $0$ if she fails. Let $X := \sum_{i = 1}^r X_i$. Then, the probability that the honest prover succeeds at $PV_{\mathrm{noisy, \#}}^f(r)$ with noise level $\eta$ is $\mathbb P[X > 0.996(1-\eta)r]$. Since in each round the honest prover accepts with probability at least $0.99$, this implies that $\mathbb P(X_i = 1| X_{i-1} = x_{i-1}, \ldots, X_{1} = x_{1}) \geq 0.99$ for any $x_j \in \{0,1\}$, $j \in \{1, \ldots, r\}$.

Let $X^\prime_i$ be i.i.d.\ random variables which are $1$ with probability $0.99$ and $0$ with probability $10^{-2}$. Let $X^\prime = \sum_{i = 1}^r X^\prime_i$. Then, the probability that the honest prover succeeds can be bounded using the random variable $X^\prime$ as  $\mathbb P[X^\prime > 0.996(1-\eta)r]$ by Lemma \ref{lem:bound-by-iid}. We estimate
\begin{align*}
    \mathbb P[X^\prime > 0.996(1-\eta)r] &= 1- \mathbb P[X^\prime \leq 0.996(1-\eta)r] \\
    & \geq 1- e^{-\frac{r (1-\eta) 16\cdot 10^{-6}}{2}},
\end{align*}
where we have used the Chernoff bound. Inserting the bound on $\eta$, the first assertion follows.

Likewise, let $Y_i$ be a random variable which is $1$ if the attackers succeed in round $i$ and $0$ if they do not, $i \in \{1, \ldots r\}$. Let $Y = \sum_{i = 1}^r Y_i$. Then, the probability that the attackers succeed is $\mathbb P[Y > 0.996(1-\eta)r]$. Using the same argument as in Proposition \ref{prop:repitition}, Corollary \ref{cor:M2-repetitions} and Theorem \ref{thm:second-main-result}, respectively, yields that $\mathbb P(Y_i = 1| Y_{i-1} = y_{i-1}, \ldots, Y_{1} = y_{1}) \leq 0.98$ for any $y_j \in \{0,1\}$, $j \in \{1, \ldots, r\}$.

Moreover, let $Y^\prime_i$ be i.i.d.\ random variables which are $1$ with probability $0.98$ and $0$ with probability $2 \cdot 10^{-2}$. Additionally, let $Y^\prime = \sum_{i = 1}^r Y_i^\prime$. Then, by Lemma \ref{lem:bound-by-iid}, the probability that the attackers succeed is at most $\mathbb P[Y^\prime > 0.996(1-\eta)r]$. Let us define $\eta^\prime = 2 \cdot 10^{-2}$. Solving the equation 
\begin{equation*}
    0.996 (1-\eta) = (1+\delta^\prime)(1-\eta^\prime)
\end{equation*}
for $\delta^\prime$, we obtain $\delta^\prime \geq 0.99\cdot\frac{996}{980}-1> 0$. We make a similar estimate as before, 
\begin{align*}
    \mathbb P[Y^\prime > 0.996(1-\eta)r] &= \mathbb P[Y^\prime > (1+\delta^\prime)(1-\eta^\prime)r] \\
    &\leq e^{-\frac{r (1-\eta^\prime) (\delta^\prime)^2}{3}},
\end{align*}
where we have used the Chernoff bound again. Inserting the expressions for $\eta^\prime$ and bounding $\delta^\prime \geq 5\cdot 10^{-3}$, the second assertion follows.
\end{proof}

\section{Lower bound for concrete functions} \label{sec:concrete-functions}
\label{sec:lbconcrete}
In this section, we will finally consider concrete functions $f$ instead of uniformly random ones and prove that \PVroute and $PV^f_{\mathrm{meas}}$ are still secure against bounded attackers, although the bounds are weaker than for random functions $f$. The proofs use a connection of classical roundings to the communication complexity of $f$.

Let us fix some function $f:\{0,1\}^n \times \{0,1\}^n \to \{0,1\}$. We define $D^{1,\mu}_\epsilon(f)$ as the \emph{one-way distributional communication complexity} of the function~$f$  under some distribution $\mu$~(see \cite[Definition 3.19]{Kushilevitz1996} for the (two-way) distributional communication complexity $D^{\mu}_\epsilon(f)$).
This represents the amount of (classical) bits Alice needs to send for Bob in a deterministic protocol, for Bob to be able to compute $f(x,y)$ correctly with probability $1-\epsilon$, where the probability is taken over $(x,y)$ pairs drawn from the input distribution $\mu$.

Similarly, let $D^{\parallel,\mu}_\epsilon(f)$ be the distributional communication complexity of a function $f$ in the simultaneous message passing (SMP) model.
Here Alice and Bob both are allowed to send a single message to a third party, the referee, who has to output the function value given these messages. We take the distributional communication complexity in the SMP model as the length of the longest message, not the sum of the length of both messages.
Several lower bounds for this model are given for  $D^{1,\mu}_\epsilon(f)$, but it's easy to see that  $D^{\parallel,\mu}_\epsilon(f) \geq D^{1,\mu}_\epsilon(f)$.

In the other lower bounds of this work, we have restricted our analysis to the case of a uniform distribution over the input pairs. The following analysis holds for any input distribution, but for simplicity we will only consider the uniform distribution. Let $u$ denote the uniform distribution over all pairs of $n$-bit strings (where $n$ will be clear from context).
For example, for the inner product function \eqref{eq:ip}, we have that $D^{1,u}_{1/2-\epsilon}(IP) \geq n/2 - \log(1/\epsilon)-1$~\cite[Example 3.29]{Kushilevitz1996}, since $D^{\mu}_\epsilon(f) \leq D^{1,\mu}_\epsilon(f)+1$.

By using the classical roundings developed for \PVroutemod and $PV^f_{\mathrm{meas}}$, we can show that for a wide range of explicit functions,
the attackers need to manipulate a number of qubits that is logarithmic in the number of bits $n$.
This bound is exponentially worse than the one we obtain for \emph{random} functions $f$, but already holds for explicitly-defined easily-computable functions, such as the inner-product function\footnote{Recall that for the example of the inner-product function, it is not hard to construct an attack that uses $n$ EPR pairs \cite{Buhrman2013}. }.
So, for an explicit easily-computable function, the ratio of entanglement that attackers need also grows unboundedly with the classical information involved (but with a worse bound on the dependence of the number of classical bits $n$ than we obtained for a random function), while the honest parties only need to manipulate a single qubit.

This can be viewed as a robust version of \cite[Theorem E.3]{Buhrman2013}, which showed that perfect attacks on any injective function (which were effectively functions with maximal deterministic one-way communication complexity) need at least $\Omega(\log(n))$ qubits.

\begin{prop}\label{prop:CC-lower-bounds}
Let $\epsilon \leq \epsilon_0$,  where $\epsilon_0$ is chosen according to the requirements of Lemma~\ref{lem:classical_compression}. Moreover, let $f$ be such that $D^{\parallel,u}_{1/4}(f) \geq k$, where $u$ is the uniform distribution.
Then there exists no $(\epsilon, \frac{3}{4}\cdot2^{2n})$-perfect $q$-qubit strategy for either \PVBB or \PVroutemod, with \[
\log(927)2^{2q+2} < k \,,
\]
implying no such strategy exists for
\[
q \leq \frac{1}{2}\log k - 3 \,.
\]
\end{prop}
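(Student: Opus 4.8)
The plan is to argue by contradiction, using the classical rounding as the bridge that turns a successful attack into an efficient simultaneous-message-passing (SMP) protocol for $f$. I would suppose that $\log(927)2^{2q+2} < k$ and that, nevertheless, there exists an $(\epsilon, \frac{3}{4}\cdot 2^{2n})$-perfect $q$-qubit strategy for one of the two protocols. Since $\epsilon \leq \epsilon_0$, Lemma~\ref{lem:classical_compression} applies for $\# \in \{\mathrm{route}, \mathrm{meas}\}$ and guarantees an $(\epsilon, q)$-classical rounding $g$ of size exactly $k' := \log(927)2^{2q+2}$.

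Next I would invoke Definition~\ref{def:classical-rounding} on the assumed strategy with $l = \frac{3}{4}\cdot 2^{2n}$. This yields functions $f_A:\{0,1\}^n \to \{0,1\}^{k'}$ and $f_B:\{0,1\}^n \to \{0,1\}^{k'}$ together with a string $\lambda \in \{0,1\}^{k'}$ such that $\tilde f(x,y) := g(f_A(x), f_B(y), \lambda)$ agrees with $f(x,y)$ on at least $\frac{3}{4}\cdot 2^{2n}$ of the $2^{2n}$ input pairs; equivalently, $\tilde f$ computes $f$ with error at most $1/4$ under the uniform distribution $u$.

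The key observation I would then make is that this $\tilde f$ is precisely an SMP protocol for $f$: on input $x$ Alice sends the $k'$-bit string $f_A(x)$ to the referee, on input $y$ Bob sends the $k'$-bit string $f_B(y)$, and the referee, who has $g$ and the constant $\lambda$ hardwired, outputs $g(f_A(x), f_B(y), \lambda)$. The longest message has length $k'$ and the protocol errs with probability at most $1/4$ over $u$, so $D^{\parallel,u}_{1/4}(f) \leq k' = \log(927)2^{2q+2} < k$, contradicting the hypothesis $D^{\parallel,u}_{1/4}(f) \geq k$. This rules out any such strategy whenever $\log(927)2^{2q+2} < k$.

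Finally, the explicit bound on $q$ would follow by elementary algebra: if $q \leq \frac{1}{2}\log k - 3$ then $2q + 2 \leq \log k - 4$, hence $\log(927)2^{2q+2} \leq \log(927)\cdot k/16 < k$ because $\log(927) < 16$, placing us in the regime just handled. I do not expect any single hard estimate here; the real care lies in packaging the reduction cleanly, namely verifying that $f_A$ and $f_B$ depend only on $x$ and $y$ respectively (so that they are legitimate one-message strategies) and that $g$ and $\lambda$ are input-independent and can therefore be absorbed into the referee. Since all of the analytic difficulty has already been discharged in establishing the existence of the classical rounding in Lemma~\ref{lem:classical_compression}, this proposition amounts to a translation of that machinery into the language of communication complexity.
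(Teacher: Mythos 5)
Your proposal is correct and follows essentially the same route as the paper: contradiction via Lemma~\ref{lem:classical_compression}, reading off $f_A$, $f_B$, $\lambda$ from Definition~\ref{def:classical-rounding}, and packaging them as an SMP protocol whose message length contradicts $D^{\parallel,u}_{1/4}(f) \geq k$. The closing algebra for the explicit bound on $q$ (using $\log(927) < 16$) is also sound and merely makes explicit what the paper leaves implicit.
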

\begin{proof}
We prove the statement by contradiction: Any assumed strategy on a low number of qubits can be directly converted into a classical communication protocol for solving $f$.
The only required observation is that the classical compression of $f$ that we get as a result of Lemma~\ref{lem:classical_compression} not only encodes a full description of the function $f$, but its parts can also be evaluated on specific $x$ and $y$ to get a communication protocol for $f$ in the required simple form.

Assume, for a contradiction, that  a $(\epsilon, \frac{3}{4} \cdot 2^{2n})$-perfect $q$-qubit strategy exists. Then,  Lemma~\ref{lem:classical_compression} implies the existence of an $(\epsilon,q)$-classical rounding of $f$ of size $k=\log(927)2^{2q+2}$.
Therefore, by Definition~\ref{def:classical-rounding}, there exists a function $g :\{0,1\}^{3k} \to \{0,1\}$, functions $f_A: \{0,1\}^{n} \to \{0,1\}^k$, $f_B: \{0,1\}^{n} \to \{0,1\}^k$, and a constant $\lambda \in \{0,1\}^k$ such that for at least $\frac{3}{4}$ of the input pairs $(x,y)$ it holds that
\[
f(x,y) = g(f_A(x),f_B(y),\lambda) \,.
\]
Given that the function $f$ and the strategy are known beforehand, all parties can precompute these objects in the communication complexity setting.

The simultaneous-message passing protocol now simply proceeds as follows: Alice sends the $k$-bit string $s=f_A(x)$ to the referee, and Bob sends the $k$-bit string $t=f_B(y)$. The referee computes $g(s,t,\lambda)$ and outputs this as function value.
\end{proof}

\begin{thm} \label{thm:comm-complex}
Let $f$ be such that $D^{\parallel,u}_{1/4}(f) \geq k$, where $u$ is the uniform distribution, and let
\begin{equation*}
    q \leq \frac{1}{2}\log k - 3.
\end{equation*}
If $x$, $y$ are chosen uniformly at random during the protocols, then attackers controlling at most $q$ qubits each are detected  during \PVroute and $PV_{\mathrm{meas}}^{f}$ with probabilities at least $2 \cdot 10^{-2}$, respectively.
\end{thm}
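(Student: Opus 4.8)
The plan is to recognise that Theorem~\ref{thm:comm-complex} is simply the quantitative ``catch probability'' restatement of Proposition~\ref{prop:CC-lower-bounds}, in exactly the way that Theorem~\ref{thm:main-result} converts the non-existence of perfect strategies (Proposition~\ref{prop:after-embezzlement}) into a lower bound on the detection probability. First I would verify that the hypothesis $q \leq \frac{1}{2}\log k - 3$ yields the condition $\log(927)2^{2q+2} < k$ demanded by Proposition~\ref{prop:CC-lower-bounds}: from $q \leq \frac{1}{2}\log k - 3$ we get $2q+2 \leq \log k - 4$, hence $2^{2q+2} \leq k/16$ and $\log(927)\,2^{2q+2} \leq \tfrac{1}{16}\log(927)\,k < k$, since $\log(927) < 16$.

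With this in hand, Proposition~\ref{prop:CC-lower-bounds} (applied with $\epsilon = \epsilon_0$, the threshold of Lemma~\ref{lem:classical_compression}) guarantees that no $(\epsilon, \tfrac{3}{4}\cdot 2^{2n})$-perfect $q$-qubit strategy exists for either \PVroutemod or \PVBB. The contrapositive of the definition of an $(\epsilon,l)$-perfect strategy then says that for \emph{every} $q$-qubit strategy, on more than $\frac{1}{4}$ of the input pairs $(x,y)$ the attackers are caught with probability strictly larger than $\epsilon^2$. Averaging this bound over the uniform choice of $(x,y)$, precisely as in the proof of Theorem~\ref{thm:main-result}, gives an overall detection probability of at least $\frac{1}{4}\epsilon^2$.

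It then remains to insert the admissible $\epsilon$ for each protocol and, for the routing case, to pass from the modified to the original protocol. For \PVroutemod I would take $\epsilon = 0.41$, the value permitted by Proposition~\ref{prop:states-must-have-low-fidelity}, obtaining detection probability at least $\frac{1}{4}(0.41)^2 > 4\cdot 10^{-2}$; Proposition~\ref{protocols-equivalent} then costs a factor $\frac{1}{2}$ and yields the claimed $2\cdot 10^{-2}$ for \PVroute. For $PV^f_{\mathrm{meas}}$ I would take $\epsilon = 0.3$, the value permitted by Proposition~\ref{prop:states-far-apart}, obtaining $\frac{1}{4}(0.3)^2 = 2.25\cdot 10^{-2} > 2\cdot 10^{-2}$ directly, since the modified and original measuring protocols are identical and no conversion factor is incurred.

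The only genuinely delicate point is the bookkeeping of these $\epsilon$ against the requirement $\epsilon \leq \epsilon_0$ of Lemma~\ref{lem:classical_compression}: I must confirm that $\epsilon = 0.41$ still forces $\mathcal P(\ket{\varphi_0}, \ket{\varphi_1}) > 0.013$ for the routing sets, which is clear because Proposition~\ref{prop:states-must-have-low-fidelity} gives the stronger bound $0.046$, and that $\epsilon = 0.3$ is admissible for the measuring sets by Proposition~\ref{prop:states-far-apart}. Everything else is the same averaging argument already used for the random-function results, so no new technical machinery is needed; the theorem is essentially a corollary assembled from the pieces proved above.
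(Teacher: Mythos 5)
Your proposal is correct and follows essentially the same route as the paper, which proves Theorem \ref{thm:comm-complex} by combining Proposition \ref{prop:CC-lower-bounds} with the averaging argument of Theorem \ref{thm:main-result} and the numerical thresholds from Propositions \ref{prop:states-must-have-low-fidelity} and \ref{prop:states-far-apart}. You have merely made explicit the arithmetic ($\tfrac{1}{4}(0.41)^2 > 4\cdot 10^{-2}$ with the factor $\tfrac{1}{2}$ from Proposition \ref{protocols-equivalent} for routing, and $\tfrac{1}{4}(0.3)^2 > 2\cdot 10^{-2}$ for measuring) that the paper leaves implicit.
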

\begin{proof}
Using Proposition \ref{prop:CC-lower-bounds}, this follows from Propositions \ref{prop:states-must-have-low-fidelity} and \ref{prop:states-far-apart} in a similar way as Theorem \ref{thm:main-result}. 
\end{proof}
For the inner product function, this implies Theorem \ref{thm:intro-IP-lower-bounds}.

\begin{remark}
Replacing the upper bound on $q$ by $q \leq \frac{1}{2}\log k - 3$ for $f$ be such that $D^{\parallel,u}_{1/4}(f) \geq k$ in Proposition \ref{prop:repitition}, Theorem \ref{thm:second-main-result} and Theorem \ref{thm:noise-robustness}, we can derive the corresponding statements on repetition and noise robustness for a concrete function by following the exact same proof strategies. Moreover, we could consider \PVroute instead of \PVroutemod, decreasing the detection probability by a factor $1/2$.
\end{remark}

\section{Attack model and comparison to previous work} \label{sec:attack-model}
When analyzing protocols for quantum PV in a resource-bounded setting, care has to be taken with respect to what is counted exactly.
A fair comparison will involve weighing the resources required of an honest party compared to those of the attackers, proving hopefully that any attack is much harder to perform than executing the protocol.
Which resources are important, and how to weigh them, is not trivial and there are several choices to be made in how to count the resources involved.
These choices include:
\begin{itemize}
    \item Do we only count quantum information manipulated by the attackers and the honest parties, or do we also quantify classical information?
    \item Do we look at the size of \emph{all} quantum resources required, or do we just want to limit the pre-shared state of the attackers?
    \item Do we allow quantum communication between the attackers, or do we assume this communication to be classical and subsume these messages in the entanglement by way of teleportation?
    \item Would it be possible to bound the resources using something else than the number of qubits, such as entanglement entropy?
\end{itemize}
How the strength of attack resource lower bounds should be interpreted, depends on which choices are made here.

In this work, we only count the quantum information: $x$ and $y$ are distributed amongst the attackers for free\footnote{I.e., the first step of the attack lets Alice see $x$ and Bob see $y$, and after the messages are exchanged both attackers know $(x,y)$.}, and we bound the total number of qubits each of the attackers utilize.

Counting in such a way, Theorems \ref{thm:intro} and \ref{thm:intro-2} imply that the amount of \emph{quantum} resources used by the attackers is \emph{unbounded}\footnote{This comparison is at most exponential if all information is counted, via the attack of Beigi and K\"onig~\cite{Beigi2011}, but has no a-priori bound if classical communication is considered free.} as a function of the quantum information manipulated by the honest party. Indeed, our bounds show that the number of qubits manipulated by the attackers grows linearly as the amount of classical information grows while the honest party only manipulates a single qubit.

Note that there still is a gap between the best known attack (needing $2^n$ EPR pairs, for an honest protocol with $n$ classical bits and one single qubit) and our lower bounds, when we look at how the requirement itself grows as function of $n$.
This gap is not evident when only looking at how the respective quantum requirements relate as a function of each other.

The choices made in the attack model also influence the comparison to other results. The independent recent work by  Junge, Kubicki, Palazuelos, and P\'erez-Garc\'ia~\cite{Junge2021} uses an attack model which is very close to ours. The authors do not count classical communication either, but only compare the quantum resources needed by the honest prover compared to the attackers. In the case that the honest prover has to manipulate quantum systems with $2\log{n}$ qubits, the authors can show that the attackers need a quantum system of $\Omega(n^\alpha)$ qubits for some $\alpha > 0$, provided that the attack being used is \emph{smooth}. The classical information that the honest prover has to manipulate during this protocol is $n^2$ bits. The smoothness requirement covers all known attacks. Furthermore, the authors put forward a conjecture in Banach space theory that, if true, would allow to remove the smoothness assumption, and give evidence for it. 

When only comparing quantum resources, our bounds are stronger in the sense that while the ratio of quantum resources in  \cite{Junge2021} is exponential, the ratio in the qubit routing and measuring protocols is unbounded. The trade-off between the classical information sent and the number of qubits needed by the attackers is similar in all cases. On the other hand, \cite{Junge2021} establishes the link to geometric functional analysis, which could allow to tackle the ultimate goal, i.e., showing that the quantum resources the attackers need are exponential in \emph{all} the resources the honest party needs.

When it comes to previous protocols for quantum PV, the best studied is the BB84-type protocol \cite{Kent2011,Lau2011, buhrman2014position,Beigi2011, tomamichel2013monogamy, Ribeiro2015}, which was the inspiration for our measuring protocol. All the bounds for this protocol are linear in the sense that the honest prover needs to manipulate $n$ qubits while the attackers need $\Omega(n)$ qubits to break the protocol. The improvement of \cite{tomamichel2013monogamy} over \cite{Beigi2011} is that only single qubit measurements are necessary. Our routing protocol is simpler in the sense that the honest party only needs to route one qubit instead of measuring $n$ qubits separately. On the other hand, the BB84-type protocol needs the honest prover to send back merely classical information, whereas the qubit routing protocol requires to send back quantum information. The measuring protocol remedies this fact while the honest prover still needs to manipulate a single qubit.
However, the detection probability of attackers can be made arbitrarily large using parallel repetition of the BB84-type protocol as shown in \cite{tomamichel2013monogamy}, while we can increase the probability in both our protocols only through sequential repetition.

An essential difference with respect to the proof technique is that \cite{Beigi2011} showed that bounds on the success probability of the attackers without entanglement can be lifted to bound the success probability with pre-shared entanglement. In our case, this technique no longer works and we have to resort to different methods. 

Finally, \cite{Gonzales2019} also proves linear lower bounds for protocols based on non-local quantum computation (the BB84-type protocol is of the same type). However, the authors bound the entanglement entropy of the attackers instead of the dimension of their quantum systems. As a downside, their attack model does not allow for quantum communication of the attackers. For the BB84-type protocol, the latter restriction was removed in \cite{tomamichel2013monogamy} compared to \cite{Beigi2011} (but again present in the assumptions of \cite{Ribeiro2015}).

\section{Technical results} \label{sec:technical}
First, we restate and prove \cite[Lemma E.1]{Buhrman2013} in order to make the main argument self-contained.
\begin{lem}\label{lem:E1}
Let $\ket{\psi_0}$, $\ket{\psi_1}$ be states on $RA\tilde A A_c B \tilde B B_c$ and such that there are unitaries $K_{A \tilde A B_c}$, $L_{B \tilde B A_c}$ and states $\ket{\varphi_0}_{\tilde A A_c B \tilde B B_c}$, $\ket{\varphi_1}_{A \tilde A A_c \tilde B B_c}$ which satisfy
\begin{align*}
    K_{A \tilde A B_c} \ket{\psi_0}_{RA\tilde A A_c B \tilde B B_c} &= \ket{\Omega}_{RA} \otimes \ket{\varphi_0}_{\tilde A A_c B \tilde B B_c} \\
    L_{B \tilde B A_c} \ket{\psi_1}_{RA\tilde A A_c B \tilde B B_c} &= \ket{\Omega}_{RB} \otimes \ket{\varphi_1}_{A \tilde A A_c \tilde B B_c}.
\end{align*}
Then, 
\begin{equation*}
    |\langle \psi_0 | \psi_1 \rangle| \leq \frac{1}{2}.
\end{equation*}
\end{lem}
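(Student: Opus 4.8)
The plan is to bound the overlap through a fidelity computation built on a monogamy-of-entanglement observation: in $\ket{\psi_0}$ the reference $R$ is maximally entangled with registers on Alice's side, whereas in $\ket{\psi_1}$ it is maximally entangled with registers on Bob's side, and these two situations cannot be simultaneously close. Since $F(\ket{\psi_0},\ket{\psi_1}) = |\langle\psi_0|\psi_1\rangle|$ for pure states, it suffices to show $F(\ket{\psi_0},\ket{\psi_1})\le\tfrac12$, and I would prove this by data processing, tracing everything down to the reference qubit $R$ together with Bob's registers $B\tilde B A_c$.

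First I would identify the two relevant reduced states on $RB\tilde B A_c$. Because $K_{A\tilde A B_c}$ acts only on registers disjoint from $RB\tilde B A_c$, the marginal $\rho^0_{RB\tilde B A_c}$ of $\ket{\psi_0}$ equals the corresponding marginal of $K\ket{\psi_0}=\ket{\Omega}_{RA}\otimes\ket{\varphi_0}$. In that state $R$ is entangled only with $A$, which gets traced out, so $\rho^0_{RB\tilde B A_c}=\tfrac{1}{2}I_R\otimes\sigma_{B\tilde B A_c}$ is a product between $R$ and Bob's side. Dually, $L_{B\tilde B A_c}$ is a unitary on exactly Bob's registers, so conjugating by it turns $\rho^1_{RB\tilde B A_c}$ into the marginal of $L\ket{\psi_1}=\ket{\Omega}_{RB}\otimes\ket{\varphi_1}$, namely $\dyad{\Omega}_{RB}\otimes\xi_{\tilde B A_c}$, in which $R$ is maximally entangled with the qubit $B$.

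Then I would chain data processing for the fidelity (which is available, as used elsewhere in the paper, cf.\ \cite[Proposition 3.2]{Tomamichel2016}). Tracing out Alice's registers gives $F(\ket{\psi_0},\ket{\psi_1})\le F(\rho^0_{RB\tilde B A_c},\rho^1_{RB\tilde B A_c})$; applying the unitary $L$ (fidelity-preserving) and then tracing out $\tilde B A_c$ (fidelity-nondecreasing) brings the right-hand side to $F\big(\tfrac{1}{2}I_R\otimes\sigma'_B,\ \dyad{\Omega}_{RB}\big)$ for some state $\sigma'_B$. Since the second argument is pure, this fidelity equals $\sqrt{\bra{\Omega}(\tfrac{1}{2}I_R\otimes\sigma'_B)\ket{\Omega}}=\sqrt{1/4}=\tfrac12$, independently of $\sigma'_B$, because $\bra{\Omega}(\tfrac{1}{2}I_R\otimes\sigma'_B)\ket{\Omega}=\tfrac14\tr[\sigma'_B]=\tfrac14$. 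Combining the inequalities yields $|\langle\psi_0|\psi_1\rangle|\le\tfrac12$.

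The main obstacle is setting up the monogamy structure correctly and keeping the direction of data processing straight: the fidelity can only increase under channels, so I must arrange the final channel so that the processed pair has fidelity exactly $\tfrac12$, which then upper-bounds the original overlap. The crucial structural input---that $K$ leaves the $RB\tilde B A_c$ marginal untouched while $L$ acts as a local unitary there---is exactly what encodes that $R$'s entanglement sits on opposite sides in the two states. A naive termwise Cauchy--Schwarz applied directly to $\langle\psi_0|\psi_1\rangle$ only yields the trivial bound $1$, so genuinely exploiting this complementarity (rather than merely orthonormality of the local branches) is what makes the factor $\tfrac12$ appear.
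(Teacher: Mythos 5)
Your proof is correct, and it takes a genuinely different route from the paper's. The paper argues directly on the inner product: since $K_{A\tilde A B_c}$ and $L_{B\tilde B A_c}$ act on disjoint registers they commute, so $\langle\psi_0|\psi_1\rangle$ can be rewritten as $\bra{\Omega}_{RA}\otimes\bra{\varphi_0'}\,\ket{\Omega}_{RB}\otimes\ket{\varphi_1'}$ with $\ket{\varphi_0'}=L\ket{\varphi_0}$, $\ket{\varphi_1'}=K\ket{\varphi_1}$, and then uses the identity $\bra{\Omega}_{RA}\ket{\Omega}_{RB}=\tfrac12(\ket{0}_B\bra{0}_A+\ket{1}_B\bra{1}_A)$ — one half times a qubit-transfer isometry — followed by Cauchy--Schwarz to extract the factor $\tfrac12$. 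You instead trace down to $RB\tilde B A_c$, observe that the two marginals are (after conjugating by the local unitary $L$) of the form $\tfrac12 I_R\otimes(\cdot)$ versus $\dyad{\Omega}_{RB}\otimes(\cdot)$, and compute the fidelity of a maximally-mixed-on-$R$ product state against the pure maximally entangled state to be exactly $\tfrac12$; monotonicity of fidelity under partial trace then transfers the bound to $|\langle\psi_0|\psi_1\rangle|$. All the steps check out: the marginal of $\ket{\psi_0}$ on $RB\tilde B A_c$ is indeed unaffected by $K$, the fidelity against a pure state is $\sqrt{\bra{\Omega}\rho\ket{\Omega}}=\sqrt{1/4}$, and the direction of data processing is handled correctly. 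What each approach buys: the paper's computation is more elementary and self-contained (no appeal to Uhlmann or monotonicity of fidelity), whereas yours makes the monogamy content explicit and is structurally closer to how the measuring-protocol analogue (Lemma \ref{lem:states-must-have-low-fidelity-BB84}) is proved via reduced states on Bob's side; it would also lend itself naturally to a directly robust version, although the paper obtains robustness separately by a triangle-inequality argument in Proposition \ref{prop:states-must-have-low-fidelity}. Both arguments yield the same tight constant $\tfrac12$.
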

\begin{proof}
Note that $K$ and $L$ commute. We find that 
\begin{align*}
    |\langle \psi_0 | \psi_1 \rangle| &=  |\bra{\Omega}_{RA} \otimes \bra{\varphi_0} L^\ast K \ket{\Omega}_{RB} \otimes \ket{\varphi_1}| \\
    &= |\bra{\Omega}_{RA} \otimes \bra{\varphi_0^\prime}\ket{\Omega}_{RB} \otimes \ket{\varphi_1^\prime}|,
\end{align*}
where $\ket{\varphi_0^\prime}_{\tilde A A_c B \tilde B B_c} = L_{B \tilde B A_c}\ket{\varphi_0}_{\tilde A A_c B \tilde B B_c}$ and $\ket{\varphi_1^\prime}_{A \tilde A A_c \tilde B B_c} = K_{A \tilde A B_c} \ket{\varphi_1}_{A \tilde A A_c \tilde B B_c}$.
Now, 
\begin{equation*}
    \bra{\Omega}_{RA}\ket{\Omega}_{RB} = \frac{1}{2} (\ket{0}_B \bra{0}_A + \ket{1}_B \bra{1}_A)
\end{equation*}
Note that $\ket{0}_B \bra{0}_A + \ket{1}_B \bra{1}_A$ is a unitary operator from $A$ to $B$, which transfers a qubit from $A$ to $B$. Writing
\begin{equation*}
    \ket{\varphi_1^{\prime \prime}}_{\tilde A A_c B\tilde B B_c}= (\ket{0}_B \bra{0}_A + \ket{1}_B \bra{1}_A)  \ket{\varphi_1^\prime}_{A \tilde A A_c \tilde B B_c},
\end{equation*}
we infer
\begin{align*}
     |\langle \psi_0 | \psi_1 \rangle| &= \frac{1}{2} |\langle \varphi_0^\prime | \varphi_1^{\prime \prime} \rangle|\\
     &\leq \frac{1}{2},
\end{align*}
since both $\ket{\varphi_0^\prime}$ and $\ket{\varphi_1^{\prime \prime}}$ are states on $\tilde A A_c B \tilde B B_c$.
\end{proof}

We now show that the measurement of the verifiers at the end of \PVroutemod can be replaced by a measurement implemented via local measurements and classical post-processing which performs almost as good. This implies in particular that the verifiers need not store their qubit until the end of the protocol but can measure it right away. This fact is well-known in the context of the BB84 protocol \cite{Bennett1984}. We give a proof here for convenience.

We compare the two measurement procedures for some state $\rho$ on $\mathbb C^2 \otimes \mathbb C^2$.
\begin{itemize}
    \item M1: Measure $\{\dyad{\Omega}, I_4-\dyad{\Omega}\}$
    \item M2: With probability $\frac{1}{2}$ each, either measure each qubit in the computational or Hadamard basis and check whether the measurement outcomes are equal. In other words, measure either $\{\dyad{++} + \dyad{--}, I_4-(\dyad{++} + \dyad{--})\}$ or $\{\dyad{00} + \dyad{11}, I_4 - (\dyad{00} + \dyad{11})\}$, where the choice is uniformly random.
\end{itemize}
The two measurements are equivalent in the following sense:

\begin{prop} \label{prop:measurements-equivalent}
Let $\rho$ be a quantum state on two qubits and let $\delta > 0$.
\begin{enumerate}
    \item If M1 accepts with probability at least $1-\delta$, then M2 accepts with probability at least $1-\delta$.
    \item If M2 accepts with probability at least $1-\delta$, then M1 accepts with probability at least $1-2\delta$.
\end{enumerate}
\end{prop}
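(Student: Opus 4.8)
The plan is to diagonalize both measurement procedures in the Bell basis and thereby reduce the statement to an elementary inequality among four nonnegative numbers summing to one. Write the Bell basis as $\ket{\Phi^+}=\ket{\Omega}$, $\ket{\Phi^-}=\frac{1}{\sqrt 2}(\ket{00}-\ket{11})$, $\ket{\Psi^+}=\frac{1}{\sqrt 2}(\ket{01}+\ket{10})$ and $\ket{\Psi^-}=\frac{1}{\sqrt 2}(\ket{01}-\ket{10})$, and let $p_1,p_2,p_3,p_4$ denote the probabilities $\tr[\rho\dyad{\Phi^+}]$, $\tr[\rho\dyad{\Phi^-}]$, $\tr[\rho\dyad{\Psi^+}]$, $\tr[\rho\dyad{\Psi^-}]$ respectively, so that $p_1+p_2+p_3+p_4=1$ and each $p_j\geq 0$. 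Since the acceptance operator of M1 is exactly $\dyad{\Omega}=\dyad{\Phi^+}$, M1 accepts with probability $p_1$.

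First I would expand the two acceptance projectors of M2 in this basis. Using $\ket{00}=\frac{1}{\sqrt2}(\ket{\Phi^+}+\ket{\Phi^-})$, $\ket{11}=\frac{1}{\sqrt2}(\ket{\Phi^+}-\ket{\Phi^-})$ and $\ket{\pm\pm}=\frac{1}{\sqrt2}(\ket{\Phi^+}\pm\ket{\Psi^+})$, a short calculation gives $\dyad{00}+\dyad{11}=\dyad{\Phi^+}+\dyad{\Phi^-}$ and $\dyad{++}+\dyad{--}=\dyad{\Phi^+}+\dyad{\Psi^+}$. As M2 chooses one of the two bases with probability $\tfrac12$ each, its effective acceptance operator is the average
\begin{equation*}
\tfrac12\bigl(\dyad{00}+\dyad{11}\bigr)+\tfrac12\bigl(\dyad{++}+\dyad{--}\bigr)=\dyad{\Phi^+}+\tfrac12\bigl(\dyad{\Phi^-}+\dyad{\Psi^+}\bigr),
\end{equation*}
so M2 accepts with probability $p_1+\tfrac12(p_2+p_3)$.

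With these two expressions both claims reduce to one line of algebra. For $(1)$, since $p_2,p_3\geq 0$ we have $p_1+\tfrac12(p_2+p_3)\geq p_1$, so an M1-acceptance probability of at least $1-\delta$ forces the same lower bound for M2. For $(2)$, normalization together with $p_4\geq 0$ gives $p_2+p_3=1-p_1-p_4\leq 1-p_1$, whence the M2-acceptance probability obeys $p_1+\tfrac12(p_2+p_3)\leq\tfrac12(1+p_1)$; if this is at least $1-\delta$, then $p_1\geq 1-2\delta$, which is exactly the M1 bound.

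There is no genuine conceptual obstacle here, only the Bell-basis bookkeeping needed to identify the two M2 projectors. It is worth noting that the factor-of-two loss in $(2)$ is tight and has a clean explanation: the unwanted components $\dyad{\Phi^-}$ and $\dyad{\Psi^+}$ are rejected outright by M1 but only with weight $\tfrac12$ by M2, so these components contribute twice as much to M1's rejection probability as to M2's, and this is exactly saturated when $p_4=0$.
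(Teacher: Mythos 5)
Your proof is correct and follows essentially the same route as the paper's: both decompose the two M2 acceptance projectors over the Bell basis (the paper's $\ket{\phi_1},\ket{\phi_2}$ are your $\ket{\Psi^+},\ket{\Phi^-}$), identify the M2 acceptance probability as $p_1+\tfrac12(p_2+p_3)$, and derive both bounds from nonnegativity and normalization of the Bell-basis weights. The only cosmetic difference is that the paper works with three orthogonal states and the inequality $p_\Omega+p_1+p_2\leq 1$ rather than naming the fourth Bell state explicitly.
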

\begin{proof} Let 
\begin{equation*}
    \ket{\phi_1} := \frac{1}{\sqrt{2}}(\ket{01} + \ket{10}) \qquad \mathrm{and} \qquad \ket{\phi_2} := \frac{1}{\sqrt{2}}(\ket{00} - \ket{11}).
\end{equation*}  It can be verified that 
\begin{equation*}
    \ket{\Omega} = \frac{1}{\sqrt{2}}(\ket{++} + \ket{--})  \qquad \mathrm{and} \qquad \ket{\phi_1} = \frac{1}{\sqrt{2}}(\ket{++} - \ket{--}).
\end{equation*}
Thus, $\dyad{++} + \dyad{--} = \dyad{\Omega} + \dyad{\phi_1}$. Likewise, $\dyad{00} + \dyad{11} = \dyad{\Omega} + \dyad{\phi_2}$ and $\ket{\Omega}$, $\ket{\phi_1}$, $\ket{\phi_2}$ are orthogonal.

Let $p_\Omega=\bra{\Omega}\rho\Ket{\Omega}$ and $p_i=\bra{\phi_i}\rho\Ket{\phi_i}$ for $i \in \{1, 2\}$. Then, the probability that M1 accepts is $p_\Omega$, whereas the probability that M2 accepts is $p_\Omega + \frac{1}{2}p_1 + \frac{1}{2} p_2$. Thus, the first assertion follows straightforwardly. For the second assertion, note that
\begin{equation*}
    p_\Omega + p_1 + p_2 \leq 1,
\end{equation*}
since the corresponding states are orthogonal. Thus, rearranging this inequality and combining it with the assumption that M2 accepts with probability at least $1-\delta$,
\begin{equation*}
    \frac{1}{2} + \frac{1}{2}p_{\Omega} \geq 1-\delta.
\end{equation*}
The second assertion thus follows by rearranging the inequality.
\end{proof}

We conclude with a small lemma concerning discrete-time stochastic processes.
\begin{lem}\label{lem:bound-by-iid}
Let $t \in \mathbb R$,  $r \in \mathbb N$, $Y := \sum_{i=1}^r Y_i$, where $Y_1, \ldots, Y_r$ is a discrete-time stochastic process, and $Y_i \in \{0,1\}$. Let $p \in [0,1]$ and $Y^\prime = \sum_{i=1}^r Y^\prime_i$, where $Y_i^\prime := 1$ with probability $p$ and $Y^\prime_i = 0$ with probability $1-p$. It holds that 
\begin{enumerate}
    \item If $p(Y_i = 1| Y_{i-1}= y_{i-1}, \ldots, Y_{1}= y_{1}) \leq p$ for all $y_j \in \{0,1\}$, $j \in \{1, \ldots, i-1\}$ and all $i \in \{1, \ldots, r\}$, then $ \mathbb P(Y^\prime \geq t) \geq  \mathbb P(Y \geq t)$
     \item If $p(Y_i = 1| Y_{i-1}= y_{i-1}, \ldots, Y_{1}= y_{1}) \geq p$ for all $y_j \in \{0,1\}$, $j \in \{1, \ldots, i-1\}$ and all $i \in \{1, \ldots, r\}$, then $ \mathbb P(Y^\prime \geq t) \leq  \mathbb P(Y \geq t)$
\end{enumerate}
\end{lem}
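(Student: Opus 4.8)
The plan is to prove both parts of Lemma~\ref{lem:bound-by-iid} by a monotone coupling argument that establishes almost-sure stochastic dominance between $Y$ and $Y'$. Concretely, I would realise both processes on a common probability space carrying i.i.d.\ uniform random variables $U_1, \dots, U_r$ on $[0,1]$, and use each $U_i$ to drive both $Y_i$ and $Y'_i$ simultaneously.

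For part (1), write $p_i := p(Y_i = 1 \mid Y_{i-1} = y_{i-1}, \dots, Y_1 = y_1)$ for the conditional probability, which by hypothesis satisfies $p_i \leq p$ for every history. I would define recursively $Y_i := 1$ if $U_i \leq p_i$ and $Y_i := 0$ otherwise, where $p_i$ is evaluated at the already-constructed values $Y_1, \dots, Y_{i-1}$, and likewise $Y'_i := 1$ if $U_i \leq p$ and $Y'_i := 0$ otherwise. Two checks make the argument work. First, conditioning on $U_1, \dots, U_{i-1}$ (equivalently on $Y_1, \dots, Y_{i-1}$), the variable $U_i$ is uniform and independent of the past, so $\mathbb{P}(Y_i = 1 \mid Y_1, \dots, Y_{i-1}) = p_i$; hence the constructed $(Y_1, \dots, Y_r)$ carries exactly the prescribed joint law. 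Second, each $Y'_i$ is a function of $U_i$ alone, so $Y'_1, \dots, Y'_r$ are i.i.d.\ Bernoulli$(p)$, matching the law of $Y'$.

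The monotonicity is then immediate: since $p_i \leq p$, the event $\{U_i \leq p_i\}$ is contained in $\{U_i \leq p\}$, so $Y_i \leq Y'_i$ pointwise for every $i$, and summing gives $Y \leq Y'$ almost surely. Consequently $\{Y \geq t\} \subseteq \{Y' \geq t\}$, which yields $\mathbb{P}(Y \geq t) \leq \mathbb{P}(Y' \geq t)$, i.e.\ the claimed inequality. Part (2) is entirely symmetric: under the hypothesis $p_i \geq p$ the same coupling gives $\{U_i \leq p\} \subseteq \{U_i \leq p_i\}$, hence $Y'_i \leq Y_i$ for all $i$, so $Y' \leq Y$ almost surely and therefore $\mathbb{P}(Y' \geq t) \leq \mathbb{P}(Y \geq t)$.

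The only real subtlety, and the point I would be careful about in the full write-up, is verifying that the single shared uniform $U_i$ simultaneously reproduces the history-dependent conditional law of the adapted process $Y$ and the i.i.d.\ law of $Y'$. The key observation reconciling the two is that $Y'_i$ depends on $U_i$ only, never on the history, so the dependence of $Y_i$ on $Y_1, \dots, Y_{i-1}$ through $p_i$ does not disturb the mutual independence of the $Y'_i$; everything else reduces to comparing $p_i$ and $p$ as thresholds against a common uniform. An alternative to the coupling would be a backward induction on $r$ peeling off the last step, but the coupling is cleaner and makes the almost-sure ordering transparent.
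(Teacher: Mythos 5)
Your proof is correct, and it takes a genuinely different route from the paper. The paper proves the lemma by an iterated one-coordinate replacement: it fixes an index $j$, assumes the variables with indices above $j$ have already been replaced by independent Bernoulli($p$) variables, decomposes the event $\{|\bar x|\geq t\}$ according to the value of $x_j$, and verifies by an explicit computation over binary strings that swapping $X_j$ for an independent Bernoulli($p$) variable can only increase $\mathbb P(X\geq t)$; iterating from $j=r$ down to $j=1$ yields the claim. Your argument instead builds a monotone coupling on a common space of i.i.d.\ uniforms, using each $U_i$ as a threshold variable for both the adapted process (threshold $p_i$, evaluated along the constructed history) and the i.i.d.\ process (threshold $p$). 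The two checks you flag --- that the sequential construction reproduces the prescribed conditional law of $Y$ and that the $Y_i'$ remain i.i.d.\ because each depends on $U_i$ alone --- are exactly the right ones, and they go through; the only residual technicality is that $p_i$ must be assigned some value in $[0,p]$ (resp.\ $[p,1]$) on zero-probability histories, which is harmless for a finite process. What your approach buys is a stronger and more transparent conclusion: almost-sure dominance $Y\leq Y'$ under the coupling, which gives $\mathbb P(Y\geq t)\leq\mathbb P(Y'\geq t)$ for all $t$ simultaneously, with no string-counting. What the paper's computation buys is complete self-containedness: it never leaves the original probability space and requires no construction of an auxiliary coupled process, only elementary manipulations of conditional probabilities.
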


\begin{proof}
We will only show the first assertion, since the second follows in a similar manner. Let $X := \sum_{i=1}^r X_i$, where $X_1, \ldots, X_r$ is a discrete-time stochastic process and $X_i \in \{0,1\}$. Set $x_i \in \{0,1\}$, $i \in \{0, \ldots, r\}$ and $\bar x := (x_1, \ldots, x_r)$. Moreover, let $|\bar x| := x_1 + \ldots + x_r$. We write 
\begin{equation*}
    \mathbb P(x_r, \ldots x_1) := \mathbb P(X_r = x_r, \ldots, X_1 = x_1) 
\end{equation*}
and use a similar notation for conditional expectations. Fix $j \in \mathbb N$ and let us assume that $p(x_i|x_{i-1}, \ldots, x_1)=p(x_i)$ for any $i \geq j+1$ and that $p(1|x_{j-1}, \ldots, x_1)\leq p$. We claim that $\mathbb P(X \geq t) \leq \mathbb P(X^\prime \geq t) $, where $X^\prime := X^\prime_j + \sum_{i\in \{1, \ldots, r\}\setminus\{j\}} X_i$ and $X_j^\prime$ is a random variable independent of $X_1, \ldots, X_r$ such that $\mathbb P(X_j^\prime = 1) = p$, $\mathbb P(X_j^\prime = 0) = 1- p$. We have thus replaced $X_j$ in $X$ by $X^\prime_j$ to obtain $X^\prime$. The first assertion then follows from an iterated application of the claim. We now prove the claim. Let $\check X := \sum_{i=1, i \neq j}^r X_i$. Then,
\begin{align}
    \mathbb P(X^\prime \geq t) &= p\mathbb P(\check X \geq t) + (1-p)\mathbb P(\check X \geq t) + p \mathbb P(\check X = t-1) \nonumber\\
    &= \mathbb P(\check X \geq t) + p \mathbb P(\check X = t-1), \label{eq:Xprime}
\end{align}
since the order of the random variables $X_i$ for $i > j$ does not matter and we can put $X^\prime_j$ last. Likewise,
\begin{align}
   \mathbb P(X \geq t) &= \sum_{\bar x: |\bar x| \geq t} \mathbb P(x_1, \ldots, x_r) \nonumber \\
   &= \sum_{\bar x:~|\bar x|-x_j \geq t} \mathbb P(x_1, \ldots, x_r) + \sum_{\substack{\bar x:~|\bar x|-x_j = t-1,\\x_j = 1}} \mathbb P(x_1, \ldots, x_r) \label{eq:x}
\end{align}
For the first term, we compute
\begin{align*}
    &\sum_{\bar x:~|\bar x|-x_j \geq t} \mathbb P(x_1, \ldots, x_r)\\&=\sum_{\bar x:~|\bar x|-x_j \geq t} \prod_{k = j+1}^r \mathbb P(x_k)\prod_{i = 1}^j \mathbb P(x_i|x_{i-1}, \ldots, x_1) \\
    &=\sum_{\bar x\setminus \{x_j\}:~|\bar x|-x_j \geq t} \prod_{k = j+1}^r \mathbb P(x_k)[\mathbb P(1|x_{j-1}, \ldots, x_1)+ \mathbb P(0|x_{j-1}, \ldots, x_1)]\prod_{i = 1}^{j-1} \mathbb P(x_i|x_{i-1}, \ldots, x_1) \\
    &=\mathbb P(\check X \geq t),
\end{align*}
where we have used that $\mathbb P(1|x_{j-1}, \ldots, x_1)+ \mathbb P(0|x_{j-1}, \ldots, x_1) = 1$. For the second term,
\begin{align*}
    \sum_{\substack{\bar x:~|\bar x|-x_j = t-1,\\x_j = 1}} \mathbb P(x_1, \ldots, x_r) &= \sum_{\substack{\bar x:~|\bar x|-x_j = t-1,\\x_j = 1}}\prod_{k = j+1}^r \mathbb P(x_k)\mathbb P(1|x_{j-1}, \ldots, x_1)\prod_{i = 1}^{j-1} \mathbb P(x_i|x_{i-1}, \ldots, x_1) \\
    \leq p \mathbb P(\check X = t-1)
\end{align*}
Thus, inserting the expressions into \eqref{eq:x} and using \eqref{eq:Xprime},
\begin{equation*}
      \mathbb P(X \geq t) \leq \mathbb P(\check X \geq t) + p \mathbb P(\check X = t-1) = \mathbb P(X^\prime \geq t).
\end{equation*}
\end{proof}

\end{document}